\newif\ifconfver
\confverfalse     
\confvertrue        

\newif\ifplainver  
\plainverfalse

\ifplainver
    \confverfalse   
\fi

\ifconfver
     \documentclass[10pt,twocolumn,twoside]{IEEEtran}
\else
    \ifplainver
        \documentclass[11pt]{article}
        \usepackage{fullpage}
    \else
        \documentclass[11pt,draftcls,onecolumn]{IEEEtran}
    \fi
\fi
\usepackage{calc,amsfonts,amssymb,amsmath,bm,url,color,graphicx,cite,shortcuts_OPT,enumitem,tcolorbox,booktabs}
\usepackage{psfrag,float,hyperref,bbm}
\hypersetup{
  colorlinks   = true, 
  urlcolor     = blue, 
  linkcolor    = blue, 
  citecolor    = blue,   
  hypertexnames = false
}
\usepackage{algorithm}
\usepackage{algorithmic}
\usepackage{amsthm}
\newcommand{\update}[1]{#1}

\newtheorem{Lemma}{Lemma}
\newtheorem{Remark}{Remark}
\newtheorem{Prop}{Proposition}
\newtheorem{Theorem}{Theorem}
\newtheorem{Def}{Definition}
\newtheorem{Corollary}{Corollary}

\newtheorem*{Theorem*}{Theorem}

\newtheorem{Exa}{Example}
\newtheorem{assumption}{H\!\!}

\usepackage{pgfplots}
\usetikzlibrary{arrows,shapes,calc,tikzmark,backgrounds,matrix,decorations.markings}
\usepgfplotslibrary{groupplots}
\usepgfplotslibrary{fillbetween}

\pgfplotsset{compat=1.3}
\usepackage{graphicx} 
\usepackage{diagbox}  
\usepackage{makecell}
 \usepackage{multirow}
\definecolor{lavander}{cmyk}{0,0.48,0,0}
\definecolor{violet}{cmyk}{0.79,0.88,0,0}
\definecolor{burntorange}{cmyk}{0,0.52,1,0}

\definecolor{asuorange}{rgb}{1,0.699,0.0625}
\definecolor{asured}{rgb}{0.598,0,0.199}
\definecolor{asuborder}{rgb}{0.953,0.484,0}
\definecolor{asugrey}{rgb}{0.309,0.332,0.340}
\definecolor{asublue}{rgb}{0,0.555,0.836}
\definecolor{asugold}{rgb}{1,0.777,0.008}

    \makeatletter
    \def\multilimits@{\bgroup
  \Let@
  \restore@math@cr
  \default@tag
 \baselineskip\fontdimen10 \scriptfont\tw@
 \advance\baselineskip\fontdimen12 \scriptfont\tw@
 \lineskip\thr@@\fontdimen8 \scriptfont\thr@@
 \lineskiplimit\lineskip
 \vbox\bgroup\ialign\bgroup\hfil$\m@th\scriptstyle{##}$\hfil\crcr}
    \def\Sb{_\multilimits@}
    \def\endSb{\crcr\egroup\egroup\egroup}
\makeatother

\newcommand*{\grayout}[1]{}
 
\allowdisplaybreaks

\makeatletter
\DeclareRobustCommand*\cal{\@fontswitch\relax\mathcal}
\makeatother
\title{Learning Graph Topology with Functional Priors via Bilevel Optimization}
\author{Chenyue Zhang, Shangyuan Liu, {Hoi-To Wai}, Anthony Man-Cho So\thanks{The authors are with the Department of SEEM, The Chinese University of Hong Kong, Shatin, Hong Kong SAR of China. Emails: \texttt{\{cyuezhang,shangyuanliu\}@link.cuhk.edu.hk}, \texttt{\{htwai,manchoso\}@se.cuhk.edu.hk}. }}
\date{}

\begin{document}
\maketitle

\begin{abstract}
Learning graph topology of complex networks is challenging due to limited data availability and imprecise data models. Different from prior works that focus on structural priors with explicit control on macroscopic properties such as sparsity, this paper proposes a novel \emph{functional prior} approach for graph topology learning. We postulate that complex networks are inherently optimized to perform a certain task (e.g., social networks specialize at optimizing a welfare function, biological networks are resilient towards node/edge deletion), which can be incorporated as a regularizer to assist in graph learning. Mathematically, we formulate a bilevel optimization problem where the lower-level problem solves the associated task on a candidate graph topology and the upper-level problem trades off between data fitting and task performance.
We design a two-timescale gradient descent (TTGD) algorithm and show that under verifiable conditions, it finds a stationary point to the bilevel graph learning problem with a sublinear convergence rate.
We provide theoretical insights on the graph topology learned from the functional priors and 
\update{show that the resulting regularizers subsume a broad class of graph filter regularizers, including polynomial graph regularizers as special cases.
}
We show via extensive experiments on synthetic and real datasets that the proposed formulation gives rise to reliable estimates of graph topology, even with insufficient data.
\end{abstract}
\begin{IEEEkeywords}
    graph topology learning, functional priors, bilevel optimization
\end{IEEEkeywords}




\section{Introduction}
Graph-based structures are widely used across data science to capture relationships among features and labels. Learning with graph structures plays a key role in many real-world applications such as life science, graph neural networks, and recommendation systems. However, natural graph topologies are not always available.
Consequently, the inference of unknown graph topology from nodal observations has become an important focus across fields of machine learning, signal processing, sociology, and biology \cite{newman2018networks}.

Such a problem, also known as graph topology learning, has been a longstanding challenge in data science and signal processing \cite{mateos2019connecting, dong2019learning, friedman2008sparse}. Earlier works have focused on identifying graph signal models, i.e., \emph{data generation models}, that relate the nodal observations to the graph topology and developing algorithms to efficiently infer the graph topology. For example, the works \cite{dong2016learning, kalofolias2016learn} proposed to learn the graph topology from smooth signals, the works \cite{friedman2008sparse, loh2012structure,rue2005gaussian, meinshausen2006high} utilized a Gaussian Markov Random Field (GMRF) to learn a sparse graph topology, the works \cite{shen2017kernel,wai2019joint} developed physics-inspired models for graph learning, and the work \cite{segarra2017network} utilized models that are inspired by graph signal processing \cite{sandryhaila2013discrete,shuman2013emerging}. 
As the networks of interest become more complex, recent developments in graph topology learning have switched to the rising issues of limited data availability and imprecise data models. On one hand, the number of nodal observations is often smaller than the total number of nodes. This type of data scarcity can arise from the high measurement costs \cite{chen2025data}, privacy concerns \cite{li2023private}, or difficulty in collecting complete observation data in large, complex systems \cite{kossinets2006effects}. As a result, the graph topology learning problem becomes vulnerable to random noise and often struggles to produce accurate inference \cite{yang2020network}. On the other hand, the interplay between nodal observations and underlying graph topology is often complicated, which cannot be captured by the simplified data generation models deployed in existing works.  

As a remedy, a common practice is to incorporate prior information in the learning process. Existing literature focuses on applying \emph{structural priors} developed from explicit features of real network structures. They are usually driven by empirical observations on common graph topologies. For this class of priors, a canonical conjecture is edge sparsity. In fact, a common observation across real networks is that \update{the graph topology admits a small number of edges and thus a sparse graph representation} \cite{jackson2008social, smith2011network}. Based on this observation, a series of works proposed to incorporate sparsity regularizers in different graph learning paradigms and achieved empirical successes \cite{friedman2008sparse, dong2016learning, kalofolias2016learn}. 
Recently, structural priors focusing on the clustering properties of graphs have been studied. Examples include $k$-component graphs \cite{chung1997spectral}, bipartite graphs \cite{zha2001bipartite}, and connected graphs \cite{lovasz2019graphs}. As demonstrated in \cite{egilmez2017graph, kumar2020unified, nie2016constrained}, carefully crafted regularizers based on the graph spectra can induce desired clustering characteristics in the learned graph. 
Note that, however, these priors are imposed directly on the graph topology itself in a macroscopic manner and do not account for the more intrinsic relationships between nodes. 

\begin{figure}[t]
    \centering
\includegraphics[width=0.96\linewidth]{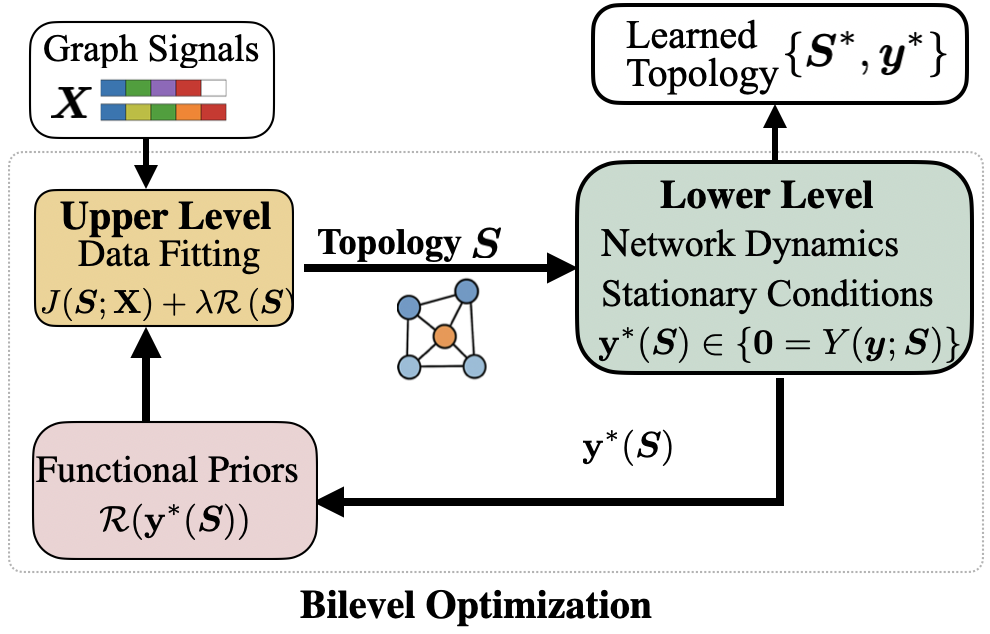}\vspace{-.2cm}
    \caption{Overview of the bilevel optimization framework for graph topology learning with functional priors. The upper-level problem minimizes a data fitting term $J(\GSO ; \mathbf{X})$ augmented with a task-driven regularizer $\mathcal{R}\left(y^*(\GSO)\right)$, where the lower-level problem computes the stationary state $y^*(\GSO)$ of a network dynamics model.}\vspace{-.4cm}
    \label{fig:flowchart}
\end{figure}

This paper initiates the study of a \emph{functional prior} approach to graph topology learning. We aim to rigorously capture the complex dynamics and relationships between nodes in the graph topology learning process. The key idea is a \emph{task-induced formation model} that treats the graph topology as a latent variable optimized to perform a specific task. Here, the task can be complex and nonlinear such as maximizing social welfare in social networks or enhancing resilience in biological networks. 
We notice that recent works in network science have demonstrated that network topology may play a crucial role in shaping the performance of the networked systems. 
For social networks, prior works have shown the importance of network topology in network game-induced tasks such as optimal targeting \cite{demange2017optimal} and optimal pricing \cite{Candogan}, suggesting that certain network structures may inherently favor specific applications.
For biological networks, the work \cite{gao2016universal} studied the relationship between network topology and resilience to network failure. Particularly, it is found that natural networks such as ecological and gene regulatory networks tend to have a heterogeneous {degree distribution}, which provides evidence for a resilient network topology.
In these examples, the graph topology of natural networked systems can be viewed as the outcome of a `self-optimization' process, which captures the complexities of underlying tasks they are solving.  

These findings have motivated the current paper to look beyond the direct {\it structural priors} approach for graph topology learning, and to develop a novel class of {\it functional priors} for efficient graph topology learning. 
In a nutshell, the functional prior framework encapsulates the purported complex dynamics of the network into a merit function that regularizes the graph learning objective, \update{leading to a \emph{biased} graph topology estimator}. 
We develop the latter merit function as a shaping function on the equilibria or stationary states of a suitable network dynamics model, which are then treated as intermediate variables that implicitly depend on the latent graph topology.
As we will demonstrate later, the advantage of doing so is that the merit function can be defined to impose a wide class of desired properties on the interpretable equilibria or stationary states, e.g., maximizing social welfare in social networks, or maximizing resilience to network failure in gene regulatory networks.
Mathematically speaking, the functional prior framework gives rise to a novel bilevel optimization problem for graph topology learning. Here, the lower-level problem pertains to the network dynamics model and the upper-level problem trades off between data fitting and desired task performance, as shown in Fig.~\ref{fig:flowchart}.

Computationally, recent literature on bilevel optimization has made significant progress in developing efficient algorithms for tackling such problems. For example, the work \cite{hong2023two} proposed a two-timescale stochastic approximation (TTSA) framework that utilizes a pair of step sizes for tackling the coupled KKT system, the work \cite{shen2023penalty} proposed a penalty method, and the work \cite{shaban2019truncated} proposed a truncated gradient unrolling based method.
Among others, we develop a two-timescale gradient descent (TTGD) algorithm inspired by TTSA \cite{hong2023two} to tackle the bilevel optimization problem.
The contributions of this paper are as follows:
\begin{itemize}[leftmargin=*]
\item We propose and formulate a framework for graph topology learning with functional priors (GLFP). The key feature of this framework is to use a task-inspired formation model that treats the unknown graph topology as the result of a self-optimization process and design the functional prior regularizers. The graph learning problem is then formulated as a bilevel optimization problem, where the upper-level problem is a regularized graph learning objective and the lower-level problem captures the result of the self-optimization process.

\item As applications of the GLFP framework, we study two specific cases: (i) a social welfare maximization prior for learning social network topology, and (ii) a resilience maximization prior for learning gene regulatory network topology. In both cases, we provide empirical evidence of the functional priors in real networks and design the induced graph topology learning problems with different data observation models that specify the GLFP framework.

\item To understand the optimal solutions that GLFP leads to, we illustrate how they can be approximated by ``high-order" polynomial regularizers. By analyzing the optimality conditions of the approximations, we provide theoretical insights into the graph topology learned from GLFP. Focusing on the two special cases mentioned above, we show that the drawn insights coincide with the topology features observed in real networks. Inspired by the approximations, we further suggest that our proposed functional prior regularizers may provide a more general form than the classical structural priors.


\item Finally, we design an efficient TTGD algorithm to tackle the GLFP problem. The TTGD algorithm is inspired by the TTSA framework \cite{hong2023two} and extends TTSA to \update{tackle variational inequalities in the lower-level problems}. To measure the convergence of TTGD, we adopt the squared norm of certain residual map as the stationarity measure. This {metric} is designed for constrained smooth nonconvex problems \cite{zhou2017unified}. Under some canonical assumptions, we show that TTGD finds a stationary point to the GLFP problem with a sublinear rate.
\end{itemize}
We remark that the idea of optimizing the graph topology for a task-specific application has been explored in several prior studies. For example, \update{the work \cite{kuhne2025optimizing} modified social network topology for polarization reduction, the work \cite{xu2024performance} modified network topology to enhance performance in action-coordination, and the work \cite{hu2022learning} focused on learning network structures tailored for node classification}. Their primary goal is to improve downstream performance in control or graph machine learning applications. In this paper, we focus on improving the accuracy of learning real networks.
Compared to our conference version \cite{zhang2025network}, this paper includes a general GLFP framework, strengthened theoretical results, and extended numerical experiments.

\vspace{.2cm}
\noindent
{\bf Notation.}
The notation we use in this paper is standard. We use $[m]$ to denote the set $\{1,2,\ldots,m\}$ for any positive integer $m$. Let the Euclidean space of all real vectors/matrices be equipped with the inner product $\langle \bm{X}, \bm{Y}\rangle := \tr(\bm{X}^{\top} \bm{Y})$ for any matrices $\bm{X}, \bm{Y}$ and denote the induced Frobenius norm by $\|\cdot\|_F$ (or $\Vert\cdot\Vert_2$ when the argument is a vector). For a vector $\bm{x}\in\RR^m$, we use $x_i$ to denote its $i$-th element and $\bm{x}_{-i}\in\RR^{m-1}$ to denote the vector without $x_i$. We use ${\sf Diag}(\bm{x})$ to denote the diagonal matrix whose diagonal elements are given by $\bm{x}$. For any matrix $\bm{X}\in \mathbb{R}^{m\times n}$, let $\|\bm{X}\|_2$ be the operator norm,  $\|\bm{X}\|_1:= \sum_{i,j}|X_{ij}|$, $\|\bm{X}\|_{\infty}:= \max_i\sum_{j}|X_{ij}|$, and $\diag(\bm{X})$ be the vector formed by the diagonal entries of a square matrix $\bm{X}$. 
Given a point $\bm{w}$ and a closed convex set $\mathcal{C}$, we use ${\sf Proj}_{\mathcal{C}}(\bm{w}) :=  \argmin_{\bm{v}\in\mathcal{C}}\|\bm{v}-\bm{w}\|_2$ to denote the projection of $\bm{w}$ onto $\mathcal{C}$. We use $\bm{1}$ (resp.\ $\bm{0}$) to denote an all-one vector (resp. all-zero vector) whose dimension will be clear from the context.  For a set $\mathcal{S}$, we let $\iota_\mathcal{S}(\cdot)
$ be its indicator function, i.e., $\iota_\mathcal{S}(x) = 0$ if $x \in {\cal S}$,  $\iota_\mathcal{S}(x) = \infty$ if $x \notin {\cal S}$.

\section{Preliminaries}\label{sec:prelim}
This section reviews the basic concepts and models for graph topology learning.
To fix notation, we first consider a networked system characterized by a directed graph ${\cal G} = (V,E)$ with node set $V = [N]$ and edge set $E \subseteq V \times V$, such that the ordered tuple $(i,j) \in E$ indicates an edge from $i$ to $j$. To describe ${\cal G}$, we denote its \emph{graph shifting operator} (GSO) by $\GSO \in \RR^{N \times N}$, such that $S_{ij} = 0$ if $(j,i) \notin E$. For instance, $\GSO$ can be the (weighted) adjacency matrix or the (weighted) Laplacian matrix, and it is not necessarily symmetric. Notice that in graph signal processing, a related concept is that of \emph{linear graph filter} $H(\cdot):\RR^{N\times N}\rightarrow\RR^{N\times N}$ with order $p$ \cite{sandryhaila2013discrete}, which is defined as a $p$-th order matrix polynomial of a GSO $\GSO$. Mathematically, it takes the form $H(\GSO) = \sum_{i= 0}^ph_i\GSO^i$. Here, we denote $\GSO^0 = \bm{I}$ by convention and $\{h_i\}_{i=0}^p$ are the graph filter coefficients.

\vspace{.1cm}
\noindent {\bf Graph Topology Learning.} The problem pertains to inferring the unknown graph ${\cal G}$, or equivalently, its GSO matrix $\GSO$, from a set of \emph{graph signals} defined on the node set $V$. The graph signals are denoted by ${\bm x}^{(1)}, \ldots, {\bm x}^{(M)}$ such that for each $m \in [M]$, the graph signal ${\bm x}^{(m)} \in \RR^N$ describes the nodes' states. Throughout, we set ${\bm X} = ( {\bm x}^{(1)}, \ldots, {\bm x}^{(M)} )$ to simplify notation. 
As in statistical learning, the general formulation for graph topology learning can be expressed as
\begin{equation} \label{eq:graph_learn_prob}
    \min_{ \GSO \in {\cal S} }~J( \GSO; {\bm X} ) + \lambda {\cal R}( \GSO ),
\end{equation}
where $J( \GSO; {\bm X} )$ is the \emph{data fitting} term, ${\cal S} \subseteq \RR^{ N \times N }$ refers to the feasible set of GSOs, ${\cal R}( \GSO )$ describes the regularizer pertaining to the prior knowledge on the unknown graph, and $\lambda > 0$ is a regularization parameter.

\vspace{.1cm}
\noindent {\bf Generation Model.}
{The data fitting term $J(\GSO; {\bm X})$ in \eqref{eq:graph_learn_prob} pertains to the \emph{generation model} for graph signal observations. In scenarios where ${\bm X}$ are low pass or smooth graph signals \cite{ramakrishna2020user}, a popular strategy is to fit an adjacency matrix $\GSO$ such that ${\bm X}$ is `smooth' with respect to (w.r.t.)~the latter, i.e.,
\begin{equation} \label{eq:smooth-GL}
J_{\sf smo}( \GSO; {\bm X} ) = {\rm Tr}( \GSO^\top {\bm D} ) ,~D_{ij} = {\textstyle \frac{1}{2M}} \| {\bm x}_i^{\rm row} - {\bm x}_j^{\rm row} \|^2_2 
\end{equation}
for all $i,j \in [N]$. Observe that $J_{\sf smo}( \GSO; {\bm X} )$ can be expressed as $\frac{1}{2M}\sum_{m=1}^M \sum_{i,j} S_{ij} | {x}_i^{(m)} - x_j^{(m)} |^2$ and measures the average Dirichlet energy of the graph signals w.r.t.~a proposed ${\GSO}$.\footnote{\update{The formulation~\eqref{eq:smooth-GL} is commonly used to infer a symmetric GSO matrix $\GSO$. We do not impose such a constraint in our discussions to avoid restricting the design of our functional prior; see Section \ref{subsec:ng}.}} The set of feasible adjacency matrices can be chosen as
\begin{equation} \label{eq:Sng}
{\cal S}_{\sf ng} = \{ \GSO \in \RR^{N \times N} : \GSO \geq {\bm 0},~{\rm diag}( \GSO ) = {\bm 0}, ~\GSO {\bf 1} = c {\bf 1} \},
\end{equation}
where $c>0$ is enforced to avoid the trivial solution.
Along the same line, alternative models include factor analysis \cite{dong2016learning}, stationary graph signals \cite{segarra2017network}, etc. They involve similar forms for the data fitting term $J( \GSO; {\bm X} )$ and the constraint set ${\cal S}$.}

In scenarios pertaining to network dynamics systems such as gene regulatory networks, the input ${\bm P} \in \mathbb{R}^{N \times M}$\update{, whose $m$-th column records the exogenous intervention or external input applied in experiment $m$,} is available alongside graph signals ${\bm X}$ of the steady-state responses. The following data-fitting loss is proposed in \cite{tjarnberg2013optimal, hillerton2022fast}:
\begin{equation} \label{eq:data_ridge}
J_{\sf per} (\GSO; \boldsymbol{X} ) = \|\GSO \boldsymbol{X} + \boldsymbol{P}\|_F^2.
\end{equation}
\update{This loss encourages each observed steady-state response to satisfy the linearized equilibrium relation $\GSO{\bm x}^{(m)} \approx -{\bm p}^{(m)}$.}
We adopt the set of feasible adjacency matrices
\begin{equation}\label{eq:Snd}
    \mathcal{S}_{\sf nd}  = \{ \GSO \in \RR^{N \times N} : \GSO \geq {\bm 0}, {\rm diag}( \GSO ) = {\bm 0},
     \bm{1}^\top\GSO\bm{1} = a
    \},
\end{equation}
where $a>0$ is enforced to avoid the trivial solution.

We remark that there are alternative models to \eqref{eq:smooth-GL} and \eqref{eq:data_ridge}, where they depend on the types of available data; see \cite{mateos2019connecting, dong2019learning}.


\vspace{.1cm}
\noindent {\bf Formation Model.}
The regularizer ${\cal R}( \GSO )$  in \eqref{eq:graph_learn_prob} pertains to the \emph{prior} information on the graph topology. It is especially important when the number of graph signal observations is insufficient, i.e., $N \gg M$. From a high level, the prior information is related to a \emph{formation model} which characterizes how the graph ${\cal G}$ is generated. In the existing literature, common choices entail \emph{explicit} and \emph{macroscopic} control over the graph topology. For example, the sparsity prior is relevant to the multi-variate exponential distribution. It is shown in \cite{egilmez2017graph} that this prior distribution can induce the regularizer 
\begin{equation} \label{eq:RS_l1}
    {\cal R}_{\ell_1}(\GSO) = \| \GSO \|_1 
\end{equation}
in the maximum a posteriori (MAP) estimate.\footnote{It is shown in \cite{kalofolias2016learn} that one can equivalently use the squared-Frobenius regularizer ${\cal R}(\GSO)=\|\GSO\|_F^2$.} Another example is the prior that the graph is formed with a modular structure which has $K$ densely connected components \cite{nie2016constrained, kumar2020unified}. This formation model is related to the regularizer
\begin{equation} \label{eq:RS_spect}
\textstyle {\cal R}_{\sf mod}( \GSO ) = \sum_{i=K+1}^N \sigma_i( \GSO ),
\end{equation}
where $\sigma_i(\cdot)$ represents the $i$-th largest singular value of $\GSO$.

\section{Problem Formulation}\label{sec:prob}
This paper proposes a graph learning framework that departs from the {structure-inspired} designs for the regularizer ${\cal R}( \GSO )$, e.g., \eqref{eq:RS_l1}, \eqref{eq:RS_spect}. Instead, we develop \emph{task-inspired} regularizers that \emph{implicitly} regulate the graph topology through the latter's influence on tasks performed by the network. 
At a high level, our framework is developed through a new \emph{formation model} of the graph topology. Inspired by the hypothesis that real-world networks are formed through a ``natural selection'' process, we are motivated by the overarching conjecture:
\begin{center}
\emph{The graph topologies of real-world networks are self-optimized according to a task-specific merit function.}
\end{center}
For example, a possible formation model for socio-economic networks is that the graph topology leads to the highest payoff at the Nash Equilibrium (NE) in a network game \cite{chen2022impact}; for gene regulatory networks, the graph topology leads to resilient states in the stationary states of a bio-physical system \cite{li2004yeast,ciliberti2007robustness}. 

Overall, consider a general merit function design that is defined via the stationarity condition for network states
\[
{\bm y}^\star( \GSO ) \in {\cal Y}^\star( \GSO ) := \left\{ {\bm y} \in {\cal Y} :  {\sf Y} ( {\bm y} ; \GSO ) = {\bm 0}\right\} ,
\]
where ${\cal Y} \subseteq \RR^N$ is the set of feasible states such that ${\sf Y}( \cdot; \GSO ) \in \RR^N$ is a stationary condition map. For example, in network games, it is related to the fixed point of best response dynamics. Subsequently, the merit function is given by the implicit function 
\beq \label{eq:merit}
\textstyle {\cal R}( \GSO ) := \min_{ {\bm y} \in {\cal Y}^\star(\GSO) } \hat{\cal R} ( {\bm y} ) + ( {\tilde{\beta}} / {2} ) \| \GSO \|_F^2,
\eeq 
such that $\hat{\cal R}: {\cal Y} \to \RR$ quantifies the merit of the network stationary states and the Frobenius norm term is added to stabilize the optimization solution with $\tilde{\beta}>0$. 
For example, $\hat{\cal R}( {\bm y} )$ is the negated sum of payoffs in a network game. Together, this leads to the \emph{bilevel program} 
\begin{tcolorbox}[boxsep=1pt,left=1pt,right=1pt,top=-4pt,bottom=1pt]
\begin{align}
\displaystyle \min_{ \GSO\in \mathcal{S}, {\bm y} \in {\cal Y} } & ~~\Phi( \GSO; {\bm y} ) := J( \GSO; {\bm X} ) + \lambda \hat{{\cal R}}( \bm{y} ) + ( {\lambda \tilde{\beta}} / {2} ) \| \GSO \|_F^2 \notag \\
\text{s.t.} & ~~\textstyle \bm{y} \in {\cal Y}^\star( \GSO ), \tag{GLFP} \label{eq:GL_bilevel:UNI}
\end{align} 
\end{tcolorbox}
\noindent which combines the data fitting term from the \emph{generation model} and the regularizer term from the \emph{formation model}. 

We shall describe regularizers given in the form of \eqref{eq:merit} as \emph{functional regularizers},  which depart from the structural regularizers considered in the prior works. 
We next discuss two special cases of the formation model in \eqref{eq:GL_bilevel:UNI}.

\subsection{Case Study 1: Network Games} \label{subsec:ng}
We consider a formation model driven by benchmarking the total welfare resulting from the Nash Equilibrium (NE) of a network game \cite{Candogan}. 
The game setting is related to learning socio-economic networks, where agents are assumed to be rational with actions shaped by a certain utility function.
{The lower-level state ${\bm y}$ collects the agents' equilibrium actions induced by a candidate graph topology, so the functional prior favors graphs with high aggregate welfare at equilibrium.}

In this setting, each node represents an agent and the edge weights represent the strength of trust between pairs of neighboring agents. Each agent is endowed with a payoff function $U_i(\cdot)$ that depends on the actions of his/her neighbors and himself/herself. For each agent $i$, s/he finds an action $y_i^\star$ 
\begin{align} \label{eq:game_payoff}
    y_i^\star & = \argmax_{y_i \in \mathcal{Y}_i} ~U_i( y_i , {\bm y}_{-i} ; \GSO ),
\end{align}
where ${\bm y}_{-i}$ denotes the vector ${\bm y} = ( y_1, \ldots, y_N )$ with the $i$-th agent's action removed and ${\cal Y}_i \subseteq \RR$ is the set of admissible actions. The agents are non-cooperative and aim to maximize their own payoffs \eqref{eq:game_payoff}. To this end, the NE describes the set of actions where no agent shall change his/her action. Mathematically, assuming that the NE exists and is unique, it is defined through the best response map
\begin{equation*}
\update{{\sf T}_i( {\bm y}; \GSO ) :=  \argmax_{ y_i \in \mathcal{Y}_i} U_i( y_i , {\bm y}_{-i} ; \GSO ),~\forall~i \in V.}
\end{equation*}
{Collecting the component maps gives the vector operator}
\begin{equation*}
\update{{\sf T}( {\bm y}; \GSO ) := \big( {\sf T}_1( {\bm y}; \GSO ), \ldots, {\sf T}_N( {\bm y}; \GSO ) \big)^\top.}
\end{equation*}
In particular, ${\bm y}^{\sf NE} (\GSO) = (y_1^{\sf NE} (\GSO), \ldots, y_N^{\sf NE} (\GSO) )$ is said to be an NE if   
\[
y_i^{\sf NE} (\GSO) = {\sf T}_i( {\bm y}^{\sf NE}(\GSO) ; \GSO ),~\forall~i \in V.
\]
To simplify notation, we denote ${\bm y}^{\sf NE} (\GSO)$ as the fixed point to the equation ${\bm y} = {\sf T}( {\bm y} ; \GSO )$. This yields a special case of \eqref{eq:GL_bilevel:UNI} with
\begin{equation*}
\update{{\sf Y}( {\bm y} ; \GSO ) := {\bm y} - {\sf T}( {\bm y} ; \GSO ).}
\end{equation*}

We consider two models for the payoff function. The first model is a \emph{generalized linear-quadratic} (LQ) game \cite{cai2024optimal, ballester2006s,Candogan,bramoulle2014strategic} whose payoff is given by 
\beq \textstyle \label{eq:lq_game}
U_i^{lq}( y_i , {\bm y}_{-i} ; \GSO ) = - \frac{y_i^2}{2} + y_i \left( \sum_{j=1}^N S_{ij} f(y_j) + b_i \right)
\eeq 
over $\mathcal{Y}_i = [0,\infty)$, where \(b_i\) is the marginal benefit and \(f(\cdot)\) is an interaction function that modulates the effects of neighbors' action on agent $i$. 
\update{The generalized LQ game is a canonical model for strategic interaction on a network \cite{jackson2015games}. In this model, each agent $i$ chooses a nonnegative action level $y_i$, which can be interpreted as effort or participation intensity. The first term $-y_i^2/2$ represents a private cost of exerting action, while the term $\sum_{j=1}^{N} S_{ij} f(y_j)$ captures how agent $i$'s incentive is shaped by the actions of its neighbors on the graph. The interaction function $f(\cdot)$ acts on individual neighbors' actions before aggregation, so the generalized LQ game captures individually mediated spillovers.}
See \cite{cai2024optimal} for conditions on the existence and uniqueness of NE for the above model.

The second model is a \emph{race and tournament} (RT) game \cite{allouch2015private, parise2019variational, belhaj2014competing} whose payoff is 
\beq \textstyle \label{eq:rt_game}
U_i^{rt}( y_i , {\bm y}_{-i} ; \GSO ) = - \frac{y_i^2}{2} + y_i \left( g\left(\sum_{j=1}^N S_{ij} y_j\right) + b_i \right)
\eeq 
over ${\cal Y}_i = [\underline{b}_i, a_i]$, where $0 < \underline{b}_i < a_i$ and \(g(\cdot)\) is a nonlinear function that captures the interactions between agent $i$ and his/her neighbors. 
\update{The RT game models competitive interactions with bounded actions $y_i \in [\underline{b}_i, a_i]$. Note that the nonlinear function $g(\cdot)$ acts on the aggregated neighbor activity $\sum_{j=1}^{N} S_{ij} y_j$, so the return to effort depends nonlinearly on the overall local competitive environment. Thus, the RT game has a different interaction structure from the generalized LQ game. The latter captures individually mediated spillovers, whereas the former is suited for settings where incentives depend on the aggregate level of peer activity.}
See \cite{parise2019variational} for conditions on the existence and uniqueness of NE for the above model.

In general, the graph topology $\GSO$ directly affects the NE ${\bm y}^{\sf NE}(\GSO)$ \cite{cai2024optimal,Candogan}. Herein, it is possible to use the \emph{total welfare} \cite{demange2017optimal} to measure the performance of a socio-economic system, defined as 
\[
{\sf Wel}( \GSO ) := {\bf 1}^\top {\bm y}^{\sf NE}( \GSO ) .
\]
It can be viewed as the overall economic gain. Notice that ${\sf Wel}( \GSO )$ depends on $\GSO$ via the NE of the network game.
We formally state the formation-model conjecture --- 
\begin{center}
    \emph{the graph topology of socio-economic and human-made systems shall maximize the total welfare ${\sf Wel}( \GSO )$}
\end{center} 
and thus the \emph{merit function} with ${\cal R}(\GSO) = - {\sf Wel}(\GSO) + \frac{\tilde{\beta}}{2}\|\bm{S}\|_{\sf F}^2$. 

\begin{table}[t]
\centering
\includegraphics[width=0.47\textwidth]{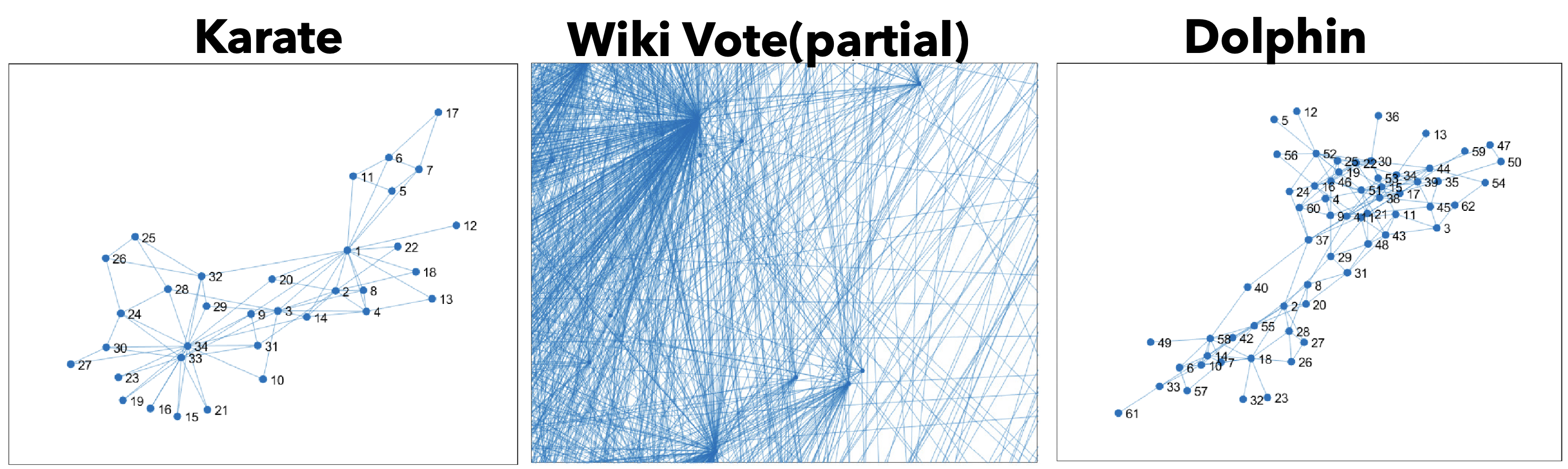}\vspace{.2cm}
\begin{tabular}{l|l|l|l} 
\toprule 
Rewiring & $10\%$ & $30\%$ & $50\%$ \\
\midrule
{\tt Karate} & $\mathbf{94.06\%}$ & $\mathbf{84.72\%}$ & $\mathbf{78.30\%}$ \\
{\tt WikiVote} & $\mathbf{96.28\%}$ & $\mathbf{90.32\%}$ & $\mathbf{86.11\%}$ \\
{\tt LesMiserables} & $\mathbf{93.79\%}$ & $\mathbf{84.43\%}$ & $\mathbf{78.56\%}$ \\
{\tt JazzMusician} & $\mathbf{97.43\%}$  & $\mathbf{93.47\%}$ & $\mathbf{90.77\%}$ \\
{\tt SchoolNet} & $\mathbf{97.28\%}$ & $\mathbf{93.47\%}$ & $\mathbf{91.20\%}$ \\
{\tt MalawiVillage} & $\mathbf{96.72\%}$ & $\mathbf{91.82\%}$ & $\mathbf{88.61\%}$ \\
\midrule
{\tt Dolphins} & $98.15\%$ & $95.13\%$ & $93.08\%$ \\
{\tt Ant}&$98.23\%$ & $95.90\%$ & $94.61\%$ \\
{\tt Weaver} & $99.43\%$ & $98.49\%$ & $97.80\%$ \\
\bottomrule
\end{tabular}
\vspace{.2cm}
\caption{Top: visualizing the topology of several tested networks. Bottom: impact of random rewiring on the welfare ratio after rewiring; cf.~\eqref{eq:welfare_ratio}. 
}\label{table:ReaNetWkRewir}
\end{table}

Directly verifying the above conjecture is impossible as the ground truth formation model is unknown. We consider an empirical approach through testing on real-world networks. It relies on the insight that if the conjecture holds, then the corresponding $\GSO$ would be a \emph{local maximum} to ${\sf Wel}(\GSO)$ such that any perturbation to $\GSO$ could lead to a significant drop in ${\sf Wel}(\GSO)$. We verify the above phenomenon by applying random rewiring to the graph topology of real-world networks and comparing the reduction in ${\sf Wel}(\GSO)$. We fix {${\bm b} = {\bm 1}$} and consider the payoff function in \eqref{eq:lq_game} with linear interaction function $f(x) = x$. We compute the perturbed welfare ratio
\beq \label{eq:welfare_ratio}
\mathbb{E} \left[ { ( {\sf Wel}( \GSO_{\sf pt} ) - {\bf 1}^\top {\bm b} ) } / { ( {\sf Wel}( \GSO_{\sf og} ) - {\bf 1}^\top {\bm b} ) } \right],
\eeq 
where $\GSO_{\sf og}$ and $\GSO_{\sf pt}$ are respectively the unweighted adjacency matrices of the original and randomly rewired graphs.
Table \ref{table:ReaNetWkRewir} shows the welfare ratio on the real-world networks. Observe that human-made networks (e.g., {\tt Karate}) suffer a significant drop in the welfare ratio after rewiring, while other networks (e.g., {\tt Dolphins}) are less sensitive to rewiring. This observation supports our conjecture.
We remark that related observations can also be found in \cite{demange2017optimal, sanhedrai2022reviving, meena2023emergent} on how real networks show traits of self-optimization.

Finally, we consider the smoothness based generation model \eqref{eq:smooth-GL} with the above formation model to yield
\begin{tcolorbox}[boxsep=1pt,left=1pt,right=1pt,top=-4pt,bottom=1pt]
\begin{align}
\displaystyle \min_{ \GSO, {\bm y}\in{\cal Y} } & ~\Phi( \GSO; {\bm y} ) := {\rm Tr}( \GSO^\top {\bm D} ) + \beta \| \GSO \|_{\rm F}^2 - \lambda {\bf 1}^\top {\bm y} \notag \\
    \text{s.t.} & ~{\bm y} - {\sf T}( {\bm y}; \GSO ) = {\bm 0},~ \GSO \in {\cal S}_{\sf ng}, \label{eq:glfp_ng} \tag{GL-NG}
\end{align} 
\end{tcolorbox}
\noindent where ${\cal Y} = \{\bm{y}\in \RR^N : y_i\in{\cal Y}_i, ~\forall~ i\in[N]\}$, ${\cal S}_{\sf ng}$ was defined in \eqref{eq:Sng}, and 
$\beta = \frac{\lambda\tilde{\beta}}{2} > 0, \lambda > 0$ are regularization parameters.\vspace{-.2cm}

\subsection{Case Study 2: Network Dynamics} \label{subsec:nd}
We consider a formation model driven by the resilience property of gene regulatory network (GRN) dynamics. The GRN dynamics model the time varying states, quantified by their expression levels, of genes in a cell. 
{This case study targets GRNs, where the candidate graph topology determines how genes influence one another through a nonlinear stationary dynamics model. The lower-level variable therefore represents a steady-state gene-expression profile, and the functional prior rewards graph topologies that sustain large nonzero stationary responses under perturbations.}

In a GRN, each node represents a gene and the edge weights represent the influence strengths between genes (possibly directed). The state of each gene is represented by its expression level, which can be affected by the expression levels of its neighboring genes.
To capture the changes in the gene's expression level, a widely used model is the Michaelis-Menten (MM) system of differential equations \cite{alon2019introduction,karlebach2008modelling}

\beq  \label{eq:gene_LL} \textstyle
\frac{ {\rm d} }{ {\rm d} \, t } {y}_i(t) = - y_i(t) + \sum_{j=1}^N S_{ij} \frac{y_j(t)^2}{y_j(t)^2 + 1},~\forall~i \in V,
\eeq
where $y_i(t)$ denotes the $i$-th gene's expression level at time $t$. 
\update{The linear decay term $-y_i(t)$ models the natural degradation of gene products, while the nonlinear regulatory input from neighboring genes takes the form of a saturating response function. In particular, the term $y_j(t)^2/(y_j(t)^2+1)$ possesses a Hill-function-type nonlinearity: When the expression level of gene $j$ is low (resp., high), its regulatory effect is weak (resp., saturates). Such saturation is a standard feature in biochemical regulation models and is one reason why MM systems are widely used in systems biology \cite{alon2019introduction,karlebach2008modelling}.}
Under \eqref{eq:gene_LL}, a stationary state of the cell $\bm{y}^\star$ defines a collection of gene states that do not change, i.e.,
\beq \label{eq:gene_LL_stat}
\textstyle {\cal Y}^\star( \GSO ) := \left\{ {\bm y} \in \RR^N : y_i = \sum_{j=1}^N S_{ij} \frac{y_j^2}{ y_j^2 + 1},~\forall~i \in V \right\}.
\eeq

Note that in general $|{\cal Y}^\star( \GSO )| > 1$. 
For example, it must hold that ${\bm 0} \in {\cal Y}^\star( \GSO )$, while there may be other non-zero solutions satisfying \eqref{eq:gene_LL_stat}. 


In the context of GRN dynamics, the resilience property refers to the ability of a GRN to maintain non-zero stationary states after the system is perturbed due to, e.g., gene knockout or suppression of regulatory interactions. We are interested in the notion of \emph{universal resilience} that is governed by the GRN dynamics equations and graph topology, independent of the specific types of perturbations. To this end, the work \cite{gao2016universal} proposed a metric based on the heterogeneity of graph topology and showed that the latter is correlated with the resilience of GRN against edge deletion perturbations.

Departing from \cite{gao2016universal}, we propose an alternative resilience metric by directly measuring the number of non-zero genes in the stationary state. In particular, we consider 
\begin{align}
    \Res(\bm{S}) & \textstyle = \max_{\bm{y} \in {\cal Y}^\star(\bm{S})}\sum_{i=1}^N \frac{ 1 - e^{-  \sigma y_i} }{{ 1 + e^{- \sigma y_i} }},
    \label{eq:resilience-smooth}
\end{align}
which approximates the number of non-zero genes in the stationary states resulting from $\bm{S}$. Here, $\sigma > 0$ is to control the approximation error and a larger $\sigma$ leads to a closer approximation. On one hand, we have $\Res(\GSO) = 0$ if and only if the GRN only has null stationary states. On the other hand, a GRN with higher $\Res(\GSO)$ has more genes that are active, which intuitively implies a more resilient GRN.
Eq.~\eqref{eq:resilience-smooth} enables us to formally state our conjecture on the formation model for GRN graph topology $\GSO$ --- 
\begin{center}
\emph{cells in nature shall maximize the resilience property approximated by $\Res(\GSO)$},
\end{center}
thus the \emph{merit function} with ${\cal R}(\bm{S}) = -\Res(\bm{S}) + ({\tilde{\beta}} / {2}) \|\bm{S}\|_{\sf F}^2$.
\update{Note that
\[
\frac{1-e^{-\sigma y_i}}{1+e^{-\sigma y_i}}
=
2\frac{1}{1+e^{-\sigma y_i}}-1.
\]
Hence, the sigmoid-based objective used below is equivalent to \eqref{eq:resilience-smooth} up to an additive constant and a rescaling of the regularization parameter.}

\begin{figure}[t]
\begin{center}
\resizebox{0.99\linewidth}{!}{\definecolor{mycolor1}{rgb}{0.00000,0.44700,0.74100}%
\definecolor{mycolor2}{rgb}{0.85000,0.32500,0.09800}%
\definecolor{mycolor3}{rgb}{0.92900,0.69400,0.12500}%
\definecolor{mycolor4}{rgb}{0.49400,0.18400,0.55600}%
\definecolor{mycolor5}{rgb}{0.46600,0.67400,0.18800}%
\definecolor{mycolor6}{rgb}{0.30100,0.74500,0.93300}%
\definecolor{mycolor7}{rgb}{0.63500,0.07800,0.18400}%

\begin{tikzpicture}

\begin{groupplot}[group style={group name=myplot,group size=2 by 1}]
\nextgroupplot
[%
width=6cm,
height=5cm,
xmin=0,
xmax=0.7,
xlabel style={font=\color{white!15!black}},
xlabel={\large $p_{\sf perturb}$},
ymin=0,
ymax=5,
ylabel style={font=\color{white!15!black}},
ylabel={\large $\Res(\hat{\GSO})/2$},
axis background/.style={fill=white},
axis x line*=bottom,
axis y line*=left,
xmajorgrids,
ymajorgrids,
legend style={legend cell align=left, align=left, draw=white!15!black}
]

\addplot[area legend, draw=none, fill=mycolor1, fill opacity=0.1]
table[row sep=crcr] {%
x	y\\
0	4.332683904796\\
0.05	4.593751224995\\
0.1	4.23510992579\\
0.2	3.91105516206\\
0.3	2.711495842812\\
0.4	2.164415189907\\
0.5	0.047815928805\\
0.6	0.081353446091\\
0.7	0.015052182657\\
0.7	0.011685963822\\
0.6	0\\
0.5	0\\
0.4	0\\
0.3	0\\
0.2	1.273247048708\\
0.1	2.291642254516\\
0.05	2.512636787167\\
0	4.305785533701\\
}--cycle;

\addplot [color=mycolor1, line width=3pt, mark=o, mark options={solid, mycolor1}]
  table[row sep=crcr]{%
0	4.319234719248\\
0.05	3.553194006081\\
0.1	3.263376090153\\
0.2	2.592151105384\\
0.3	1.3844412006\\
0.4	0.995871936555\\
0.5	0.030383833949\\
0.6	0.038179554416\\
0.7	0.01336907324\\
};

\addplot[area legend, draw=none, fill=mycolor2, fill opacity=0.1]
table[row sep=crcr] {%
x	y\\
0	4.941187733477\\
0.05	0.941401590235\\
0.1	0.268128477017\\
0.2	0.085301870772\\
0.3	0.043444241047\\
0.4	0.028522880404\\
0.5	0.020562120791\\
0.6	0.015794768884\\
0.7	0.012842780963\\
0.7	0.012182457618\\
0.6	0.01469790629\\
0.5	0.018889917586\\
0.4	0.026356440715\\
0.3	0.036100683002\\
0.2	0.061209515877\\
0.1	0.158063365159\\
0.05	0.297794623923\\
0	0.882333536669\\
}--cycle;

\addplot [color=mycolor2, line width=3pt, mark=o, mark options={solid, mycolor2}]
  table[row sep=crcr]{%
0	2.911760635073\\
0.05	0.619598107079\\
0.1	0.213095921088\\
0.2	0.073255693325\\
0.3	0.039772462025\\
0.4	0.02743966056\\
0.5	0.019726019188\\
0.6	0.015246337587\\
0.7	0.012512619291\\
};

\addplot[area legend, draw=none, fill=mycolor3, fill opacity=0.1]
table[row sep=crcr] {%
x	y\\
0	1.685248076605\\
0.05	0.62881953758\\
0.1	0.304321468655\\
0.2	0.157378370692\\
0.3	0.053862913906\\
0.4	0.036599681327\\
0.5	0.027740659558\\
0.6	0.01832850761\\
0.7	0.014384292117\\
0.7	0.012593754769\\
0.6	0.01459581395\\
0.5	0.018776195647\\
0.4	0.023554004518\\
0.3	0.033886909499\\
0.2	0.042886511289\\
0.1	0.083768515913\\
0.05	0.063536632636\\
0	0\\
}--cycle;

\addplot [color=mycolor3, line width=3pt, mark=o, mark options={solid, mycolor3}]
  table[row sep=crcr]{%
0	0.772121748324\\
0.05	0.346178085108\\
0.1	0.194044992284\\
0.2	0.10013244099\\
0.3	0.043874911702\\
0.4	0.030076842923\\
0.5	0.023258427602\\
0.6	0.01646216078\\
0.7	0.013489023443\\
};

\nextgroupplot[%
width=6cm,
height=5cm,
xmin=0,
xmax=0.7,
xlabel style={font=\color{white!15!black}},
xlabel={\large $p_{\sf perturb}$},
ymin=0,
ymax=1,
ylabel style={font=\color{white!15!black}},
ylabel={Survival rate},
axis background/.style={fill=white},
axis x line*=bottom,
axis y line*=left,
xmajorgrids,
ymajorgrids,
legend style={legend cell align=left, align=left, draw=white!15!black}
]
\addplot [color=mycolor1, line width=3pt, mark=diamond, mark options={solid, mycolor1}]
  table[row sep=crcr]{%
0	1\\
0.05	1\\
0.1	1\\
0.2	1\\
0.3	0.98\\
0.4	0.866\\
0.5	0.52\\
0.6	0.182\\
0.7	0.014\\
};
\addlegendentry{E. coli}

\addplot [color=mycolor2, line width=3pt, mark=diamond, mark options={solid, mycolor2}]
  table[row sep=crcr]{%
0	1\\
0.05	1\\
0.1	1\\
0.2	1\\
0.3	0.834\\
0.4	0\\
0.5	0\\
0.6	0\\
0.7	0\\
};
\addlegendentry{ER}

\addplot [color=mycolor3, line width=3pt, mark=diamond, mark options={solid, mycolor3}]
  table[row sep=crcr]{%
0	1\\
0.05	1\\
0.1	1\\
0.2	1\\
0.3	0.876\\
0.4	0.188\\
0.5	0.002\\
0.6	0\\
0.7	0\\
};
\addlegendentry{PA}

\end{groupplot}
\end{tikzpicture}}\vspace{-.2cm}
\end{center}
  \caption{Resilience in gene regulatory networks in E.~coli \cite{gao2016universal}. The x-axis $p_{\sf perturb}$ is the proportion of link loss. \update{Left: $\Res(\hat{\GSO})/2$ averaged over perturbed graphs. 
Right: The survival rate, i.e., the fraction of perturbed graphs that retain non-null stationary states.}
  }\vspace{-.3cm}
    \label{fig:gene_perturb}
\end{figure}

It is difficult, if not impossible, to directly verify the above conjecture.
We adopt an empirical approach by examining $\Res(\GSO)$ when $\GSO$ is taken as the \emph{E. coli} GRN \cite{gao2016universal}, which has $N=1454$ genes and $|E| = 3170$ edges,\footnote{We took the largest connected component in the \emph{E. coli} graph.} and comparing its corresponding values to control graphs generated as Erd\H{o}s-R\'{e}nyi (ER) or Preferential Attachment (PA) graphs with the same number of genes and edges.
The left panel of Fig.~\ref{fig:gene_perturb} compares average values of $\Res(\hat{\GSO})$ against $p_{\sf perturb}$, where $\hat{\GSO}$ is obtained by randomly deleting a fraction of $p_{\sf perturb}$ edges from $\GSO$.
Observe that \emph{E. coli} has a consistently higher $\Res(\hat{\GSO})$ than the control graphs with similar edge densities. This supports our conjecture.
In the right panel of Fig.~\ref{fig:gene_perturb}, we compare the survival rates of the GRNs (i.e., the probability of attaining non-null stationary states) under edge deletion perturbations. Similar observation as $\Res(\hat{\GSO})$ can be drawn, showing that \emph{E. coli} admits a more resilient topology than random graphs. 

Finally, we combine the formation model with a generation model inspired by the linear dynamic approximation in GRN inference \cite{gardner2003inferring,yuan2011robust}. Utilizing observations of the GRN stationary states after knockout experiments as ${\bm X} \approx - \GSO^{-1} {\bm P}$, we consider the following graph topology learning problem: 
\begin{tcolorbox}[boxsep=1pt,left=1pt,right=1pt,top=-4pt,bottom=1pt]
\begin{align}
\min_{\GSO, {\bm y}} &~ \Phi( \GSO; {\bm y} ):= \|\GSO {\bm X} + {\bm P}\|_F^2 + \beta \|\GSO\|_F^2 - \lambda \sum_{i=1}^N \frac{ 1 }{ 1+e^{-\sigma y_i}} \notag \\
\text{s.t.} &~ y_i = \sum_{j=1}^N \frac{ S_{ij} y_j^2}{y_j^2 + 1} ,~\forall~ i \in V,~\GSO \in  \mathcal{S}_{\sf nd}.\label{eq:glfp_gene} \tag{GL-GENE}
\end{align}
\end{tcolorbox}
\noindent Here, $\beta = \frac{\lambda\tilde{\beta}}{2} > 0, \lambda > 0$ are regularization parameters and $\mathcal{S}_{\sf nd}$ was defined in \eqref{eq:Snd} with the normalization constant $a>0$.
\update{We note that, unlike the network-game case, the lower-level stationary system in \eqref{eq:glfp_gene} may admit multiple equilibria. This distinction will be important when we discuss the convergence of our proposed algorithm in Section~\ref{sec:alg}.}


\section{TTGD Algorithm}\label{sec:alg}
This section develops a two-timescale gradient descent (TTGD) algorithm to tackle the graph learning problem with functional priors. Recall that \eqref{eq:GL_bilevel:UNI} can be written as
\begin{equation} 
    \label{eq:GL_bilevel:UNI2}
\displaystyle \min_{ \GSO\in \mathcal{S}, {\bm y}\in {\cal Y}} \Phi( \GSO; {\bm y} )
~~\text{s.t.}~\textstyle {\sf Y}(\bm{y};\GSO) = \bm{0}.
\end{equation}
Note that ${\sf Y}(\bm{y};\GSO) = {\bm 0}$ consists of $N$ nonlinear equality constraints. For the purpose of illustration, we assume that (i) \emph{the set of constraints admits a unique solution $\bm{y}^\star(\GSO)$ for any $\GSO \in \mathcal{S}$, i.e., ${\cal Y}^\star(\GSO) = \{ \bm{y}^\star(\GSO) \}$}, and (ii) \emph{$\bm{y}^\star(\GSO)$ is Lipschitz continuous w.r.t. $\GSO$}.
These are common assumptions in the bilevel optimization literature \cite{liu2022inducing, hong2023two}.
Under these assumptions, \eqref{eq:GL_bilevel:UNI2} can be reformulated as
\[
\min_{ \GSO \in {\cal S} }~ \ell( \GSO ) := \Phi( \GSO; {\bm y}^\star ( \GSO ) ).
\]

The above problem can be handled via the standard projected gradient descent (PGD) algorithm: At iteration $k \geq 0$,
\begin{equation} \label{eq:proj_grad}
\GSO^{k+1} = {\sf Proj}_{ \cal S } ( \GSO^k - \gamma \grd \ell( \GSO^k ) ),~\forall~k \geq 0,
\end{equation} 
where $\gamma>0$ is the step size, ${\sf Proj}_{{\cal S}}(\cdot)$ is the Euclidean projection onto ${\cal S}$. The challenge of \eqref{eq:proj_grad}, however, lies in the gradient computation $\grd \ell( \GSO^k )$ since $\ell( \GSO )$ has an implicit dependence on $\GSO$ via ${\bm y}^*(\cdot)$. To see this, we let $\bar{\bm y}^k := {\bm y}^\star( \GSO^k )$ and note that when ${\sf Y}(\cdot)$ is smooth, it is shown that \cite{liu2022inducing}
\begin{align}
& \grd \ell( \GSO^k ) = \widehat{\grd} \Phi( \GSO^k ; \bar{\bm y}^k ) := \grd_{\GSO} \Phi( \GSO^k; \bar{\bm y}^k ) \label{eq:grad_ideal} \\
& - ( {\rm J}_{\GSO} {\sf Y}( \bar{\bm y}^k; \GSO^k) )^\top ( {\rm J}_{\bm y} {\sf Y}( \bar{\bm y}^k; \GSO^k))^{-\top} \grd_{\bm y} \Phi( \GSO^k; \bar{\bm y}^k ), \notag
\end{align} 
where ${\rm J}_{\GSO} {\sf Y}(\cdot), {\rm J}_{\bm y} {\sf Y}(\cdot)$ denote the Jacobian of the operator ${\sf Y}$ w.r.t.~$\GSO, {\bm y}$, respectively; and $\grd_{\GSO} \Phi(\cdot), \grd_{\bm y} \Phi(\cdot)$ denote the partial gradient taken w.r.t.~$\GSO, {\bm y}$, respectively. Hence, evaluating $\grd \ell( \GSO^k )$ requires the unique solution to ${\sf Y}(\bm{y};\GSO^k ) = \bm{0}$ in $\bm{y}$.
\update{We use the symbol $\widehat{\grd}\Phi(\cdot)$ to emphasize that an approximate implicit gradient is used. It coincides with the exact gradient $\grd \ell(\GSO^k)$ when evaluated at the exact lower-level solution $\bar{\bm y}^k$, and it serves as an approximation when ${\bm y}^k$ only tracks $\bm y^\star(\GSO^k)$.}

To tackle the issue, we observe that
\[
{\sf Y}(\bm{y};\GSO) = \bm{0} \Leftrightarrow {\sf F}(\bm{y};\GSO) = \bm{y}, 
\quad {\sf F}( \bm{y} ;\GSO) := \bm{y} - {\sf Y}( \bm{y} ; \GSO )
.\] 
If ${\sf F}(\cdot; \GSO): {\cal Y} \to {\cal Y}$ is a contractive self-map for any $\GSO$, then the fixed point iteration ${\bm y}^+ \leftarrow {\sf F}( {\bm y}; \GSO^k )$ finds ${\bm y}^\star( \GSO^k )$ at a linear rate \cite{parise2019variational}. 
The above observations suggest a two-timescale algorithm mimicking \eqref{eq:proj_grad} that simultaneously updates ${\bm y}$ and $\GSO$ at different paces --- a larger step size for ${\bm y}$ and a smaller step size for $\GSO$. In this way, $\GSO^k$ will appear to be 'static' w.r.t.~${\bm y^k}$. Hence, the fast convergence of the fixed point iteration of ${\bm y^k}$ allows it to track the true solution $\bm{y}^\star(\GSO^k)$. Subsequently, we have $\widehat{\nabla}\Phi(\GSO^k; {\bm y}^k) \approx \nabla\ell(\GSO^k)$.

Inspired by \cite{hong2023two} and the above idea, we formalize our algorithm as follows: Let $\gamma >0$, $\alpha \in (0,1]$ be the step sizes and consider
\begin{tcolorbox}[boxsep=2pt,left=4pt,right=4pt,top=3pt,bottom=3pt, 
]
\underline{\emph{Two Timescale Gradient Descent (TTGD)}}: For $k \geq 0$,
\vspace{-.0cm}\begin{subequations} \label{eq:ttgd}
\begin{align}
{\bm y}^{k+1} & = {\bm y}^k + \alpha ( {\sf F}( {\bm y}^k ; \GSO^k ) - {\bm y}^k ), \label{eq:ttgd_y} \\
\GSO^{k+1} & = {\sf Proj}_{{ \cal S }} ( \GSO^k - \gamma \widehat{\grd} \Phi( \GSO^k ; {\bm y}^{k+1} ) ). \vspace{-.1cm} \label{eq:ttgd_W}
\end{align}
\end{subequations} 
\end{tcolorbox}
In the update of ${\bm y}^k$ in \eqref{eq:ttgd_y}, we have resorted to a relaxed version of the fixed point iteration with $\alpha \in (0,1]$. Our analysis will show that such a relaxation is necessary to guarantee convergence in several cases. Since ${\bm y}^{k+1}$ is formed by a convex combination between ${\bm y}^k$ and ${\sf F}( {\bm y}^k ; \GSO^k )$, by induction ${\bm y}^{k+1} \in {\cal Y}$ for any $k$ as long as ${\bm y}^0\in{\cal Y}$. 



Although TTGD is developed under the assumption that lower-level equations have a unique solution, \update{the TTGD iterates remain well defined and can still be computed} when the uniqueness assumption is violated. In the latter case, $\bm{y}^\star(\GSO)$ may not be a singleton and $\nabla \ell(\cdot)$ is not well defined. \update{Accordingly, $\widehat{\nabla}\Phi(\GSO^k;\bar{\bm{y}}^k)$ should then be interpreted only as a computable search direction akin to a `subdifferential' of $\ell(\cdot)$.} Nevertheless, \update{the updates themselves remain well-defined} as long as ${\sf Y}(\cdot;\GSO)$ is differentiable and ${\sf J}_{\bm{y}}{\sf Y}$ is invertible.


\begin{Remark}\label{rem:ttgd_complexity}
\update{The key computational bottleneck of TTGD lies in \eqref{eq:grad_ideal}, which requires the inverse $({\rm J}_{\bm y} {\sf Y}( {\bm y}^k; \GSO^k))^{-1}$ at a cost of ${\cal O}( N^3 )$ FLOPs per iteration. In large-scale settings, one may approximate ${\rm J}_{\bm y} {\sf Y}$ by its diagonal part $D$ before taking the inverse \cite{zhang2022advancing}. Writing ${\rm J}_{\bm y} {\sf Y} = D - E$ with $E$ collecting the off-diagonal entries, if $\| D^{-1} E \| < 1$, then $({\rm J}_{\bm y} {\sf Y})^{-1} = D^{-1}(I - D^{-1}E)^{-1}$ admits a Neumann expansion, and the diagonal approximation retains the leading-order term $D^{-1}$.
Note that in a network, the off-diagonal entries of ${\rm J}_{\bm y} {\sf Y}$ encode inter-node interactions in the equilibrium system. Discarding them may lead to inaccurate gradient estimates when network coupling effects are strong. 
}\vspace{-.2cm}
\end{Remark}

\subsection{Convergence Analysis for TTGD}

This subsection studies the convergence of TTGD. Throughout, we characterize the convergence behavior via the stationary measure for optimizing a smooth non-convex function $\ell$ over a closed convex set ${\cal S}$. For any $\gamma > 0$, we define
\begin{align*}
{\sf G}_{\gamma} (\GSO) := \| \gamma^{-1} ( \GSO - {\sf Proj}_{{\cal S}} ( \GSO - \gamma \grd \ell( \GSO ) ) ) \|_F^2.
\end{align*}
A point $\GSO$ is called a stationary point of \eqref{eq:GL_bilevel:UNI2} if it satisfies $0 \in \grd\ell(\GSO) + \partial \iota_{{\cal S}}(\GSO)$ \cite{davis2019stochastic}. Here, $\partial \iota_{{\cal S}}(\GSO)$ is the subdifferential of the indicator function $\iota_{{\cal S}}$ at $\GSO$ \cite{rockafellar2009variational}. From the definition of projection operators, it is obvious that if $\GSO = {\sf Proj}_{{\cal S}} ( \GSO - \gamma \grd \ell( \GSO ) )$ for some $\gamma>0$, then $\GSO$ is a stationary point. When ${\sf G}_{\gamma} (\GSO)$ is small, we expect that $\GSO$ is close to a stationary point.

We specify a few general conditions that are sufficient for the convergence of TTGD:
\begin{assumption}\label{assu:cal_S}
    The set $\mathcal{S}$ is convex and compact.
\end{assumption}
\begin{assumption}\label{assu:self-F}
    For any $\GSO\in{\cal S}$ and $\bm{y}\in{\cal Y}$, ${\sf F}(\bm{y};\GSO) \in {\cal Y}$. Here,
    \\${\sf F}(\bm{y};\GSO):=\bm{y} - {\sf Y}(\bm{y};\GSO)$.
\end{assumption}
\begin{assumption}\label{assu:Y}
    There exists a $\mody \in (0,1)$ such that for any $\GSO\in \mathcal{S}$ and $\bm{y}_1, \bm{y}_2\in{\cal Y}$, the following chain holds:
    \begin{align*}
        &(1-\mody)\|\bm{y}_1 - \bm{y}_2\|_2^2\leq \langle {\sf Y}(\bm{y}_1;\GSO)-{\sf Y}(\bm{y}_2;\GSO),\bm{y}_1-\bm{y}_2\rangle \\
        & \leq \|{\sf Y}(\bm{y}_1;\GSO)-{\sf Y}(\bm{y}_2;\GSO)\|_2\|\bm{y}_1-\bm{y}_2\|_2 \leq (1+\mody)\|\bm{y}_1-\bm{y}_2\|_2^2.
    \end{align*}
\end{assumption}
\begin{assumption}\label{assu:phi_grd}
    The objective function $(\bm{S},\bm{y}) \mapsto \Phi(\bm{S};\bm{y})$ is smooth. In particular, there exists an $L_{\Phi} > 0$ such that for all $\GSO_1, \GSO_2\in{\cal S}$ and $\bm{y}_1, \bm{y}_2\in{\cal Y}$,
        \begin{align*}
            \|\grd \Phi(\GSO_1;\bm{y}_1) - \grd \Phi(\GSO_2;\bm{y}_2)\|_F 
            \leq L_{\Phi}\|(\GSO_1;{\bm y}_1) - (\GSO_2;{\bm y}_2)\|_F.
        \end{align*}
        Here, $\grd \Phi = \left(\grd_{\GSO} \Phi; \grd_{\bm y} \Phi\right)$ is the gradient of $\Phi(\GSO;{\bm y})$ taken w.r.t. $(\GSO;\bm{y})$.
\end{assumption}
\begin{assumption}\label{assu:Y_grd}
    The mapping $(\bm{y},\bm{S}) \mapsto {\sf Y}(\bm{y};\bm{S})$ is smooth. In particular, for all $B > 0$, there exists an $L_{\sf Y} > 0$ such that for all $\GSO_1,\GSO_2\in {\cal S}$ and $\bm{y}_1, \bm{y}_2\in {\cal Y}$ satisfying $\|{\bm y}_1\|_2, \|\bm{y}_2\|_2\leq B$, we have
        \begin{align*}
            \|{\sf J} {\sf Y}(\bm{y}_1;\GSO_1) - {\sf J} {\sf Y}(\bm{y}_2;\GSO_2)\|_F 
            \leq L_{\sf Y}\|(\GSO_1;{\bm y}_1) - (\GSO_2;{\bm y}_2)\|_F,
        \end{align*}
        where ${\sf J} {\sf Y} = \left({\sf J}_{\bm y} {\sf Y}; {\sf J}_{\GSO} {\sf Y}\right)$ is the Jacobian of ${\sf Y}({\bm y};\GSO)$ w.r.t. $(\bm{y};\GSO)$.
\end{assumption}

Here, a mapping is deemed smooth if it has continuous partial derivatives of all orders. Among the above assumptions, H\ref{assu:self-F} and H\ref{assu:Y} imply that ${\sf F}(\cdot;\GSO)$ is a contractive self-map for any $\GSO\in {\cal S}$. Hence, there exists a unique solution $\bm{y}^\star$ to ${\sf Y}(\bm{y};\GSO) = \bm{0}$ for any $\GSO\in{\cal S}$. Together with H\ref{assu:cal_S}, it can be shown that $\{\bm{y}^k\}$ from updates \eqref{eq:ttgd_y} with properly chosen $\alpha$ is bounded. Furthermore, H\ref{assu:phi_grd} and H\ref{assu:Y_grd} imply that the gradient map $\widehat{\grd}\Phi(\cdot)$ is locally Lipschitz continuous w.r.t. $\GSO$ and $\bm{y}$, respectively. Together, $\widehat{\grd}\Phi(\cdot)$ is Lipschitz continuous over the sequence $\{(\GSO^k, \bm{y}^k)\}$ generated by TTGD updates \eqref{eq:ttgd}. Based on the above observations, we adapt the proof in \cite{hong2023two} and obtain the following convergence result. \update{The full proof is deferred to Appendix~\ref{app:prof_conv_thm}. We give the main argument below.}

\begin{Theorem} \label{thm:ttgd}
\update{Suppose that H\ref{assu:cal_S} to H\ref{assu:Y_grd} hold. Let $L_\ell$, $L_{\widehat\Phi}$, and $L_y$
be the constants in Proposition~\ref{prop:Lipschitz-ness_all_functions} in 
 Appendix~\ref{app:prof_conv_thm}. Set
\[
\alpha = \frac{1-\mody}{(1+\mody)^2},
\qquad
\gamma \leq \min\!\left\{ \frac{3}{4L_\ell},\ \frac{\alpha(1-\mody)}{4L_{\widehat\Phi}L_y} \right\}.
\]}
Then, for any $K \geq 1$, we have
\[
\min_{k=1,\ldots,K} {\sf G}_\gamma(\GSO^k) = {\cal O}(K^{-1}).
\]
\end{Theorem}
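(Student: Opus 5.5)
The plan follows the two-timescale analysis of \cite{hong2023two}. We couple a descent inequality for $\ell$ with a contraction inequality for the tracking error $\Delta^k := \|{\bm y}^{k} - {\bm y}^\star(\GSO^{k-1})\|_2^2$, and then telescope a Lyapunov function. From Proposition~\ref{prop:Lipschitz-ness_all_functions} we take three facts as given on the bounded region where the iterates live:
\begin{enumerate}[label=(\roman*)]
\item $\grd\ell$ is $L_\ell$-Lipschitz on ${\cal S}$;
\item $\widehat{\grd}\Phi(\GSO;\cdot)$ is $L_{\widehat\Phi}$-Lipschitz in ${\bm y}$, so that $\|\widehat{\grd}\Phi(\GSO^k;{\bm y}^{k+1})-\grd\ell(\GSO^k)\|_F\le L_{\widehat\Phi}\|{\bm y}^{k+1}-{\bm y}^\star(\GSO^k)\|_2$;
\item ${\bm y}^\star(\cdot)$ is $L_y$-Lipschitz.
\end{enumerate}
Boundedness of $\{{\bm y}^k\}$ is what licenses these local constants. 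It follows from H\ref{assu:self-F}, H\ref{assu:Y} and compactness of ${\cal S}$ via the contraction argument below.

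\textbf{Step 1 (inner contraction).} Fix $\GSO=\GSO^k$ and write ${\bm y}^\star={\bm y}^\star(\GSO^k)$. Then ${\bm y}^{k+1}-{\bm y}^\star = ({\bm y}^k-{\bm y}^\star) - \alpha({\sf Y}({\bm y}^k;\GSO)-{\sf Y}({\bm y}^\star;\GSO))$. Expanding the square and using the strong monotonicity and Lipschitz bounds of H\ref{assu:Y} gives
\[
\|{\bm y}^{k+1}-{\bm y}^\star\|_2^2\le\big(1-2\alpha(1-\mody)+\alpha^2(1+\mody)^2\big)\|{\bm y}^k-{\bm y}^\star\|_2^2 .
\]
With $\alpha=(1-\mody)/(1+\mody)^2$ the factor equals $1-\alpha(1-\mody)$. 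Next shift the reference point, ${\bm y}^\star(\GSO^k)$ versus ${\bm y}^\star(\GSO^{k-1})$, and apply Young's inequality together with (iii). This yields
\[
\|{\bm y}^{k+1}-{\bm y}^\star(\GSO^k)\|_2^2\le(1-\tfrac{\alpha(1-\mody)}{2})\Delta^k+c_1 L_y^2\|\GSO^k-\GSO^{k-1}\|_F^2,
\]
where $c_1=\mathcal{O}(1/(\alpha(1-\mody)))$.

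\textbf{Step 2 (outer descent).} Set $\bm g^k:=\widehat{\grd}\Phi(\GSO^k;{\bm y}^{k+1})$ and $\bm e^k:=\bm g^k-\grd\ell(\GSO^k)$. Smoothness (i) and the projection characterization $\langle \GSO^k-\gamma\bm g^k-\GSO^{k+1},\GSO^k-\GSO^{k+1}\rangle\le0$ give
\[
\ell(\GSO^{k+1})\le\ell(\GSO^k)-(\tfrac1\gamma-\tfrac{L_\ell}{2})\|\GSO^{k+1}-\GSO^k\|_F^2+\langle \bm e^k,\GSO^k-\GSO^{k+1}\rangle .
\]
Young's inequality on the cross term and (ii) turn this into $\ell(\GSO^{k+1})\le\ell(\GSO^k)-\frac{c_2}{\gamma}\|\GSO^{k+1}-\GSO^k\|_F^2+c_3\gamma L_{\widehat\Phi}^2\Delta^{k+1}$. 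The condition $\gamma\le 3/(4L_\ell)$ keeps $c_2>0$. Nonexpansiveness of ${\sf Proj}_{\cal S}$ then gives ${\sf G}_\gamma(\GSO^k)\le 2\gamma^{-2}\|\GSO^{k+1}-\GSO^k\|_F^2+2L_{\widehat\Phi}^2\Delta^{k+1}$.

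\textbf{Step 3 (Lyapunov and telescoping).} Define $V^k=\ell(\GSO^k)+\rho\,\Delta^{k}$ with $\rho\asymp \gamma L_{\widehat\Phi}^2/(\alpha(1-\mody))$. Adding Steps 1 and 2, the condition $\gamma\le\alpha(1-\mody)/(4L_{\widehat\Phi}L_y)$ is exactly what makes $\rho c_1L_y^2\|\GSO^k-\GSO^{k-1}\|_F^2$ absorbable by the descent term $-\frac{c_2}{\gamma}\|\GSO^k-\GSO^{k-1}\|_F^2$ from the previous step. A one-step index shift handles this, e.g.\ by including $\frac{c_2}{2\gamma}\|\GSO^k-\GSO^{k-1}\|_F^2$ in $V^k$. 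The result is $V^{k+1}\le V^k-c\gamma\,{\sf G}_\gamma(\GSO^k)$. Summing over $k=1,\dots,K$ and using that $\ell$ is bounded below on compact ${\cal S}$ and that $\Delta^1$ is finite gives $\min_k{\sf G}_\gamma(\GSO^k)\le (V^1-\inf\ell)/(c\gamma K)=\mathcal{O}(K^{-1})$.

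The main obstacle is Step 3: matching constants so that the drift of ${\bm y}^\star(\GSO^k)$ caused by the outer step is dominated. This is where the specific step-size coupling enters, and it must be checked carefully with the index shift. Boundedness of ${\bm y}^k$ also has to be established beforehand, so that the local constants of H\ref{assu:Y_grd} apply uniformly.
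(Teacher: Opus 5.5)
Your proposal is correct and follows the paper's proof essentially step for step. It has the same inner contraction with a Young-inequality shift from ${\bm y}^\star(\GSO^{k})$ to ${\bm y}^\star(\GSO^{k-1})$, the same projected-descent inequality for $\ell$, and the same bound of ${\sf G}_\gamma(\GSO^k)$ by $\|\GSO^{k+1}-\GSO^k\|_F^2$ plus the tracking error. The only difference is that you combine the two recursions through an explicit Lyapunov function $V^k=\ell(\GSO^k)+\rho\Delta^k+\tau\|\GSO^k-\GSO^{k-1}\|_F^2$, whereas the paper cites \cite[Lemma 3.6]{hong2023two}, which is proved by exactly this construction. For the stated step sizes to close, $\tau$ must lie strictly between the drift coefficient from Step 1 and the descent coefficient $c_2/\gamma$ (with the Young split in Step 2 chosen accordingly); this is what the paper's check $c_0/c_1>d_1/d_0$ guarantees.
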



\update{\emph{Proof sketch.} The proof has three parts. We first show that the lower-level iteration \eqref{eq:ttgd_y} is contractive, so ${\bm y}^k$ tracks ${\bm y}^\star(\GSO^k)$. We then use this to control the gap between the computable implicit gradient $\widehat{\grd}\Phi(\GSO^k;{\bm y}^{k+1})$ and the exact gradient $\grd \ell(\GSO^k)$. Finally, the Jacobian regularity assumptions ensure that the implicit gradient is well defined along the iterates. The result then follows from a projected descent argument. \hfill $\square$}

\vspace{.1cm}
\noindent \textbf{Network Games.}
We next provide sufficient conditions for satisfying H\ref{assu:cal_S} to H\ref{assu:Y_grd} by specializing \eqref{eq:GL_bilevel:UNI2} to the network games settings in Sec.~\ref{subsec:ng}.
Throughout our discussion, we take ${\cal S} = {\cal S}_{\sf ng}$ as specified in \eqref{eq:Sng} with the parameter $c>0$.

We first consider the assumptions about the functional regularizer ${\cal R}(\GSO)$ induced by LQ games. 
\begin{assumption}\label{assu:lipsf_lq}
Consider the payoff function $U_i^{lq}(\cdot)$ in  \eqref{eq:lq_game}  with ${\cal Y} = [0,\infty)^N$.
\begin{itemize}[leftmargin=*]
    \item For all $i \in [N]$, the marginal benefit satisfies $b_i \geq b^\star> 0$. 
    \item The interaction function satisfies $f(0)=0$, $f(y) > 0$ for any $y>0$. 
    \item There exist $L_{f,1}, L_{f,2}$ such that the interaction function satisfies $|f'(y)| \leq L_{f,1}$, $|f''(y)| \leq L_{f,2}$ for any $y \in (0,\infty)$. Moreover, we have $L_{f,1} < \frac{1}{c}$.
\end{itemize}
\end{assumption}
\noindent 
Under the above assumption, the work \cite{cai2024optimal} showed that an NE exists and is unique for any $\GSO \in {\cal S}$. Furthermore, we observe that
\begin{Prop}\label{prop:conv_gllq}
    Under H\ref{assu:lipsf_lq}, there exists a smooth map ${\sf \tilde{Y}}(\cdot)$ such that ${\sf \tilde{Y}}(\bm{y};\GSO) = {\sf Y}(\bm{y};\GSO)$ for any $\bm{y}\in{\cal Y}$, $\GSO\in{\cal S}_{\sf ng}$. 
    Furthermore, H\ref{assu:cal_S} to H\ref{assu:Y_grd} hold for the special case of \eqref{eq:glfp_ng} with ${\sf \tilde{Y}}(\cdot)$ and $\Phi(\cdot)$.
\end{Prop}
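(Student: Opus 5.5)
The plan is to make the best-response map explicit, build a smooth extension of it, and then verify H\ref{assu:cal_S}--H\ref{assu:Y_grd} one at a time.

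\textbf{Explicit best response and its smooth extension.} For the payoff \eqref{eq:lq_game}, $U_i^{lq}$ is strictly concave in $y_i$ with unconstrained maximizer $\sum_j S_{ij} f(y_j)+b_i$. Hence
\[
{\sf T}_i(\bm y;\GSO)=\max\Bigl\{0,\textstyle\sum_j S_{ij}f(y_j)+b_i\Bigr\}.
\]
For $\bm y\in{\cal Y}=[0,\infty)^N$ and $\GSO\in{\cal S}_{\sf ng}$ we have $S_{ij}\ge 0$, $f(y_j)>0$ (or $f(0)=0$), and $b_i\ge b^\star>0$. The argument is therefore at least $b^\star$, so the max is inactive and ${\sf T}(\bm y;\GSO)=\GSO f(\bm y)+\bm b$ with $f$ applied entrywise. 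The key step is to define ${\sf \tilde Y}(\bm y;\GSO):=\bm y-\GSO\tilde f(\bm y)-\bm b$ on all of $\RR^N\times\RR^{N\times N}$. Here $\tilde f$ is a smooth extension of $f$ from $[0,\infty)$ to $\RR$ that keeps $|\tilde f'|\le L_{f,1}$ and $|\tilde f''|\le L_{f,2}$. One construction is a second-order Taylor continuation at $0$ blended by a cutoff; if that is delicate, one can instead note that only one-sided derivatives on ${\cal Y}$ matter. With this choice ${\sf \tilde Y}={\sf Y}$ on ${\cal Y}\times{\cal S}_{\sf ng}$, and ${\sf \tilde Y}$ is smooth because it is bilinear in $(\GSO,\tilde f(\bm y))$.

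\textbf{Verifying the hypotheses.} The checks are as follows.
\begin{itemize}
\item H\ref{assu:cal_S}: ${\cal S}_{\sf ng}$ is an intersection of a polyhedral cone with affine constraints, so it is convex. It is bounded because $\GSO\ge\bm 0$ and $\GSO\bm 1=c\bm 1$ force $0\le S_{ij}\le c$, so it is compact.
\item H\ref{assu:self-F}: ${\sf F}(\bm y;\GSO)=\GSO f(\bm y)+\bm b\ge b^\star\bm 1\ge\bm 0$, so ${\sf F}$ maps into ${\cal Y}$.
\item H\ref{assu:Y}: Set $\mody:=cL_{f,1}<1$. By the mean value theorem, $\|f(\bm y_1)-f(\bm y_2)\|_\infty\le L_{f,1}\|\bm y_1-\bm y_2\|_\infty$. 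Then $\|\GSO(f(\bm y_1)-f(\bm y_2))\|_2\le\|\GSO\|_2L_{f,1}\|\bm y_1-\bm y_2\|_2$. For the $\ell_2$ operator norm I use $\|\GSO\|_2\le\sqrt{\|\GSO\|_1^{\rm op}\|\GSO\|_\infty}$, where $\|\GSO\|_\infty=c$. The column-sum bound is not controlled by the constraints, and this is the main obstacle. I would first try to work coordinatewise: with $D={\sf Diag}$ of the divided differences of $f$, write ${\sf F}(\bm y_1)-{\sf F}(\bm y_2)=\GSO D(\bm y_1-\bm y_2)$, which has $\|\cdot\|_\infty$-contraction factor $cL_{f,1}$. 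Then I would either check whether the appendix's Proposition accepts an $\infty$-norm variant, or bound $\|\GSO\|_2$ using compactness of ${\cal S}_{\sf ng}$ so that $\mody:=L_{f,1}\max_{\GSO\in{\cal S}_{\sf ng}}\|\GSO\|_2$, with the hypothesis reinterpreted accordingly. Once $\|{\sf F}(\bm y_1)-{\sf F}(\bm y_2)\|_2\le\mody\|\bm y_1-\bm y_2\|_2$ holds, the chain follows from ${\sf Y}=I-{\sf F}$ together with Cauchy--Schwarz and the triangle inequality: the inner product is at least $(1-\mody)\|\bm y_1-\bm y_2\|^2$, and $\|{\sf Y}(\bm y_1)-{\sf Y}(\bm y_2)\|\le(1+\mody)\|\bm y_1-\bm y_2\|$.
\item H\ref{assu:phi_grd}: $\Phi$ in \eqref{eq:glfp_ng} is a sum of linear terms and $\beta\|\GSO\|_F^2$, so its gradient is affine and globally Lipschitz with $L_\Phi=2\beta$.
\item H\ref{assu:Y_grd}: ${\sf J}_{\bm y}{\sf \tilde Y}=I-\GSO\,{\sf Diag}(\tilde f'(\bm y))$ and ${\sf J}_{\GSO}{\sf \tilde Y}$ is linear in $\tilde f(\bm y)$. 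On $\{\|\bm y\|_2\le B\}\times{\cal S}_{\sf ng}$, the quantities $\GSO$, $\tilde f'$, $\tilde f''$ and $|\tilde f(\bm y)|\le L_{f,1}B$ are all bounded. Product rules then give a Lipschitz constant $L_{\sf Y}$ depending on $c,N,L_{f,1},L_{f,2},B$.
\end{itemize}
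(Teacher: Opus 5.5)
Your outline follows the paper's proof essentially step for step. The paper also sets $\tilde{\sf Y}(\bm y;\GSO)=\bm y-(\GSO f(\bm y)+\bm b)$, notes that the $\max\{\bm 0,\cdot\}$ is inactive because $f\ge 0$ and $\bm b\ge\bm 0$, and gets H\ref{assu:self-F} from the same nonnegativity. It takes $L_\Phi=2\beta$ and checks H\ref{assu:Y_grd} by product-rule bounds on the Jacobian. Your smooth extension $\tilde f$ is harmless but not needed, since H\ref{assu:Y_grd} only involves points of ${\cal Y}$.

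The gap is H\ref{assu:Y}. You are right that $\GSO\ge\bm 0$, $\GSO\bm 1=c\bm 1$ control only row sums, but neither of your fallbacks verifies H\ref{assu:Y} under H\ref{assu:lipsf_lq}. The compactness route gives $\mody=L_{f,1}\max_{\GSO\in{\cal S}_{\sf ng}}\|\GSO\|_2=cL_{f,1}\sqrt{N-1}$ (attained by $S_{i1}=c$ for all $i\neq 1$), which need not be below $1$. The $\infty$-norm route changes the hypothesis instead of verifying it.

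No better estimate can close this gap, because H\ref{assu:Y} is false on ${\cal S}_{\sf ng}$ as stated. Take $f(x)=x$ (so $L_{f,1}=1$, $L_{f,2}=0$), $N=3$, $c\in(2/\sqrt5,1)$, and
\[
\GSO=\begin{pmatrix}0&c&0\\ c&0&0\\ c&0&0\end{pmatrix}\in{\cal S}_{\sf ng}.
\]
With $\bm y_2=\bm 0$ and $\bm y_1=\bm v=(\sqrt5,2,1)^\top\in{\cal Y}$, you get $\langle\tilde{\sf Y}(\bm y_1;\GSO)-\tilde{\sf Y}(\bm y_2;\GSO),\bm y_1-\bm y_2\rangle=\|\bm v\|_2^2-\bm v^\top\GSO\bm v=10-5\sqrt5\,c<0$. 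So the left inequality of H\ref{assu:Y} fails for every $\mody\in(0,1)$. For general $N$, the star construction $S_{i1}=c$ ($i\neq 1$), $S_{12}=c$ fails whenever $c>2/\sqrt{N+2}$, e.g.\ for the experimental choice $c=0.95$, $N=50$.

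The paper's proof gets past this point only by asserting $\|\GSO\|_2\le\|\GSO\|_\infty=c$ on ${\cal S}_{\sf ng}$. That holds for symmetric $\GSO$, but not for the directed GSOs that ${\cal S}_{\sf ng}$ permits. What is missing is an extra hypothesis. Any one of the following works:
\begin{itemize}
\item restrict to symmetric $\GSO$, so that $\|\GSO\|_2=\rho(\GSO)\le c$;
\item add the column-sum constraint $\GSO^\top\bm 1\le c\bm 1$, so that $\|\GSO\|_2^2$ is at most the product of the maximal row and column sums and hence $\|\GSO\|_2\le c$;
\item assume $cL_{f,1}\sqrt{N-1}<1$;
\item rework the lower-level part of Theorem~\ref{thm:ttgd} in the $\infty$-norm, where your divided-difference observation $\|\GSO\,{\sf Diag}(\cdot)\|_\infty\le cL_{f,1}<1$ gives the needed contraction.
\end{itemize}
Under any of these, the rest of your verification goes through.
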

\noindent \update{The full proof is deferred to Appendix~\ref{app:proof_conv_prop}.}
We next consider RT games to instantiate the functional regularizer design:
\begin{assumption}\label{assu:lipsg_rt}
Consider the payoff function $U_i^{rt}(\cdot)$ in  \eqref{eq:rt_game} with ${\cal Y} = [\underline b_i, a_i]^N := \{\bm{y}\in\RR^N: y_i\in [ \underline b_i, a_i ], ~\forall~i\in[N]\}$.
\begin{itemize}[leftmargin=*]
    \item For all $i \in [N]$, the marginal benefit satisfies $b_i \geq b^\star > 0$. 
    \item The interaction function satisfies $g(0)=0$, $g(y) > 0$ for any $y>0$. 
    \item There exist $L_{g,1}, L_{g,2}$ such that the interaction function satisfies $|g'(x)| \leq L_{g,1}$, $|g''(x)| \leq L_{g,2}$ for any $x \in [0, c \, \max_i a_i ]$. Moreover, we have $L_{g,1} < c^{-1} \left(1- \max_i \frac{\underline b_i}{a_i}\right)$.
\end{itemize}
\end{assumption}
\noindent Under the above assumption, the work \cite{parise2019variational} showed that an NE exists and is unique for any $\GSO \in {\cal S}$. Similarly, we observe that 
\begin{Prop}\label{prop:conv_glrt}
    Under H\ref{assu:lipsg_rt}, there exists a smooth map ${\sf \tilde{Y}}(\cdot)$ such that ${\sf \tilde{Y}}(\bm{y};\GSO) = {\sf Y}(\bm{y};\GSO)$ for any $\bm{y}\in{\cal Y}$, $\GSO\in{\cal S}_{\sf ng}$. 
    Furthermore, H\ref{assu:cal_S} to H\ref{assu:Y_grd} hold for the special case of \eqref{eq:glfp_ng} with ${\sf \tilde{Y}}(\cdot)$ and $\Phi(\cdot)$.
\end{Prop}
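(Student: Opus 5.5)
The plan is to compute the best-response map of the RT game in closed form, replace it by its unclipped smooth version, and then check H\ref{assu:cal_S}--H\ref{assu:Y_grd} one at a time. Two points are not settled here and are flagged below: whether clipping is inactive on ${\cal Y}$, and the Euclidean contraction bound in H\ref{assu:Y}.

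\textbf{Step 1: best response and its smooth extension.} Agent $i$'s payoff $U_i^{rt}$ is a concave quadratic in $y_i$. Its unconstrained maximizer is $g((\GSO\bm{y})_i)+b_i$, so
\[
{\sf T}_i(\bm{y};\GSO)={\sf Proj}_{[\underline b_i,a_i]}\big(g((\GSO\bm{y})_i)+b_i\big).
\]
The projection is only piecewise smooth, so I would define
\[
{\sf \tilde{Y}}(\bm{y};\GSO):=\bm{y}-g(\GSO\bm{y})-\bm{b},
\]
with $g$ applied entrywise and first extended to a $C^\infty$ function on $\RR$ whose first two derivatives stay bounded by $L_{g,1}$ and $L_{g,2}$.

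\textbf{Step 2: clipping is inactive on ${\cal Y}$.} This is what gives ${\sf \tilde{Y}}={\sf Y}$ on ${\cal Y}\times{\cal S}_{\sf ng}$.
\begin{itemize}
\item For $\GSO\in{\cal S}_{\sf ng}$ and $\bm{y}\in{\cal Y}$, each row of $\GSO$ is nonnegative and sums to $c$. Hence $(\GSO\bm{y})_i\in[c\min_j\underline b_j,\,c\max_j a_j]\subseteq[0,c\max_j a_j]$, which is the range where H\ref{assu:lipsg_rt} controls $g$.
\item Lower end: $g>0$ on $(0,\infty)$ gives $g((\GSO\bm{y})_i)+b_i\ge b_i\ge\underline b_i$.
\item Upper end: $g(0)=0$ gives $g((\GSO\bm{y})_i)\le L_{g,1}c\max_j a_j<(1-\max_j\underline b_j/a_j)\max_j a_j$. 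I then need $b_i+L_{g,1}c\max_j a_j\le a_i$, which is the role of the margin $1-\max_j\underline b_j/a_j$ in H\ref{assu:lipsg_rt}.
\end{itemize}
If that compatibility between $\bm{b}$, $\underline{\bm b}$ and $\bm a$ is not implied by the standing choices, I would state it explicitly as the condition under which ${\sf \tilde{Y}}={\sf Y}$.

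\textbf{Step 3: the easy assumptions.}
\begin{itemize}
\item H\ref{assu:cal_S}: ${\cal S}_{\sf ng}$ is a polytope cut out by linear equalities and inequalities, and $0\le S_{ij}\le c$. So it is convex and compact.
\item H\ref{assu:self-F}: ${\sf F}={\sf T}$ is a projection onto ${\cal Y}$, so it maps into ${\cal Y}$. Equivalently, the unclipped value already lies in ${\cal Y}$ by Step 2.
\item H\ref{assu:phi_grd}: here $\Phi(\GSO;\bm{y})={\rm Tr}(\GSO^\top\bm{D})+\beta\|\GSO\|_F^2-\lambda\bm{1}^\top\bm{y}$. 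Its gradient is $(\bm{D}+2\beta\GSO;\,-\lambda\bm{1})$, so $L_\Phi=2\beta$ works.
\end{itemize}

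\textbf{Step 4: H\ref{assu:Y_grd}.} The Jacobians are
\[
{\sf J}_{\bm y}{\sf \tilde{Y}}=\bm{I}-{\sf Diag}\big(g'(\GSO\bm{y})\big)\GSO,\qquad
\frac{\partial {\sf \tilde{Y}}_i}{\partial S_{ij}}=-g'((\GSO\bm{y})_i)\,y_j .
\]
Both are products of bounded smooth factors. On $\{\|\bm{y}\|_2\le B\}\times{\cal S}_{\sf ng}$, the bounds $|g'|\le L_{g,1}$ and $|g''|\le L_{g,2}$, the boundedness of ${\cal S}_{\sf ng}$, and the mean value theorem give the required Lipschitz constant $L_{\sf Y}$.

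\textbf{Step 5: H\ref{assu:Y}, the main obstacle.} Write ${\sf F}(\bm{y})=g(\GSO\bm{y})+\bm{b}$. The goal is a contraction
\[
\|{\sf F}(\bm{y}_1)-{\sf F}(\bm{y}_2)\|_2\le\mody\|\bm{y}_1-\bm{y}_2\|_2,\qquad \mody<1 .
\]
Once this holds, Cauchy--Schwarz applied to ${\sf \tilde{Y}}=\bm{I}-{\sf F}$ gives the lower bound $(1-\mody)\|\bm{y}_1-\bm{y}_2\|_2^2$ and the upper bound $(1+\mody)\|\bm{y}_1-\bm{y}_2\|_2^2$.
\begin{itemize}
\item By the mean value theorem, ${\sf F}(\bm{y}_1)-{\sf F}(\bm{y}_2)={\sf Diag}(g'(\bm\xi))\GSO(\bm{y}_1-\bm{y}_2)$ for some intermediate $\bm\xi$. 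This yields $\mody\le L_{g,1}\|\GSO\|_2$.
\item The difficulty is that ${\cal S}_{\sf ng}$ fixes row sums at $c$ but not column sums. Thus $\|\GSO\|_\infty=c$, but $\|\GSO\|_2\le\sqrt{\|\GSO\|_\infty\|\GSO\|_1}$ need not be at most $c$.
\item The map is clearly a contraction in $\|\cdot\|_\infty$, with factor $L_{g,1}c<1$. So I would try first to get the Euclidean version by working in a weighted norm, or by exploiting the strict margin $1-\max_j\underline b_j/a_j$, following the variational-inequality analysis of \cite{parise2019variational}.
\item Failing that, I would add the mild requirement that column sums are also bounded by $c$, so that $\|\GSO\|_2\le c$ and $\mody=L_{g,1}c<1$.
\end{itemize}
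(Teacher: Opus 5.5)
Your Steps 1, 3 and 4 match the paper. It defines $\tilde{\sf Y}^{rt}(\bm y;\GSO)=\bm y-(g(\GSO\bm y)+\bm b)$ and asserts that ${\sf Y}^{rt}=\tilde{\sf Y}^{rt}$ on ${\cal Y}$ and that H\ref{assu:Y}--H\ref{assu:Y_grd} hold ``following the proof'' of the LQ case. The two points you leave open are genuine defects of the statement itself. So neither your proposal nor the paper's proof establishes it as written, and the paper's one-line deferral does not address them.

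For Step 2, H\ref{assu:lipsg_rt} puts no upper bound on $b_i$ and does not require $b_i\ge\underline b_i$. If $b_i>a_i$, the upper clip is active at every $\bm y\in{\cal Y}$, so ${\sf Y}^{rt}\neq\tilde{\sf Y}^{rt}$ and H\ref{assu:self-F} fails for $\tilde{\sf Y}$. A compatibility condition such as $\underline b_i\le b_i\le a_i-L_{g,1}c\max_j a_j$ must be added, as you propose. The LQ proof cannot supply it, because the LQ best response has no upper clip.

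For Step 5, the LQ argument being transplanted (Lemma \ref{lemma:assumYhold}) rests on $\|\GSO\|_2\le\|\GSO\|_\infty=c$. That is exactly the inequality you doubt, and it is false on ${\cal S}_{\sf ng}$: the supremum of $\|\GSO\|_2$ there is $c\sqrt{N-1}$, attained when rows $2,\dots,N$ all put their weight on node $1$. In fact H\ref{assu:Y} itself fails under H\ref{assu:lipsg_rt}. Here is a counterexample:
\begin{itemize}
\item Take $N\ge4$, $\bm a=\bm 1$, $b_i=\underline b_i=0.01$, and $g(x)=Lx$ with $Lc=0.9$. Then H\ref{assu:lipsg_rt} holds and $g(\GSO\bm y)+\bm b\in[0.019,0.91]^N$, so no clipping occurs.
\item Let $S_{12}=c$, $S_{i1}=c$ for $i\ge2$, and all other entries zero. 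Then $\GSO\in{\cal S}_{\sf ng}$.
\item With $\bm d=\bm y_1-\bm y_2$, one gets $\langle\tilde{\sf Y}(\bm y_1;\GSO)-\tilde{\sf Y}(\bm y_2;\GSO),\bm d\rangle=\bm d^\top(\bm I-L\bm A)\bm d$, where $\bm A=(\GSO+\GSO^\top)/2$.
\item $\bm A$ is a weighted star with $\lambda_{\max}(\bm A)=c\sqrt{1+(N-2)/4}$, so $L\lambda_{\max}(\bm A)\ge0.9\sqrt{1.5}>1$.
\item Take $\bm y_2$ in the interior of ${\cal Y}$ and $\bm y_1=\bm y_2+\epsilon\bm v$, with $\bm v$ a top eigenvector of $\bm A$. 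The inner product is then negative, so no $\mody\in(0,1)$ exists.
\end{itemize}
The same example with $f(x)=Lx$ also breaks Proposition \ref{prop:conv_gllq}.

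Consequently, your first two repair ideas cannot close Step 5:
\begin{itemize}
\item The margin $1-\max_i\underline b_i/a_i$ does not help, since it holds in this example.
\item A weighted norm changes what is being verified, because H\ref{assu:Y} and Proposition \ref{prop:inequality-recursion1} are stated in the unweighted Euclidean norm.
\end{itemize}
What works is a changed hypothesis. Any of the following would do:
\begin{itemize}
\item Your fallback of bounding column sums by $c$ (or taking $\GSO$ symmetric), which gives $\|\GSO\|_2\le c$ and $\mody=L_{g,1}c$.
\item Requiring $L_{g,1}c\sqrt{N-1}<1$.
\item Keeping ${\cal S}_{\sf ng}$, replacing H\ref{assu:Y} by the $\|\cdot\|_\infty$-contraction with factor $L_{g,1}c$, and redoing the proof of Theorem \ref{thm:ttgd} in that norm with $N$-dependent constants.
\end{itemize}
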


The sufficient condition for convergence of TTGD is satisfied for both cases. Subsequently, the ${\cal O}(1/{K})$ convergence rate in Theorem \ref{thm:ttgd} holds for the TTGD applied to \eqref{eq:glfp_ng}.


\begin{Remark}\label{rem:gene_limitation}
\update{For the graph learning formulation \eqref{eq:glfp_gene} considered in Sec.~\ref{subsec:nd}, H\ref{assu:self-F} and H\ref{assu:Y} are violated whenever the system ${\sf Y}({\bm y};\GSO)={\bm 0}$ admits a nontrivial solution ${\bm y}\neq {\bm 0}$ alongside the persisting trivial solution ${\bm y}={\bm 0}$. Since the regime with nontrivial stationary states is precisely the case of interest in GL-GENE, Theorem~\ref{thm:ttgd} does not provide a convergence guarantee for TTGD when it is applied to \eqref{eq:glfp_gene}.
Nevertheless, the numerical results in Sec.~\ref{sec:num-gene} suggest that TTGD can still produce meaningful graph estimates in this more challenging regime, which is beyond the scope of our current analysis. This suggests that TTGD still provides an approximate descent direction when tackling \eqref{eq:glfp_gene}, even when the lower-level dynamics are nonlinear and admit multiple equilibrium branches.
}
\end{Remark}

\section{Structural Interpretation for \texorpdfstring{${\cal R}(S)$}{R(S)}} \label{sec:structural_interpretation}
This section investigates the relationship between functional priors and structural priors on graph learning. 
\update{We note that as \eqref{eq:GL_bilevel:UNI} gives a \emph{biased} graph estimator with $\lambda > 0$, even when $M \gg 1$, an optimal solution to \eqref{eq:GL_bilevel:UNI} may not coincide with the true underlying graph. Thus, it may be futile to discuss exact recoverability with \eqref{eq:GL_bilevel:UNI}. In this section,
we focus on analyzing the optimal solutions to graph learning problems that are regularized by functional priors and suggest the types of graph structure they induce.}
As bilevel optimization problems are often difficult to analyze, we concentrate on the two case studies in Sec.~\ref{sec:prob} and consider the approximations to the latter.

\vspace{.1cm}
\noindent \textbf{Network Games.} Our first step is to construct an approximation of \eqref{eq:glfp_ng} that is amenable to analysis. Consider the following properties, which are implied by H\ref{assu:lipsf_lq}, H\ref{assu:lipsg_rt} in the previous section:
\begin{enumerate}
    \item The functions $f(\cdot),g(\cdot)$ are $\ell_1$-Lipschitz continuous with $\ell_1<\frac{1}{c}$.
    \item We have $f(y), g(y) > 0$ for any $y > 0$.
    \item For any $\GSO\in{\cal S}_{\sf ng}$, $\bm{y}^{\sf NE}(\GSO)$ is in the interior of ${\cal Y}$.
\end{enumerate}
Observe the following proposition:
\begin{Prop} \label{prop:approx}
Consider the bilevel problem \eqref{eq:glfp_ng}. Under the above conditions, we have
\begin{equation*}
    \|\bm{y}^{\sf NE}(\GSO)\|_2 \leq B^\star=\frac{\sqrt{N}\,\|\bm{b}\|_\infty}{1-c\ell_1}, ~~\forall~\GSO\in{\cal S}_{\sf ng}.
\end{equation*}
Consequently, for any $\GSO\in{\cal S}_{\sf ng}$, we have
\begin{equation}\label{eq:bounds_glng}
	\begin{aligned}
		&{\rm Tr}( \GSO^\top {\bm D} ) + \beta \| \GSO \|_F^2 - \lambda \ell_1 \bm{1}^\top\GSO\bm{b} - \lambda \bm{1}^\top\bar{\bm{c}}\\ 
        & \leq {\rm Tr}( \GSO^\top {\bm D} ) + \beta \| \GSO \|_F^2 - \lambda\bm{1}^\top\bm{y}^{\sf NE}(\GSO)
        \\
		& \leq
		{\rm Tr}( \GSO^\top {\bm D} ) + \beta \| \GSO \|_F^2 - \lambda \mu_1\bm{1}^\top\GSO\bm{b} - \lambda \bm{1}^\top\hat{\bm{c}},
	\end{aligned}	
\end{equation}
where $\bar{\bm c} = \frac{(c \ell_1)^2}{1-c \ell_1}{\| {\bm b} \|_\infty} {\bf 1} + {\bm b}$, $\hat{\bm c} = \frac{(c\mu_1)^2}{1-c\mu_1}{ b^\star} {\bf 1} + {\bm b}$ with $b^\star = \min_{i\in[N]}b_i$. For LQ games, $\mu_1 = \frac{\min_{y\in[b^\star, B^\star]}f(y)}{B^\star}$, while for RT games, $\mu_1 = \frac{\min_{y\in[b^\star, B^\star]}g(cy)}{cB^\star}$. 
\end{Prop}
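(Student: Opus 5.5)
Since $\bm{y}^{\sf NE}(\GSO)$ lies in the interior of ${\cal Y}$ (condition 3), the projection onto ${\cal Y}_i$ in the best response is inactive, and the first-order condition of \eqref{eq:lq_game} (resp.\ \eqref{eq:rt_game}) gives the fixed-point equation
\[
y_i = b_i + \textstyle\sum_j S_{ij} f(y_j) \quad\text{(LQ)},\qquad y_i = b_i + g\big(\textstyle\sum_j S_{ij} y_j\big)\quad\text{(RT)}.
\]
We will work entirely with this equation and with the structure of ${\cal S}_{\sf ng}$: $\GSO\ge 0$ and $\GSO\bm{1}=c\bm{1}$, so $\|\GSO\|_\infty = c$ and $\GSO$ is $c$ times a row-stochastic matrix.

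\textbf{Norm bound.} Using $f(0)=g(0)=0$ and $\ell_1$-Lipschitzness, we have $0\le f(y)\le \ell_1 y$ for $y\ge0$, and similarly $g(x)\le \ell_1 x$. Taking the $\infty$-norm of the fixed-point equation gives $\|\bm y\|_\infty \le \|\bm b\|_\infty + c\ell_1\|\bm y\|_\infty$, hence $\|\bm y\|_\infty\le \|\bm b\|_\infty/(1-c\ell_1)$. Multiplying by $\sqrt N$ yields $B^\star$.

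\textbf{Lower bound on $\bm 1^\top\bm y$ (the first inequality).} We iterate the fixed-point relation once more. For LQ, $\bm y = \bm b + \GSO f(\bm y)$ with $f(\bm y)\le \ell_1\bm y$ componentwise, so
\[
\bm y \le \bm b + \ell_1\GSO \bm y = \bm b + \ell_1 \GSO\bm b + \ell_1\GSO(\bm y-\bm b).
\]
Then $\bm y - \bm b = \GSO f(\bm y) \le \ell_1 \GSO\bm y$, whose entries are at most $c\ell_1\|\bm y\|_\infty \le \frac{c\ell_1}{1-c\ell_1}\|\bm b\|_\infty$. Applying $\ell_1\GSO$ to this bound contributes at most $c\ell_1\cdot\frac{c\ell_1}{1-c\ell_1}\|\bm b\|_\infty$ per entry. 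Summing gives $\bm 1^\top\bm y\le \ell_1\bm 1^\top\GSO\bm b + \bm 1^\top\bar{\bm c}$, and multiplying by $-\lambda$ gives the lower inequality in \eqref{eq:bounds_glng}. The RT case is the same with $g(\GSO\bm y)\le \ell_1\GSO\bm y$.

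\textbf{Upper bound (the second inequality).} This is the main obstacle, since we need a \emph{lower} linear bound on a nonlinear $f$ or $g$. First, $\bm y\ge \bm b\ge b^\star\bm 1$ because $f,g\ge0$. Also $\bm y\le B^\star\bm 1$. On $[b^\star,B^\star]$, the bound $f(y)\ge \min_{[b^\star,B^\star]} f \ge \mu_1 y$ holds by the definition of $\mu_1$. Note that for $y\le B^\star$ we have $\min f = \mu_1 B^\star\ge \mu_1 y$; this is exactly why $\mu_1$ is normalized by $B^\star$. For RT, $(\GSO\bm y)_i\in[cb^\star,cB^\star]$, so $g((\GSO\bm y)_i)\ge \min_{y\in[b^\star,B^\star]} g(cy) = \mu_1 cB^\star\ge \mu_1(\GSO\bm y)_i$. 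In both cases $\bm y\ge \bm b+\mu_1\GSO\bm y$, and we mirror the previous step: $\bm y\ge \bm b+\mu_1\GSO\bm b+\mu_1\GSO(\bm y-\bm b)$. Here $\bm y-\bm b\ge \mu_1\GSO\bm y\ge c\mu_1 b^\star\bm 1$. Iterating this geometric lower bound, via $\bm y\ge b^\star/(1-c\mu_1)$ obtained from the same recursion, gives the $\frac{(c\mu_1)^2}{1-c\mu_1}b^\star$ term in $\hat{\bm c}$. The care needed is in checking that $c\mu_1<1$ (true since $\mu_1\le\ell_1$) and in keeping the componentwise monotonicity valid because $\GSO\ge0$.
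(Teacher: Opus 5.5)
Your proof is correct and follows essentially the paper's route. You reduce to the unconstrained fixed-point equation, sandwich $\bm{b}+\mu_1\GSO\bm{y}\le\bm{y}\le\bm{b}+\ell_1\GSO\bm{y}$ componentwise, and unroll; the one variation is that you stop after one step and absorb the tail through the a priori bounds $\|\bm{y}\|_\infty\le\|\bm{b}\|_\infty/(1-c\ell_1)$ and $\min_i y_i\ge b^\star/(1-c\mu_1)$, whereas the paper sums the Neumann series using $(\GSO^k\bm{b})_i\le c^k\|\bm{b}\|_\infty$ (resp.\ $\ge c^k b^\star$). You also verify that the stated $\mu_1$ satisfies the lower sandwich (via $\bm{y}\in[b^\star,B^\star]^N$ and $\mu_1\le\ell_1$), which the paper only asserts.
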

\noindent The proof is relegated to Appendix~\ref{pf:approx_glng}. 
Eq.~\eqref{eq:bounds_glng} implies that \eqref{eq:glfp_ng} can be approximated by 
\begin{equation} \label{eq:glfp_ng-app} \textstyle
	\min_{ \GSO \in {\cal S}_{\sf ng} } {\rm Tr}( \GSO^\top {\bm D} ) + \beta \| \GSO \|_F^2 - \tilde{\lambda} {\bf 1}^\top \GSO {\bm b} \tag{GL-NG-App}
\end{equation}
for some $\tilde{\lambda} > 0$ proportional to $\lambda$. Importantly, \eqref{eq:glfp_ng-app} is a convex program with affine constraints. Analyzing the KKT conditions of \eqref{eq:glfp_ng-app} leads to
\begin{Prop} \label{prop:kkt}
There exists an $\bm{\eta} \in \RR^N$ such that any optimal solution to Problem \eqref{eq:glfp_ng-app} is given by 
\[
    S_{ij}^\star = {\textstyle \frac{1}{2 \beta}} \max \Big\{ 0, \tilde{\lambda} b_j + \eta_i -D_{ij} \Big\}.
\]
for any $i \neq j$ and $S_{ii}^\star = 0$.
It also holds that $\GSO^\star {\bf 1} = c {\bf 1}$.
\end{Prop}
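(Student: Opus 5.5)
The plan is to write down the KKT system of \eqref{eq:glfp_ng-app} and solve it entrywise. First, the problem is well posed. The objective ${\rm Tr}(\GSO^\top{\bm D}) + \beta\|\GSO\|_F^2 - \tilde\lambda {\bf 1}^\top\GSO{\bm b}$ is $2\beta$-strongly convex in $\GSO$. The feasible set ${\cal S}_{\sf ng}$ is a nonempty polyhedron: for instance, $\frac{c}{N-1}({\bf 1}{\bf 1}^\top - {\bm I}) \in {\cal S}_{\sf ng}$. Hence the minimizer exists and is unique. All constraints are affine, so Slater-type qualification holds trivially, and the KKT conditions are necessary and sufficient for optimality.

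Next we form the Lagrangian. Eliminate the diagonal entries by setting $S_{ii}=0$. We attach a multiplier $\eta_i$ to each row-sum constraint $\sum_{j\neq i} S_{ij} = c$ and a multiplier $\nu_{ij}\ge 0$ to each constraint $S_{ij}\ge 0$. Note that ${\bf 1}^\top\GSO{\bm b} = \sum_{i,j} S_{ij} b_j$, so the gradient of the objective in $S_{ij}$ is $D_{ij} + 2\beta S_{ij} - \tilde\lambda b_j$. The sign of $\eta$ is chosen so that stationarity reads
\[
D_{ij} + 2\beta S_{ij} - \tilde\lambda b_j - \eta_i - \nu_{ij} = 0,\qquad i\neq j,
\]
with complementary slackness $\nu_{ij} S_{ij} = 0$.

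Then we solve for each entry by splitting into two cases. If $S_{ij}>0$, then $\nu_{ij}=0$, so $S_{ij} = \frac{1}{2\beta}(\tilde\lambda b_j + \eta_i - D_{ij})$, and this quantity is positive. If $S_{ij}=0$, then $\nu_{ij} = D_{ij} - \tilde\lambda b_j - \eta_i \ge 0$, so $\tilde\lambda b_j + \eta_i - D_{ij} \le 0$. Together the two cases give exactly $S^\star_{ij} = \frac{1}{2\beta}\max\{0, \tilde\lambda b_j + \eta_i - D_{ij}\}$. The identity $\GSO^\star{\bf 1} = c{\bf 1}$ is simply primal feasibility.

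The main subtlety is the existence of the multiplier vector $\bm\eta$, and the fact that a single $\bm\eta$ serves every optimal solution. Existence follows from the KKT necessity guaranteed by linear constraints. Uniqueness of the optimal solution, from strong convexity, makes the phrase "any optimal solution" unambiguous. One could also argue constructively, row by row: for each $i$, the map $\eta_i \mapsto \sum_{j\neq i}\frac{1}{2\beta}\max\{0,\tilde\lambda b_j+\eta_i-D_{ij}\}$ is continuous and nondecreasing. It is strictly increasing once positive, and it ranges from $0$ to $\infty$. So it attains the value $c>0$ at a unique $\eta_i$, which recovers $\bm\eta$ explicitly as a water-filling level.
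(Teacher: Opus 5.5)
Your proposal is correct and follows the paper's own route. Like the paper, you form the Lagrangian with multipliers for the row-sum and nonnegativity constraints and read off the stationarity condition $D_{ij}+2\beta S_{ij}-\tilde\lambda b_j-\eta_i-\nu_{ij}=0$, then split into the cases $S_{ij}>0$ and $S_{ij}=0$ via complementary slackness. Your extra remarks supply justifications the paper leaves implicit: strong convexity gives a unique minimizer, so ``any optimal solution'' is unambiguous; the affine constraints make the KKT conditions necessary; and $\eta_i$ is pinned down as a water-filling level.
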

\noindent The proof is in Appendix~\ref{pf:kkt}.

Together, Propositions \ref{prop:approx} and \ref{prop:kkt} show that when $\tilde{\lambda} \gg 1$, an optimal solution to \eqref{eq:glfp_ng} can be approximated as $S_{ij}^\star \approx \frac{\tilde{\lambda}}{2\beta} b_j$ for any $i, j \in V$. As the marginal benefits $\bm{b}$ may vary between agents in practice, any optimal solution to \eqref{eq:glfp_ng-app} (and thus \eqref{eq:glfp_ng}) gives a graph topology that exhibits a \emph{`multiple hub' structure} where the majority of edges will be emanating from nodes with large $b_j$. This observation coincides with the human-made network examples studied in Table~\ref{table:ReaNetWkRewir}, where we observe a number of `hub' nodes.

\update{
Proposition~\ref{prop:kkt} provides a structural characterization of the biased estimator induced by the functional prior, rather than a guarantee of exact topology recovery. 
When $\lambda > 0$, the functional prior introduces a task-aligned bias, so the learned graph should be understood as balancing data fidelity and the prescribed functional objective. In particular, Proposition~\ref{prop:kkt} explains why hub-promoting structures emerge in the large-$\lambda$ regime,
which is consistent with the network-formation viewpoint underlying the functional prior. 
}

\vspace{.1cm}
\noindent \textbf{Network Dynamics.} Similar to the previous case, we proceed to derive an approximation of \eqref{eq:glfp_gene}. 
\begin{Prop} \label{prop:approx_GLgene}
Consider the problem \eqref{eq:glfp_gene} and denote its optimal value by $\upsilon_{\sf GENE}^\star$. We have \vspace{-.1cm}
\begin{equation}\label{eq:glfp_gene_approx}
\upsilon_{\sf GENE}^\star \geq 
\min_{\GSO \in {\cal S}_{\sf nd}} ~ \|\GSO \boldsymbol{X} + \boldsymbol{P}\|_F^2 + \beta \|\GSO\|_F^2 - \hat{\lambda} {\bm 1}^\top \GSO^2 {\bm 1} - \frac{N\lambda}{2}, \tag{GL-GENE-App} \vspace{-.1cm}
\end{equation}
where $\hat{\lambda} = \frac{\sigma\lambda}{8}$.
\end{Prop}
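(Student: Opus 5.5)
The plan is to bound the lower-level term of \eqref{eq:glfp_gene} pointwise by a quantity that depends only on $\GSO$. Then every feasible pair $(\GSO,\bm y)$ of \eqref{eq:glfp_gene} has objective value at least the objective of \eqref{eq:glfp_gene_approx} evaluated at the same $\GSO\in{\cal S}_{\sf nd}$. Taking the infimum over feasible pairs then gives $\upsilon^\star_{\sf GENE}\geq$ the right-hand side. Since the data-fitting and Frobenius terms are shared by both problems, it suffices to show that for every $\GSO\in{\cal S}_{\sf nd}$ and every $\bm y\in{\cal Y}^\star(\GSO)$,
\[
\sum_{i=1}^N \frac{1}{1+e^{-\sigma y_i}} \;\leq\; \frac{N}{2} + \frac{\sigma}{8}\,{\bm 1}^\top \GSO^2 {\bm 1}.
\]

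\emph{Step 1 (nonnegativity of stationary states).} Write $h(t)=t^2/(t^2+1)\in[0,1)$. Since $\GSO\geq \bm 0$ on ${\cal S}_{\sf nd}$, the stationarity equations $y_i=\sum_j S_{ij}h(y_j)$ in \eqref{eq:gene_LL_stat} force $y_i\geq 0$ for all $i$.

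\emph{Step 2 (linearize the sigmoid).} The logistic function $s(t)=1/(1+e^{-t})$ is concave on $[0,\infty)$ with $s(0)=1/2$ and $s'(0)=1/4$. Hence $s(t)\leq \tfrac12+\tfrac t4$ for $t\geq 0$, which by Step 1 gives
\[
\sum_i s(\sigma y_i)\leq \frac N2+\frac{\sigma}{4}{\bm 1}^\top\bm y.
\]

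\emph{Step 3 (bound ${\bm 1}^\top \bm y$ through $\GSO^2$).} This is the key step. From $h\le 1$ and the stationarity equation, $y_j=\sum_k S_{jk}h(y_k)\leq (\GSO{\bm 1})_j$. From $t^2+1\geq 2t$ for $t\geq0$, we get $h(t)\leq t/2$. Therefore
\[
y_i=\sum_j S_{ij}h(y_j)\leq \frac12\sum_j S_{ij}y_j\leq \frac12\sum_j S_{ij}(\GSO{\bm 1})_j=\frac12(\GSO^2{\bm 1})_i,
\]
using $S_{ij}\geq0$ throughout. Summing over $i$ gives ${\bm 1}^\top\bm y\leq \tfrac12{\bm 1}^\top\GSO^2{\bm 1}$. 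Combined with Step 2, this yields the displayed inequality with constant $\sigma/8=\hat\lambda/\lambda$.

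Multiplying by $-\lambda$ then gives the pointwise lower bound on the objective, and the minimization argument above completes the proof. The set ${\cal Y}^\star(\GSO)$ is always nonempty since it contains $\bm 0$, so the infimum over feasible pairs is well defined. The only delicate point is choosing the two bounds on $h$ ($h\leq1$ and $h(t)\leq t/2$) in the right order so that the constant $\tfrac12$ appears, which is what produces $\hat\lambda=\sigma\lambda/8$ rather than $\sigma\lambda/4$.
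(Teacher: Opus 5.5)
Your proposal is correct and follows essentially the same route as the paper. Both arguments use the tangent-line bound $(1+e^{-t})^{-1}\le \tfrac12+\tfrac t4$ for $t\ge 0$, then apply $h(t)\le t/2$ followed by $h\le 1$ through the stationarity equation to get ${\bm 1}^\top{\bm y}\le \tfrac12{\bm 1}^\top\GSO^2{\bm 1}$. Your explicit nonnegativity step and the direct bound $y_j\le(\GSO{\bm 1})_j$ make the argument slightly cleaner than the paper's chain, which leaves ${\bm y}\ge{\bm 0}$ implicit and writes an intermediate $\GSO^2$-sum over $j\neq i$ even though $(\GSO^2)_{ii}$ need not vanish (this slip is harmless for the final inequality).
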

\noindent The proof is relegated to Appendix~\ref{pf:approx_glgene}. 
Proposition \ref{prop:approx_GLgene} gives a one-sided approximation of \eqref{eq:glfp_gene}. This approximation is less tight compared to that of Proposition \ref{prop:approx}. Still, it provides useful insights into the optimal solution to \eqref{eq:glfp_gene}.
\update{Additionally, we remark that the regularization term in \eqref{eq:glfp_gene_approx} is related to the resilience quantity proposed in \cite{gao2016universal}, which is defined as
\[
\beta_{\mathrm{eff}}(\bm S) = \frac{{\bm 1}^\top \bm S d(\bm S)}{{\bm 1}^\top \bm S {\bm 1}}, 
\qquad d(\bm S):=\bm S{\bm 1}.
\]
This proxy reflects a degree-heterogeneity effect at the topology level.
Under ${\bm 1}^\top \bm S {\bm 1} = a$,  $\beta_{\mathrm{eff}}(\bm S)$ is proportional to the regularization term ${\bm 1}^\top \bm S^2 {\bm 1}$ in \eqref{eq:glfp_gene_approx}.
%
}

The single-level problem \eqref{eq:glfp_gene_approx} remains nonconvex in general due to the term $- {\bf 1}^\top \GSO^2 {\bf 1}$. Therefore, we concentrate on the solutions that directly maximize ${\bf 1}^\top \GSO^2 {\bf 1}$, which correspond to the regime with $\lambda \gg 1$.
This observation motivates the structural characterization:
\begin{Prop} \label{prop:gene-interpret}
For any $\GSO^\star \in \argmax_{\GSO \in {\cal S}_{\sf nd}} {\bm 1}^\top \GSO^2 {\bm 1}$, 
there exists a pair $(i^\star,j^\star)$ such that $i^\star \neq j^\star$ and 
\[
S_{ij}^\star = \begin{cases}
    \frac{a}{2}, & \textit{if}~ (i,j) = (i^\star, j^\star)~\text{or}~(j^\star, i^\star), \\
    0, & \textit{otherwise}.
\end{cases} 
\]
\end{Prop}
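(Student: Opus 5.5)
We rewrite the objective as $\bm 1^\top \GSO^2 \bm 1 = \sum_{k} (\bm 1^\top \GSO)_k (\GSO\bm 1)_k = \sum_k c_k r_k$. Here $r_k = \sum_j S_{kj}$ is the out-row sum and $c_k=\sum_i S_{ik}$ is the column sum of node $k$. The feasible set ${\cal S}_{\sf nd}$ only fixes the total mass $\sum_k r_k=\sum_k c_k = a$, with $\GSO\ge \bm 0$ and zero diagonal. The plan is to first bound the objective by $a^2/4$, and then show that equality forces the claimed two-entry structure.

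\textbf{Upper bound.} Write $\bm 1^\top\GSO^2\bm 1=\sum_{i,j,l} S_{ij}S_{jl}$. Because the diagonal is zero, every pair $(S_{ij},S_{jl})$ contributing a nonzero product has $i\ne j$ and $j\ne l$. Partition the entry indices $(i,j)$ with $i\neq j$ into two classes by an ordered structure, and bound the sum of products of pairs by $(\text{mass of class A})(\text{mass of class B})\le a^2/4$ via AM-GM. The first class choice to try is the following. For each $k$, $c_k r_k \le \tfrac14 (c_k+r_k)^2$, but this only gives $\sum_k (c_k+r_k)^2/4$, which is too weak. Instead, use that $c_k$ and $r_k$ involve disjoint entries, since $S_{kk}=0$. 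Then $\sum_k c_k r_k = \sum_{(i,k),(k,l)} S_{ik}S_{kl}$. This is a bilinear form $\bm s^\top \bm M\bm s$ over the edge-weight vector $\bm s\ge 0$, $\bm 1^\top\bm s=a$, with $M_{(i,k),(k',l)}=\mathbb 1[k=k']$.

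Maximizing a quadratic form over the simplex reduces to a Motzkin--Straus type argument. The symmetrized matrix $(\bm M+\bm M^\top)/2$ is the adjacency of the line digraph, in which edge $(i,k)$ is linked to $(k,l)$. By Motzkin--Straus, the maximum of $\bm s^\top A\bm s$ over $\{\bm s\ge 0,\bm 1^\top \bm s=a\}$ for a 0/1 adjacency $A$ with half-weights is $a^2(1-1/\omega)/2$-like, where $\omega$ is governed by cliques. So the key step is to identify the largest clique in this graph.

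\textbf{Clique identification.} Two edges $e_1=(i,k)$ and $e_2=(k',l)$ are adjacent iff head$(e_1)$=tail$(e_2)$ or vice versa. We claim a clique has size at most $2$, namely $\{(i,j),(j,i)\}$. For three edges pairwise chained, an odd-cycle or counting argument on heads and tails should yield a contradiction with zero diagonal, or else force a directed triangle $(i,j),(j,l),(l,i)$. That triangle is pairwise adjacent, and it is the dangerous case. This step is the main obstacle. With the triangle, $\bm s$ uniform gives objective $3\cdot(a/3)^2=a^2/3>a^2/4$. So we must carefully check the triangle case against the claimed maximizer, rather than rule it out. Note that for edges $(i,j),(j,i)$ with mass $a/2$ each, $\bm 1^\top\GSO^2\bm 1 = 2(a/2)^2=a^2/2$, because both products $S_{ij}S_{ji}$ and $S_{ji}S_{ij}$ count. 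The triangle gives only $a^2/3$, since each pair is counted once. So the correct weighting is that reciprocal pairs count double. Hence we directly prove $\sum_{k} c_k r_k\le a^2/2$ by the following argument. $\sum_k c_kr_k \le \max_k(\cdot)$, and $\sum_k c_k r_k \le \tfrac12\sum_k \tfrac{(c_k+r_k)^2}{2}$ fails to be sharp. The cleaner route is $\sum_k c_k r_k\le \sum_k \tfrac{(c_k+r_k)^2}{4}$ together with $\sum_k (c_k+r_k)=2a$. Each edge contributes to exactly two nodes' degrees $d_k=c_k+r_k$. By convexity, $\sum_k d_k^2/4$ is maximized at extreme points, and a single edge's mass $a$ at one pair gives $d_i=d_j=a$ and value $a^2/2$.

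\textbf{Equality case.} Equality in $c_kr_k\le d_k^2/4$ requires $c_k=r_k$ for every $k$. Equality in the convex bound requires all mass concentrated on two nodes $i^\star,j^\star$ (with $d_k\le a$ for all $k$ and $\sum_k d_k^2\le a\sum_k d_k=2a^2$, with equality iff each $d_k\in\{0,a\}$). Together with the zero diagonal, these force $S_{i^\star j^\star}=S_{j^\star i^\star}=a/2$. The attainment at $a^2/2$ is checked directly.
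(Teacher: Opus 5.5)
Your final argument is exactly the paper's proof, and it is correct. It runs: $c_k r_k \le d_k^2/4$; $d_k = c_k + r_k \le a$ because row $k$ and column $k$ share no nonzero entry when $S_{kk}=0$; $\sum_k d_k = 2a$, so $\sum_k c_k r_k \le \tfrac{a}{4}\sum_k d_k = a^2/2$, with equality forcing $c_k = r_k$ and $d_k \in \{0,a\}$ for every $k$. The opening target $a^2/4$, the remark that $\sum_k d_k^2/4$ is ``too weak,'' and the Motzkin--Straus/clique detour are wrong or unnecessary (the true maximum is $a^2/2$), and you should delete them, since nothing in the final argument uses them.
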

\noindent The proof can be found in Appendix~\ref{pf:1s21}. In other words, when $\lambda \gg 1$, every optimal solution to \eqref{eq:glfp_gene_approx} corresponds to a graph with only one bidirected edge.

As with Proposition~\ref{prop:kkt}, this result should be interpreted as a structural characterization of the prior-dominated estimator in the large-$\lambda$ regime, rather than as a claim that the underlying GRN topology is recovered by such a one-edge structure.
In practice, the regularization parameter $\lambda$ is moderate, so the actual estimator balances data fidelity against the resilience-promoting bias rather than collapsing to the limiting one-edge solution.


\subsection{Functional Priors as Generalized Graph Filter Priors}
Our last endeavor is to interpret the functional priors via the conduit of generalized graph filter priors. In particular, we show that any polynomial regularizer of the GSO $\GSO$ can be written as a functional prior regularizer. 

To facilitate our discussion, we first introduce the notion of graph filter (GF) regularizers for use in \eqref{eq:graph_learn_prob} to induce structural properties in the learned graph. Formally
\begin{Def}
    A graph filter (GF) regularizer ${\cal R}_{\sf GF}:\RR^{N\times N}\rightarrow\RR$ is \vspace{-.1cm}
    \begin{equation*}
        {\cal R}_{\sf GF}(\GSO) = f\circ H(\GSO).
    \end{equation*}
    Here, $f:\RR^{N\times N}\rightarrow\RR$ is a linear operator and $H:\RR^{N\times N}\rightarrow\RR^{N\times N}$ is a linear graph filter defined as $H(\GSO) = h_0\bm{I} + \sum_{i = 1}^p h_i\GSO^i$, where $p$ is the filter order and $\{h_i\}_{i = 0}^p$ are the filter's coefficients  \cite{ramakrishna2020user}.
\end{Def}
The class of GF regularizers is general enough to cover most of the structural regularizers. We consider two examples. 
\begin{Exa}
    The structural regularizer ${\cal R}(\GSO) = \| \GSO \|_F^2$ is commonly used in graph learning formulations \cite{kalofolias2016learn}. In this case, $f(\cdot) = \tr(\cdot)$ and $H(\GSO) =  \GSO^\top\GSO$.
\end{Exa}
\begin{Exa}
    The spectral regularizer ${\cal R}(\GSO)$ is designed to regulate the clustering properties of a graph and can be expressed as a function  of singular values of $\GSO$ (e.g. \eqref{eq:RS_spect}) \cite{kumar2020unified}. Note that for a symmetric GSO, $\GSO$ admits an eigen-decomposition $\GSO = \bm{U}\bm{\Lambda} \bm{U}^\top$ and $H(\GSO) = \bm{U}(\sum_{i = 1}^Nh_i\bm{\Lambda}^i)\bm{U}^\top$. Hence, a spectral regularizer can be expressed as a GF regularizer with $f(\cdot) = \tr(\cdot)$ and a carefully designed low-pass graph filter $H$ that attenuates the high-pass part of the singular values of $\GSO$.
\end{Exa}
In fact, the approximate forms of the bilevel problems described above, i.e., \eqref{eq:glfp_ng-app} and \eqref{eq:glfp_gene_approx}, are also special cases of GF regularizers. For the term $-\bm{1}^\top\GSO\bm{b}$ in Proposition \ref{prop:approx}, we have $f(\cdot) = -\bm{1}^\top(\cdot)\bm{b}$ and $H(\GSO) = \GSO$. For the term $-\bm{1}^\top\GSO^2\bm{1}$ in Proposition \ref{prop:approx_GLgene}, we have $f(\cdot) = -\bm{1}^\top(\cdot)\bm{1}$ and $H(\GSO) = \GSO^2$.

Finally, we conclude with the following lemma, which shows that any GF regularizer can be expressed as a functional prior regularizer, and the corresponding graph learning problem can be written as a special case of \eqref{eq:GL_bilevel:UNI2}.
\begin{Lemma}\label{lem:gf_fp}
    For any GF regularizer ${\cal R}_{\sf GF} (\GSO) = f\circ H(\GSO)$, there exists a ${\sf Y}:\RR^N\times\RR^{N\times N} \rightarrow \RR^N$ such that for all $\GSO \in \RR^{N\times N}$,
    \begin{align}
    {\cal R}_{\sf GF} (\GSO) =  \bm{1}^\top\bm{y}(\GSO) \label{eq:riesz}
    \end{align}
and $\bm{y}(\GSO)$ is the unique solution to ${\sf Y}(\bm{y};\GSO) = \bm{0}$.
\end{Lemma}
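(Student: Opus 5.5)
The plan is an explicit construction. Since $f:\RR^{N\times N}\to\RR$ is linear, Riesz representation gives a matrix $\bm{F}\in\RR^{N\times N}$ with $f(\bm{A}) = \langle \bm{F},\bm{A}\rangle = \tr(\bm{F}^\top\bm{A})$ for all $\bm{A}$. Hence
\[
{\cal R}_{\sf GF}(\GSO) = \sum_{i=1}^N \bm{F}_{i,:}\, H(\GSO)\, \bm{e}_i .
\]
Here $\bm{F}_{i,:}$ is the $i$-th row of $\bm{F}$ and $\bm{e}_i$ is the $i$-th standard basis vector. Thus ${\cal R}_{\sf GF}(\GSO)$ is a sum of $N$ scalars. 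Equivalently, it equals $\bm{1}^\top$ applied to the vector $\diag(\bm{F}^\top H(\GSO))$.

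Define the target vector
\[
\bm{z}(\GSO) := \diag\big(\bm{F}^\top H(\GSO)\big) \in \RR^N ,
\]
so that $\bm{1}^\top \bm{z}(\GSO) = \tr(\bm{F}^\top H(\GSO)) = {\cal R}_{\sf GF}(\GSO)$. Then take the simplest stationary map,
\[
{\sf Y}(\bm{y};\GSO) := \bm{y} - \diag\Big(\bm{F}^\top\big(h_0\bm{I} + \textstyle\sum_{i=1}^p h_i \GSO^i\big)\Big).
\]
This map is affine in $\bm{y}$ with Jacobian $\bm{I}$. For every $\GSO$, the equation ${\sf Y}(\bm{y};\GSO)=\bm{0}$ therefore has exactly one solution, $\bm{y}(\GSO)=\bm{z}(\GSO)$, and \eqref{eq:riesz} follows immediately.

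Beyond the literal statement, I would also check that this ${\sf Y}$ fits the TTGD framework, so that the lemma genuinely embeds GF-regularized learning into \eqref{eq:GL_bilevel:UNI2}. Take $\hat{\cal R}(\bm{y}) = \bm{1}^\top\bm{y}$ and ${\cal Y}=\RR^N$.

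\begin{itemize}
\item H\ref{assu:self-F} holds trivially, since ${\sf F}(\bm{y};\GSO) = \bm{z}(\GSO)$ is constant in $\bm{y}$ and hence a contraction with modulus $0$.
\item H\ref{assu:Y} holds for any $\mody\in(0,1)$, because ${\sf Y}(\bm{y}_1;\GSO)-{\sf Y}(\bm{y}_2;\GSO) = \bm{y}_1-\bm{y}_2$.
\item H\ref{assu:Y_grd} holds on the compact set ${\cal S}$, because $\GSO\mapsto\bm{z}(\GSO)$ is a polynomial map.
\end{itemize}

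The only subtle point is the case where a regularizer is written with a non-polynomial inner map. An example is $\|\GSO\|_F^2 = \tr(\GSO^\top\GSO)$, which involves $\GSO^\top$. The lemma as stated only requires the polynomial form $H$, so I would note this. The same construction still works for any smooth $H$, since uniqueness never depends on $H$. No real obstacle is expected; the main care is simply in stating the Riesz representation of $f$ correctly.

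If a less trivial lower-level problem is desired, for instance one mimicking network dynamics, one could instead use $\bm{y} = \alpha\,\diag(\bm{F}^\top \GSO\,\mathrm{Diag}(\bm{y})) + \text{const}$, or a Neumann-series fixed point for $(\bm{I}-\GSO)^{-1}$. That is optional, however, and I would keep the direct construction for the proof.
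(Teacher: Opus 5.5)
Your construction is exactly the paper's proof: represent $f$ by a matrix via Riesz, set ${\sf Y}(\bm y;\cdot) = \bm y - {\rm diag}(\bm F^\top H(\cdot))$, and read off uniqueness and $\bm 1^\top \bm y = {\rm tr}(\bm F^\top H) = f(H)$ from the fact that ${\sf Y}$ is the identity in $\bm y$ minus a $\bm y$-independent term. One harmless slip: your intermediate sum $\sum_i \bm F_{i,:} H \bm e_i$ equals ${\rm tr}(\bm F H)$, not ${\rm tr}(\bm F^\top H)$ (you want the columns $\bm F_{:,i}^\top$). Your actual definition via ${\rm diag}(\bm F^\top H)$ is correct, so nothing downstream is affected.
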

\begin{proof}
    As $f:\RR^{N\times N}\rightarrow\RR$ is a linear operator, the Riesz representation theorem \cite{yosida2012functional} shows that there exists a constant matrix $\bm{Z}\in\RR^{N\times N}$ such that $f( H(\GSO) ) = \tr(\bm{Z}^\top H(\GSO) ) = \bm{1}^\top\bm{y}$, where $\bm{y} = {\sf diag}(\bm{Z}^\top H(\GSO) )$. Subsequently, we may choose ${\sf Y}(\cdot;\GSO)$ as the mapping ${\sf Y}(\bm{y};\GSO) = \bm{y} - {\sf diag}(\bm{Z}^\top H(\GSO))$. Observe that the desired condition \eqref{eq:riesz} holds.
\end{proof}
\vspace{-.5cm}

\section{Numerical Experiments}
This section presents numerical experiments to validate the effectiveness of the proposed graph learning method using functional priors induced by network games in Sec.~\ref{subsec:ng} and gene network dynamics in Sec.~\ref{subsec:nd}. \update{Throughout, we use the area under the ROC curve (AUC) as the primary metric for evaluating topology recovery. When relevant, we report the welfare and resilience values to illustrate how the functional priors influence the learned graph structures. Unless otherwise stated, TTGD is terminated when
\[
\frac{\|\GSO^{k+1}-\GSO^k\|_F}{\max\{1,\|\GSO^k\|_F\}} \le 10^{-6}.
\]
The parameter $\beta$ is selected from the grid $\{1,5,10,50,100,200\}$ based on the validation AUC, separately for each experiment. The step sizes $(\alpha,\gamma)$ are tuned for stable TTGD updates, while $c$ is fixed to satisfy the feasibility and contraction requirement in the analysis.}

\vspace{-.2cm}

\subsection{Network Games}\vspace{-.1cm}
In this subsection, we focus on the case with priors induced by network games in \eqref{eq:glfp_ng}. 
 
\vspace{.1cm}
\noindent $\bullet$ {\bf Synthetic Data.} 
Our first experiment evaluates the graph learning performance using synthetic random graphs and synthetic graph signals. We generate ${\cal G}$ as
a preferential attachment (PA) graph with $N = 50$ nodes and one edge to attach for every new node. 
The probability of attaching to an existing node is proportional to its degree, normalized over the total degree, i.e., $ P(i) = {d_i}/{\sum_j d_j}$. Additionally, we generate ${\cal G}$ as an Erdős–Rényi (ER) graph with $N=50$ nodes and connection probability of $0.1$.
We concentrate on a scenario with limited data acquired, where only $M = 10 \ll N$ smooth graph signals are observed. Each graph signal is generated via a low pass graph filter as \( \bm{x}_m = \exp({ \GSO /2} ) \bm{u}_m+ {\bm w}_m \), where \( {\bm u}_m \sim \mathcal{N}(\mathbf{0},\I) \) is an i.i.d.~white noise excitation and \update{${\bm w}_m \sim \mathcal{N}(\mathbf{0},\omega^2\I)$} is an i.i.d.~additive noise with $\omega=0.2$. We also fix \( (\beta,c) = (200,0.95) \) for \eqref{eq:glfp_ng}.

\begin{figure}[t]
\begin{center}
\resizebox{0.95\linewidth}{!}{\definecolor{mycolor1}{rgb}{0.00000,0.44700,0.74100}%
\definecolor{mycolor2}{rgb}{0.85000,0.32500,0.09800}%
\definecolor{mycolor3}{rgb}{0.92900,0.69400,0.12500}%
\definecolor{mycolor4}{rgb}{0.49400,0.18400,0.55600}%
\definecolor{mycolor5}{rgb}{0.46600,0.67400,0.18800}%
\definecolor{mycolor6}{rgb}{0.30100,0.74500,0.93300}%
\definecolor{mycolor7}{rgb}{0.63500,0.07800,0.18400}%
\begin{tikzpicture}
\begin{groupplot}[group style={group name=myplot,group size=2 by 2,horizontal sep=1.5cm, vertical sep=1.5cm}]
\nextgroupplot[%
width=4cm,
height=4cm,
scale only axis,
xmin=0,
xmax=30,
xlabel style={font=\color{white!15!black}},
xlabel={\Large Reg.~param.~$\lambda$},
ymin=0.6,
ymax=.95,
ylabel style={font=\color{white!15!black}},
ylabel={\Large AUC},
xtick={0,5,10,15,20,25,30}, 
axis background/.style={fill=white},
xmajorgrids,
ymajorgrids,
legend pos = north west,
legend style={
legend cell align=left, align=left, 
  draw opacity=0.8,
  text opacity=1,draw=white!15!black,font=\large},
]

\addplot [color=mycolor1, dashdotted, line width=2.0pt]
  table[row sep=crcr]{%
0.01	0.732710368910263\\
30	0.732710368910263\\
};\label{plots:Smooth-GL1}

\addplot [color=black, dashed, line width=2.0pt]
  table[row sep=crcr]{%
0.01	0.628264711379973\\
30	0.628264711379973\\
};\label{plots:xiaowen1}

\addplot [color=gray, line width=2.0pt]
  table[row sep=crcr]{%
0.01	0.680347499532703\\
30	0.680347499532703\\
};\label{plots:spec1}

\addplot [color=mycolor2, line width=3.0pt, mark=x, mark options={solid, mycolor2}]
  table[row sep=crcr]{%
0.01	0.732797541164676\\
1	0.758156871824501\\
5	0.810848931162807\\
10	0.862482709986578\\
15	0.897866531291953\\
20	0.91267861391018\\
25	0.877404458869313\\
30	0.778602440143419\\
};\label{plots:bi-x1}

\addplot [color=mycolor3, line width=3.0pt, mark=square, mark options={solid, mycolor3}]
  table[row sep=crcr]{%
0.01	0.732730037893593\\
1	0.750821020748017\\
5	0.765920555149621\\
10	0.767921396285409\\
15	0.77441230947\\
20	0.797793866505804\\
25	0.835454085880815\\
30	0.85103060374858\\
};\label{plots:bi-log1}

\addplot [color=mycolor4, dashdotted, line width=3.0pt, mark=diamond, mark options={solid, mycolor4}]
  table[row sep=crcr]{%
0.01	0.732714829478836\\
1	0.74904422759945\\
5	0.750766368162586\\
10	0.751004825910381\\
15	0.752758058760557\\
20	0.754593387313294\\
25	0.774134437288655\\
30	0.777176332647964\\
};\label{plots:bi-app1}

\addplot [color=mycolor5, line width=2.0pt, mark=o, mark options={solid, mycolor5}]
  table[row sep=crcr]{%
0.01	0.7327215415725\\
1	0.749949404407892\\
5	0.752878477119408\\
10	0.758428584173054\\
15	0.760464600078167\\
20	0.775513772536493\\
25	0.804405661098746\\
30	0.814308866760694\\
};\label{plots:bi-race1}

\nextgroupplot[%
width=4cm,
height=4cm,
scale only axis,
xmin=0,
xmax=30,
xlabel style={font=\color{white!15!black}},
xlabel={\Large Reg.~param.~$\lambda$},
ymin=18,
ymax=26,
ylabel style={font=\color{white!15!black}},
ylabel={\Large ${\sf Wel}(\GSO)$, $f(x)=x$},
axis background/.style={fill=white},
xmajorgrids,
ymajorgrids,
xtick={0,5,10,15,20,25,30}, 
legend style={legend cell align=left, align=left, draw=white!15!black},
]
\addplot [color=mycolor1, dashdotted,   line width=2.0pt]
  table[row sep=crcr]{%
0.01	18.6559922114244\\
30	18.6559922114244\\
};

\addplot [color=mycolor2, line width=3.0pt, mark=x, mark options={solid, mycolor2}]
  table[row sep=crcr]{%
0.01	18.6559922114244\\
1	18.7187011571116\\
5	18.9949173686688\\
10	19.3869067793036\\
15	19.8326656995996\\
20	20.3362145110314\\
25	20.9527448938633\\
30	25.6417357551255\\
};

\addplot [color=mycolor3, line width=3.0pt, mark=square, mark options={solid, mycolor3}]
  table[row sep=crcr]{%
0.01	18.6555211470622\\
1	18.6701685223811\\
5	18.7309408234496\\
10	18.8094896448317\\
15	18.8905893209423\\
20	18.9749778753209\\
25	19.0617033890861\\
30	19.1504497333683\\
};

\addplot [color=mycolor4, dashdotted, line width=3.0pt, mark=diamond, mark options={solid, mycolor4}]
  table[row sep=crcr]{%
0.01	18.6554069980318\\
1	18.6583487517021\\
5	18.6703295305705\\
10	18.6855369436846\\
15	18.7009121628393\\
20	18.7164857536113\\
25	18.732250098059\\
30	18.7482488725199\\
};

\addplot [color=mycolor5, line width=2.0pt, mark=o, mark options={solid, mycolor5}]
  table[row sep=crcr]{%
0.01	18.6554638482277\\
1	18.6643872653966\\
5	18.7011124991949\\
10	18.7482319433893\\
15	18.7962813319375\\
20	18.8452475479665\\
25	18.8952518208233\\
30	18.9466310830431\\
};

\nextgroupplot[%
width=4cm,
height=4cm,
scale only axis,
xmin=0,
xmax=30,
xlabel style={font=\color{white!15!black}},
xlabel={\Large Reg.~param.~$\lambda$},
ymin=0.6,
ymax=.95,
ylabel style={font=\color{white!15!black}},
ylabel={\Large AUC},
xtick={0,5,10,15,20,25,30}, 
axis background/.style={fill=white},
xmajorgrids,
ymajorgrids,
legend pos = north west,
legend style={
legend cell align=left, align=left, 
  draw opacity=0.8,
  text opacity=1,draw=white!15!black,font=\large},
]
\addplot [color=mycolor1, dashdotted,   line width=2.0pt]
  table[row sep=crcr]{%
0.01	0.806481143505729\\
30	0.806481143505729\\
};

\addplot [color=black, dashed, line width=2.0pt]
  table[row sep=crcr]{%
0.01	 0.6805\\
30	 0.6805\\
};

\addplot [color=gray, line width=2.0pt]
  table[row sep=crcr]{%
0.01	0.6520\\
30	0.6520\\
};

\addplot [color=mycolor2, line width=3.0pt, mark=x, mark options={solid, mycolor2}]
  table[row sep=crcr]{%
0.01	0.806503771909897\\
1	0.809006433233156\\
5	0.818724474138318\\
10	0.829258092629333\\
15	0.837483204200314\\
20	0.84378911293524\\
25	0.842457839231684\\
30	0.820891302100142\\
};

\addplot [color=mycolor3, line width=3.0pt, mark=square, mark options={solid, mycolor3}]
  table[row sep=crcr]{%
0.01	0.806485001866213\\
1	0.807089120435149\\
5	0.809541536871005\\
10	0.812562137573773\\
15	0.815431647508984\\
20	0.818259453462448\\
25	0.820978985032714\\
30	0.823559327043102\\
};

\addplot [color=mycolor4, dashdotted, line width=3.0pt, mark=diamond, mark options={solid, mycolor4}]
  table[row sep=crcr]{%
0.01	0.806481934827433\\
1	0.806598762825022\\
5	0.807083100511462\\
10	0.80766523664601\\
15	0.808280871314789\\
20	0.808904528091311\\
25	0.809506716135221\\
30	0.810061888072538\\
};

\addplot [color=mycolor5, line width=2.0pt, mark=o, mark options={solid, mycolor5}]
  table[row sep=crcr]{%
0.01	0.806482172386475\\
1	0.806848881158415\\
5	0.808354793842855\\
10	0.810269698385675\\
15	0.812079512510925\\
20	0.813837081271249\\
25	0.815575351292969\\
30	0.817292583232222\\
};

\nextgroupplot[%
width=4cm,
height=4cm,
scale only axis,
xmin=0,
xmax=30,
xlabel style={font=\color{white!15!black}},
xlabel={\Large Reg.~param.~$\lambda$},
ymin=18,
ymax=22.5,
ylabel style={font=\color{white!15!black}},
ylabel={\Large ${\sf Wel}(\GSO)$, $f(x)=x$},
axis background/.style={fill=white},
xmajorgrids,
ymajorgrids,
xtick={0,5,10,15,20,25,30}, 
legend style={legend cell align=left, align=left, draw=white!15!black}
]

\addplot [color=mycolor1, dashdotted,   line width=2.0pt]
  table[row sep=crcr]{%
0.01	18.3865334057906\\
30	18.3865334057906\\
};

\addplot [color=mycolor2, line width=3.0pt, mark=x, mark options={solid, mycolor2}]
  table[row sep=crcr]{%
0.01	18.3869617143937\\
1	18.429413087119\\
5	18.6038325309536\\
10	18.8341092677032\\
15	19.0820605066082\\
20	19.3640835789467\\
25	20.0149692155797\\
30	22.2683939160429\\
};

\addplot [color=mycolor3, line width=3.0pt, mark=square, mark options={solid, mycolor3}]
  table[row sep=crcr]{%
0.01	18.3866341581018\\
1	18.3966405593397\\
5	18.4370868418648\\
10	18.4878997691688\\
15	18.5389454058512\\
20	18.5904366543407\\
25	18.6425068331498\\
30	18.6952728593818\\
};

\addplot [color=mycolor4, dashdotted, line width=3.0pt, mark=diamond, mark options={solid, mycolor4}]
  table[row sep=crcr]{%
0.01	18.3865529837061\\
1	18.3883514594875\\
5	18.3956262903679\\
10	18.4047361542458\\
15	18.4138721366439\\
20	18.4230391999775\\
25	18.4322403180333\\
30	18.4414700531902\\
};

\addplot [color=mycolor5, line width=2.0pt, mark=o, mark options={solid, mycolor5}]
  table[row sep=crcr]{%
0.01	18.3865940122696\\
1	18.3926301960531\\
5	18.4170120348085\\
10	18.4475759842478\\
15	18.4782366834609\\
20	18.5089898279431\\
25	18.5398207175854\\
30	18.5708119118544\\
};

\coordinate (top) at (rel axis cs:1,0);
\coordinate (bot) at (rel axis cs:0,0);

\end{groupplot}

\path (top|-current bounding box.north)--
      coordinate(legendpos)
      (bot|-current bounding box.north);
\matrix[
    matrix of nodes,
    anchor=south,
    draw,
    inner sep=0.1em,
    draw,
    font = \small,
  ]at([yshift=1.75ex,xshift=-23ex]legendpos)
  {\ref{plots:Smooth-GL1}& {\Large Smooth-GL \cite{kalofolias2016learn}}&[5pt]
  \ref{plots:bi-app1}& {\Large Linear Approx. by \eqref{eq:glfp_ng-app}} \\
  \ref{plots:xiaowen1}& {\Large QuadGame-GL \cite{leng2020learning}} &[5pt] 
  \ref{plots:spec1}& {\Large SpecTemp \cite{segarra2017network}} \\
  \ref{plots:bi-x1}& {\Large GL-NG $f(x) = x$}&[5pt]
\ref{plots:bi-log1}& {\Large GL-NG $f(x) = \log(1+x)$}\\
\ref{plots:bi-race1}& {\Large GL-NG $g(x) = x(1-x)$}\\
};

\end{tikzpicture}%
}\vspace{-.2cm}
\end{center}
\caption{Performance of \eqref{eq:glfp_ng} using TTGD for learning from PA graphs (top) and ER graphs (bottom). Left: AUC scores. Right: Social welfare values.}\vspace{-.3cm}\label{fig:cr_game}
\end{figure}

\begin{figure}[t]
\begin{center}
\resizebox{0.95\linewidth}{!}{\definecolor{mycolor1}{rgb}{0.00000,0.44700,0.74100}
\definecolor{mycolor2}{rgb}{0.85000,0.32500,0.09800}
\definecolor{mycolor3}{rgb}{0.92900,0.69400,0.12500}
\definecolor{mycolor4}{rgb}{0.49400,0.18400,0.55600}
\definecolor{mycolor5}{rgb}{0.46600,0.67400,0.18800}

\begin{tikzpicture}
\begin{groupplot}[
group style={group name=myplot, group size=2 by 1, horizontal sep=1.8cm},
axis background/.style={fill=white},
xmajorgrids,
ymajorgrids,
tick label style={font=\large},
label style={font=\large},
width=0.45\textwidth,
height=0.34\textwidth,
]

\nextgroupplot[
xlabel={\Large $M$},
ylabel={\Large AUC},
xmin=5,
xmax=80,
xtick={5,10,20,50,80},
ymin=0.67,
ymax=0.97,
]

\addplot [color=mycolor2, line width=3.0pt, mark=x, mark options={solid, mycolor2}]
table[row sep=crcr]{
5   0.773194850000000\\
10  0.858927190000000\\
20  0.878604350000000\\
50  0.908268090000000\\
80  0.958253430000000\\
};\label{plots:m-li}

\addplot [color=mycolor4, dashdotted, line width=3.0pt, mark=diamond, mark options={solid, mycolor4}]
table[row sep=crcr]{
5   0.689689930000000\\
10  0.748726830000000\\
20  0.770638310000000\\
50  0.791291270000000\\
80  0.809225940000000\\
};\label{plots:m-ab}

\addplot [color=mycolor1, dashdotted, line width=2.0pt]
table[row sep=crcr]{
5   0.686193800000000\\
10  0.744114390000000\\
20  0.765605940000000\\
50  0.786562110000000\\
80  0.803116130000000\\
};\label{plots:m-sig}

\addplot [color=mycolor3, line width=3.0pt, mark=square, mark options={solid, mycolor3}]
table[row sep=crcr]{
5   0.703111140000000\\
10  0.766622630000000\\
20  0.789473800000000\\
50  0.809652990000000\\
80  0.832880550000000\\
};\label{plots:m-log}

\addplot [color=mycolor5, line width=2.0pt, mark=o, mark options={solid, mycolor5}]
table[row sep=crcr]{
5   0.696414980000000\\
10  0.757610470000000\\
20  0.780069120000000\\
50  0.800372460000000\\
80  0.820758210000000\\
};\label{plots:m-race}

\nextgroupplot[
xlabel={\Large $N$},
ylabel={\Large AUC},
xmin=30,
xmax=80,
xtick={30,40,50,60,70,80},
ymin=0.72,
ymax=0.93,
]

\addplot [color=mycolor2, line width=3.0pt, mark=x, mark options={solid, mycolor2}]
table{
30 0.91394729
40 0.86991720
50 0.85010383
60 0.82242706
70 0.80043714
80 0.78170213
};\label{plots:n-li}

\addplot [color=mycolor4, dashdotted, line width=3.0pt, mark=diamond, mark options={solid, mycolor4}]
table{
30 0.76734847
40 0.74944262
50 0.74892139
60 0.74268061
70 0.74235594
80 0.73291841
};\label{plots:n-ab}

\addplot [color=mycolor1, dashdotted, line width=2.0pt]
table{
30 0.75740151
40 0.74343938
50 0.74419039
60 0.73898819
70 0.73957210
80 0.73052965
};\label{plots:n-sig}

\addplot [color=mycolor3, line width=3.0pt, mark=square, mark options={solid, mycolor3}]
table{
30 0.79282865
40 0.76938467
50 0.76705046
60 0.75837890
70 0.75498853
80 0.74447552
};\label{plots:n-log}

\addplot [color=mycolor5, line width=2.0pt, mark=o, mark options={solid, mycolor5}]
table{
30 0.77704235
40 0.75847973
50 0.75803174
60 0.75107315
70 0.74943187
80 0.73963026
};\label{plots:n-race}

\end{groupplot}

\matrix[
    matrix of nodes,
    anchor=south,
    draw,
    inner sep=0.15em,
    column sep=0.3cm,
    row sep=0.15cm,
    font=\small,
] at ([yshift=2.0ex]current bounding box.north)
{
\ref{plots:m-sig} & \Large Smooth-GL \cite{kalofolias2016learn} & \ref{plots:m-ab} &\Large  Linear Approx. \\
\ref{plots:m-li} & \Large GL-NG $f(x)=x$ & \ref{plots:m-log} & \Large GL-NG $f(x)=\log(1+x)$ \\
\ref{plots:m-race} &\Large  GL-NG $g(x)=x(1-x)$ & {} & {} \\
};

\end{tikzpicture}}\vspace{-.2cm}
\end{center}
\caption{\update{Performance of TTGD applied to \eqref{eq:glfp_ng} on PA graphs. Left: AUC versus $M$ with $N=50$. Right: AUC versus $N$ with $M=10$. In both panels, $\lambda=10$.}}\vspace{-.3cm}\label{fig:MNfigu}
\end{figure}

We benchmark \eqref{eq:glfp_ng} against four methods: (i) Smooth-GL method  \cite{kalofolias2016learn}, (ii) approximate bilevel problem in \eqref{eq:glfp_ng-app} (referred as {\sf linear approx.}), (iii) QuadGame-GL method \cite{leng2020learning}, which learns network structure from network games model; (iv) SpecTemp method \cite{segarra2017network}, which uses the spectral signature of the covariance matrix.
For \eqref{eq:glfp_ng}, the network games \eqref{eq:lq_game} and \eqref{eq:rt_game} are specified with the normalized marginal benefit vector ${\bm b} = \max({\bm v}_1, 0) / \| \max \{ {\bm v}_1, 0 \} \|_1$, where ${\bm v}_1$ is the top eigenvector of the Euclidean distance matrix ${\bm D}$. The matrix ${\bm D}$ is given by $D_{ij} = \| {\bm x}_i^{\rm row} - {\bm x}_j^{\rm row} \|^2_2$. Additionally, we set ${\bm a} = {\bm 1}$ in \eqref{eq:rt_game}. 

For the linear-quadratic (LQ) game \eqref{eq:lq_game}, we consider two different interaction functions, namely $f(x) = x$ and $f(x) = \log(1 + x)$; for the race \& tournament (RT) game \eqref{eq:rt_game}, we consider the interaction function $g(x)=x(1-x)$. In our experiments, the bilevel optimization problem is tackled using TTGD with the following step sizes. For LQ games with \( f(x) = x \), \( f(x) = \log(1 + x) \), the step sizes are \( (\alpha, \gamma) = (0.5, 0.003) \), while for RT games with \( g(x) = x(1 - x) \), the step sizes are \( (\alpha, \gamma) = (0.1, 0.0005) \). The algorithm terminates after 700 iterations for \( f(x) = x \) and 195 iterations for \( f(x) = \log(1 + x) \). For \( g(x) = x(1 - x)  \), the algorithm terminates after 250 iterations.

Fig.~\ref{fig:cr_game} shows the performance of graph learning algorithms against the regularization parameter $\lambda$, averaged over 20 Monte-Carlo trials. Note that as $\lambda$ increases, the regularized graph learning objective will become more dependent on the network games prior and, according to Proposition~\ref{prop:kkt}, \eqref{eq:glfp_ng} tends to learn a graph topology with few hub nodes. 

We concentrate on the performance of \eqref{eq:glfp_ng}. For PA graphs with hub structures, we observe substantial improvement in AUC as $\lambda$ increases. As a control, for ER graphs that lack obvious hub structures, the improvement in AUC is only modest. The social welfare consistently improves for both graph types as $\lambda$ grows.
These results corroborate that of Proposition~\ref{prop:kkt}, which shows that with $M \ll N$, the functional regularizer in \eqref{eq:glfp_ng} promotes a hub structure and improves the graph learning performance. 
\update{We next examine how the performance varies with the sample size and the network size in Figure~\ref{fig:MNfigu}. The results show a consistent advantage of the proposed GL-NG model with $f(x) = x$: Its AUC improves steadily as $M$ increases, while it remains clearly superior to the competing methods as $N$ increases.}

Our second experiment considers synthetic graph signal observations while focusing on ${\cal G}$ given by the \texttt{Karate Club} graph, which consists of \( N = 34 \) nodes. We have \( M = 50 \) samples of smooth graph signals generated from the Gaussian Markov Random Field (GMRF) model with the precision matrix given by the graph Laplacian \cite{dong2016learning}. Our aim is to showcase the necessity of tackling the bilevel problem \eqref{eq:glfp_ng} using TTGD instead of the single-level optimization approximation \eqref{eq:glfp_ng-app}. Fig.~\ref{fig:nume-pareto} shows the \emph{Pareto fronts} of the bilevel solution and the approximate solution, computed by varying the regularization parameter \( \lambda \) that trades off between the smooth-GL objective $J(\GSO; \bm{X})$ and ${\sf Wel}( \GSO )$. As expected, we observe that the TTGD algorithm achieves a better Pareto front than the approximate solution in all cases.

\begin{figure}[t]
\centering
\resizebox{0.99\linewidth}{!}{\definecolor{mycolor1}{rgb}{0.00000,0.44700,0.74100}%
\definecolor{mycolor2}{rgb}{0.85000,0.32500,0.09800}%
\definecolor{mycolor3}{rgb}{0.92900,0.69400,0.12500}%

\definecolor{mycolor6}{rgb}{0.30100,0.74500,0.93300}%

\begin{tikzpicture}

\begin{groupplot}[group style={group name=myplot,group size=3 by 1}]
\nextgroupplot[%
width=6cm,
height=6cm,
xmin=117.749304433409,
xmax=153.856055193291,
xlabel style={font=\color{white!15!black}},
xlabel={\Large $J( \GSO; {\bm X} )$},
ymin=7,
ymax=10.5,
ylabel style={font=\color{white!15!black}},
ylabel={\Large ${\sf Wel}(\GSO)$},
axis background/.style={fill=white},
xmajorgrids,
ymajorgrids,
legend pos = south east,
legend style={  
fill opacity=0.8,
draw opacity=0.8,
text opacity=0.8,
legend cell align=left, 
align=left, 
draw=white!15!black}]
\addplot [color=mycolor1, line width=2.0pt, mark=o, mark options={solid, mycolor1}]
  table[row sep=crcr]{%
117.755985796927	7.31405422792078\\
117.749304433409	7.47184527708778\\
117.782104858476	8.04305794038479\\
117.940064174086	8.68789807765267\\
118.636172592594	9.21653338208747\\
131.052579911335	9.97051561007478\\
153.856055193291	10.284132671904\\
};
\addlegendentry{GL-NG}

\addplot [color=mycolor1, dashed, line width=2.0pt, mark=o, mark options={solid, mycolor1}]
  table[row sep=crcr]{%
117.755985796927	7.31230134894373\\
117.749304433409	7.34558697423854\\
117.782104858476	7.47979454160738\\
117.940064174086	7.64365280256893\\
118.636172592594	7.95344765258109\\
131.052579911335	9.36391026366383\\
153.856055193291	10.1600951608408\\
};
\addlegendentry{Linear Approx.}

\addplot[
    color=mycolor6,
    line width=2.0pt,
    mark=square,
    mark options={solid, mycolor6}
    ]
    coordinates {
    (117.755985796927,7.31230134894373)
    };
    \addlegendentry{Smooth-GL \cite{kalofolias2016learn}}

\nextgroupplot[%
width=6cm,
height=6cm,
xmin=117.710637041695,
xmax=150.279764294101,
xlabel style={font=\color{white!15!black}},
xlabel={\Large $J( \GSO; {\bm X} )$},
ymin=20,
ymax=40,
ylabel style={font=\color{white!15!black}},
axis background/.style={fill=white},
xmajorgrids,
ymajorgrids,
legend pos = south east,
legend style={  
fill opacity=0.8,
draw opacity=0.8,
text opacity=0.8,
legend cell align=left, 
align=left, 
draw=white!15!black}
]
\addplot [color=mycolor2, line width=2.0pt, mark=o, mark options={solid, mycolor2}]
  table[row sep=crcr]{%
117.710637041695	20.6718312723392\\
118.788928778172	26.8324380236931\\
126.222203303195	37.0975231907485\\
130.790878925626	38.6854207821264\\
146.073861920048	39.751973328749\\
161.279764294101	39.9989700303178\\
};
\addlegendentry{GL-NG}

\addplot [color=mycolor2, dashed, line width=2.0pt, mark=x, mark options={solid, mycolor2}]
  table[row sep=crcr]{%
117.755985796927	20.6200351322813\\
117.749304433409	20.7980319136741\\
117.782104858476	21.5245812471704\\
117.940064174086	22.4298416785578\\
118.636172592594	24.1909329777626\\
131.052579911335	32.9902296181939\\
153.856055193291	38.7289883933107\\
};
\addlegendentry{Linear Approx.}

\addplot[
    color=mycolor6,
    line width=2.0pt,
    mark=square,
    mark options={solid, mycolor6}
    ]
    coordinates {
    (117.755985796927,20.6200351322813)
    };
    \addlegendentry{Smooth-GL \cite{kalofolias2016learn}}

\nextgroupplot[%
xmin=117.699225517511,
xmax=154.645302456487,
width=6cm,
height=6cm,
xlabel style={font=\color{white!15!black}},
xlabel={\Large $J( \GSO; {\bm X} )$},
ymin=5.17636351159447,
ymax=6.64912585436806,
ylabel style={font=\color{white!15!black}},
axis background/.style={fill=white},
xmajorgrids,
ymajorgrids,
legend pos = south east,
legend style={  
fill opacity=0.8,
draw opacity=0.8,
text opacity=0.8,
legend cell align=left, 
align=left, 
draw=white!15!black}
]

\addplot [color=mycolor3, line width=2.0pt, mark=o, mark options={solid, mycolor3}]
  table[row sep=crcr]{%
117.713314851486	5.17664023766534\\
117.699225517511	5.22094516880248\\
117.846480453781	5.38203878775381\\
118.398774222486	5.55012413640242\\
120.280830079352	5.82251725482614\\
131.925769108054	6.35863409077094\\
154.645302456487	6.62554223187532\\
};
\addlegendentry{GL-NG}

\addplot [color=mycolor3, dashed, line width=2.0pt, mark=o, mark options={solid, mycolor3}]
  table[row sep=crcr]{%
117.713314851486	5.17636351159447\\
117.699225517511	5.19437028190476\\
117.846480453781	5.26655287817175\\
118.398774222486	5.35405208825764\\
120.280830079352	5.51792715860902\\
131.925769108054	6.23638959039434\\
154.645302456487	6.60241004323305\\
};
\addlegendentry{Linear Approx.}

\addplot[
    color=mycolor6,
    line width=2.0pt,
    mark=square,
    mark options={solid, mycolor6}
    ]
    coordinates {
    (117.713314851486,5.17636351159447)
    };
    \addlegendentry{Smooth-GL \cite{kalofolias2016learn}}
\end{groupplot}
\end{tikzpicture}%
}\vspace{-.1cm}
 \caption{Comparing the social welfare ${\sf Wel}(\GSO) = {\bf 1}^\top {\bm y}^{\sf NE}(\GSO)$ against the data fidelity term $J( \GSO; {\bm X})$ for \eqref{eq:glfp_ng}. Left: Prior with $f(x) = \log(1+x)$. Middle: Prior with $f(x) = x$. Right: Prior with $g(x) = x(1-x)$. } \vspace{-.3cm} \label{fig:nume-pareto}
\end{figure}

\vspace{.1cm}
\noindent $\bullet$ {\bf Real Data: Case Study I.}
We evaluate our proposed TTGD algorithm on two real-world social network datasets in the limited data setting, characterized by $M\ll N$.

The 
\href{https://www.stats.ox.ac.uk/~snijders/siena/vDuijn_data.htm}{\tt Groningen} dataset contains $M = 13$ signals taken from a friendship network of students in University of Groningen with $N = 38$ nodes in 1996. The signals capture each student's level of interest in social activities (e.g., attending concerts and movies). We set $\beta = 10$ and define $\mathbf{b}$ to encode the study program, where $\text{Program} = 1$ corresponds to the 4-year track and $\text{Program} = 2$ to the 2-year track. Additionally, we normalize $\mathbf{b}$ as ${\mathbf{b} = \mathbf{b} / \mathbf{1}^\top \mathbf{b}}$.
The \href{https://www.stats.ox.ac.uk/~snijders/siena/tutorial2010_data.htm}{\tt Dutch} dataset comprises $N = 26$ nodes and $m = 7$ signals, representing the friendship network of teenagers in a Dutch school in 2003--2004. The signals measure negative behavior, such as the frequency of stealing or alcohol consumption. Again, we set $\beta = 10$ and let $\mathbf{b}$ encode ethnicity, where $\text{Ethnicity}=1$ is Dutch and $\text{Ethnicity}=2$ otherwise. We then normalize $\mathbf{b}$ accordingly.

Table~\ref{tab:perf-gron_dutch} presents the highest AUC values by tuning $\lambda$ for our \eqref{eq:glfp_ng}-based methods, alongside the corresponding social welfare values. We observe that in both datasets, \eqref{eq:glfp_ng}-based methods consistently outperform other benchmark methods in terms of AUC and welfare. Specifically, for {\tt Groningen}, the LQ game induced prior with the interaction $f(x)=x$ achieves the best AUC performance. 
For {\tt Dutch}, the RT game induced prior with the interaction $g(x) = x(1-x)$ gives the best AUC performance. 
Note that we anticipate different real-world network data may inherently have different interaction dynamics. Nevertheless, our \eqref{eq:GL_bilevel:UNI} framework improves performance by incorporating functional priors. 


\begin{table}[t]
\centering
\resizebox{.999\linewidth}{!}{
\begin{tabular}{lcccc}
\hline
\multicolumn{1}{c}{} & \multicolumn{1}{|c|}{\multirow{2}{*}{{\sf AUC}}} & \multicolumn{3}{c}{${\sf Wel}(\GSO)$ by}                  \\ 

\multicolumn{1}{c}{} & \multicolumn{1}{|c|}{} & \multicolumn{1}{c}{$ x$} & \multicolumn{1}{c}{$\log(1+x)$} & \multicolumn{1}{c}{$ x(1-x)$} \\ \hline
   \multicolumn{5}{c}{{\tt Groningen} dataset}                                                                                                                     \\ \hline
 \multicolumn{1}{l|}{ GL-NG ($f(x)= x$)}  &   \multicolumn{1}{c|}{{\bf 0.6357}}                     &{\bf 27.7721}     &{\bf 8.7035} &5.9668      \\ 
 \multicolumn{1}{l|}{ GL-NG ($ f(x)=\log(1+x)$)}  &    \multicolumn{1}{c|}{0.6207}                  &26.5906     &8.6225 &5.9557       \\ 
 \multicolumn{1}{l|}{ GL-NG ($g(x)=x(1-x)$)}  &   \multicolumn{1}{c|}{0.6201}                      &26.5722     &8.6432 &{\bf 5.9739}  \\ 
\hline 
 \multicolumn{1}{l|}{Linear Approx} &\multicolumn{1}{c|}{0.5354}  &20.0577&7.5192 &5.4005 \\ 
 \multicolumn{1}{l|}{Smooth-GL \cite{kalofolias2016learn}} &     \multicolumn{1}{c|}{0.5354}          &20.0566     &7.5190 &5.4003    \\ 
 \multicolumn{1}{l|}{QuadGame-GL \cite{leng2020learning}} &    \multicolumn{1}{c|}{0.5490 }           &14.5312     &5.7325 &4.2091   \\ 
 \multicolumn{1}{l|}{SpecTemp \cite{segarra2017network}} & \multicolumn{1}{c|}{0.4890}                  &19.7591     &7.4576 &5.3658  \\
 \multicolumn{1}{l|}{Glasso \cite{friedman2008sparse}} & \multicolumn{1}{c|}{0.5617}                  &16.6454     &6.4513 &4.6939  \\
  \hline\hline
   \multicolumn{5}{c}{{\tt Dutch} dataset}  
                                                                  \\ \hline
    \multicolumn{1}{l|}{ GL-NG ($f(x)=x$)}  &   \multicolumn{1}{c|}{0.6127}                     &21.3681     &6.7664     &4.6940      \\ 
 \multicolumn{1}{l|}{ GL-NG ($f(x)=\log(1+x)$)}  &    \multicolumn{1}{c|}{0.6159}                  &22.5223     &6.9465  &4.7810        \\ 
 \multicolumn{1}{l|}{ GL-NG ($ g(x)=x(1-x)$)}  &   \multicolumn{1}{c|}{{\bf0.6186}}             &{\bf 23.0427}    &{\bf7.0260}  &{\bf4.8187}   \\ 
\hline 
 \multicolumn{1}{l|}{Linear Approx} &\multicolumn{1}{c|}{{\bf0.6186}}  &21.4615&6.7831  &4.7030   \\ 
 \multicolumn{1}{l|}{Smooth-GL \cite{kalofolias2016learn}} &     \multicolumn{1}{c|}{0.5840}          &18.6902     &6.3571  &4.6357    \\ 
 \multicolumn{1}{l|}{QuadGame-GL \cite{leng2020learning}} &    \multicolumn{1}{c|}{0.6048 }           &18.3942     &5.9083  &4.1413   \\ 
 \multicolumn{1}{l|}{SpecTemp \cite{segarra2017network}} & \multicolumn{1}{c|}{0.4828}           &20.6126         &6.6480     &4.6357   \\
 \multicolumn{1}{l|}{Glasso \cite{friedman2008sparse}} & \multicolumn{1}{c|}{0.5786}                  &18.1852     &5.3394  &2.7934   \\
  \hline
\end{tabular}}
\caption{Comparing the AUC and welfare of different graph learning models on {\tt Groningen} and {\tt Dutch}. } \label{tab:perf-gron_dutch}
\end{table}       

\begin{figure}
    \centering
    \includegraphics[width=0.95\linewidth]{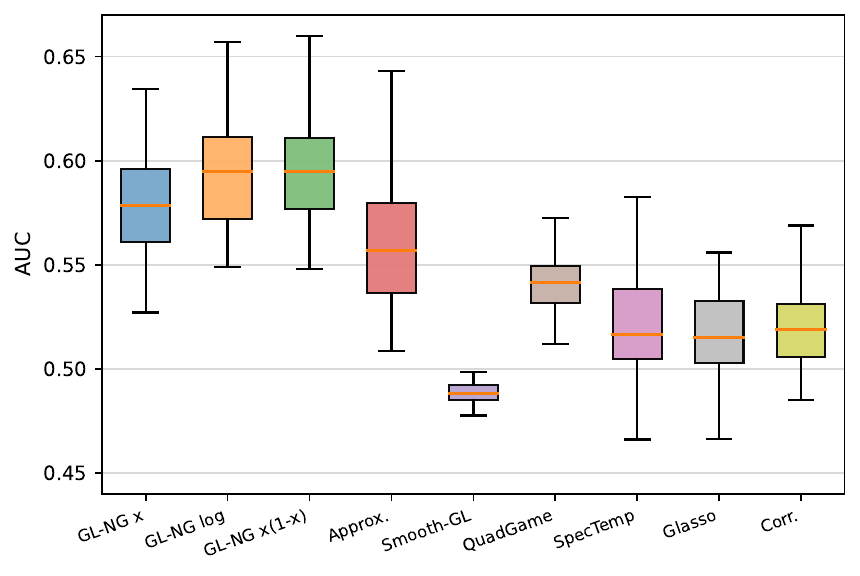}
    \caption{Comparing the AUC performance of the graph learnt from {\tt IndianVillage}\cite{banerjee2013diffusion}. }
    \label{fig:vil_auroc}
\end{figure}


\begin{table}[t]
\centering
\begin{tabular}{lccc}
\toprule
${\sf Wel}( \hat{\GSO} ) - {\sf Wel}( {\GSO}^{\sf true} )$ & \textbf{Maximum} & \textbf{Average} & \textbf{Minimum} \\ 
\midrule
GL-NG ($f(x) = x$)         &   \bfseries 4.2754              &   {\bf 3.2077 }              &  \bfseries 2.1693              \\ 
Linear Approx.  &    -0.4790          &    -1.4241             &     -2.3927            \\ 
Smooth-GL \cite{kalofolias2016learn} &     -2.3288            &     -5.8141            &     -8.6115            \\ 
QuadGame-GL \cite{leng2020learning} &     -1.9386            &     -3.5631            &     -5.5949            \\ 
SpectTemp \cite{segarra2017network}  &     0.2653            &     -2.8122            &     -6.6532            \\ 
Glasso\cite{friedman2008sparse} & -0.7666 & -4.1383  & -10.3377 \\
Correlation & -2.6829  & -3.3693  & -4.4143\\
\midrule
GL-NG ($f(x) = \log(1+x)$)         &   \bfseries 1.4464              &  {\bf 1.1892 }             &  \bfseries 0.8461              \\ 
Linear Approx. &   -0.0676              &    -0.6144              &  -1.1429               \\ 
Smooth-GL \cite{kalofolias2016learn}  &   -0.5440              &   -2.8292              &  -5.0419               \\ 
QuadGame-GL \cite{leng2020learning}  &   -0.8206              &   -1.6814              &  -3.3074                \\ 
SpectTemp \cite{segarra2017network} &     0.1453            &     -1.2226            &     -3.6324            \\
Glasso\cite{friedman2008sparse} & -0.3700 & -2.0347  &  -4.9003 \\
Correlation &  -0.7413  & -1.4614  & -1.9076
\\
\midrule
GL-NG ($g(x) = x(1-x)$)         &   \bfseries 1.0312              &  {\bf 0.7769 }             &  \bfseries 0.5476              \\ 
Linear Approx. &    -0.0292              &    -0.4454              &  -0.9317               \\ 
Smooth-GL \cite{kalofolias2016learn}  &   -0.4165              &    -1.0737              &  -1.7126               \\ 
QuadGame-GL \cite{leng2020learning}  &   -0.5334              &   -1.2174              &  -2.5642                \\ 
SpectTemp \cite{segarra2017network} &     0.1147            &     -0.8539            &      -2.6901            \\
Glasso\cite{friedman2008sparse} & -0.2443 & -1.4959   &  -3.7367 \\
Correlation &   -0.4475  &  -1.0201   & -1.3848\\
  \bottomrule
\end{tabular}
\caption{Gain in social welfare of the graphs learned from the {\tt IndianVillage} data \cite{banerjee2013diffusion}. The three row blocks report welfare gains evaluated under the interaction functions $f(x)=x$, $f(x)=\log(1+x)$, and $g(x)=x(1-x)$, respectively.}
\label{tab:wel}
\end{table}

\vspace{.1cm}
\noindent $\bullet$ {\bf Real Data: Case Study II.}
Our second set of real data experiments considers learning the graph topologies from a set of real data taken from Indian villages \cite{banerjee2013diffusion}. The dataset consists of survey data from $40$ villages, each taken as an individual graph to be learned by \eqref{eq:glfp_ng}. Here, the networks have sizes ranging from $N=77$ to $N=330$ agents, and $M=16$ samples of graph signals are observed for each network. 

We set the parameters in \eqref{eq:glfp_ng} with $\lambda = 100$, \(\beta = 50\), \({\bm b} = {\tt h} + 1\), where \( {\tt h}_i \in \{0,1\} \) indicates if the agent is a potential microfinance client (cf.~`{\tt hhSurveyed}' in the dataset) and ${\bm b}$ is normalized such that ${\bm b}^\top {\bm 1}=1$. 
Figure~\ref{fig:vil_auroc} shows the boxplots of AUC values for graph topologies learned under different settings and algorithms, compared against the ground-truth network reported in \cite{banerjee2013diffusion}. We observe that the solutions found by \eqref{eq:glfp_ng} consistently provides accurate graph topology estimation against the benchmarks.
\update{Similarly, Table III reports the maximum/average/minimum gain in social welfare relative to the ground truth. We observe that the graphs learned by \eqref{eq:glfp_ng} attain higher welfare values under the prescribed network-game model, which is consistent with the role of the functional prior in steering the estimator toward graph structures that reach high values on the target task. A natural network may settle for a compromise among multiple competing objectives, whereas our regularized estimator emphasizes a prescribed functional aspect. It is therefore possible for the learned graph to attain a higher task-related score; see also Section \ref{sec:structural_interpretation}. The structural recovery quality is evaluated separately through AUC.}

\subsection{Gene Regulatory Dynamics} \label{sec:num-gene}
In this subsection, we focus on the case with priors induced by gene regulatory dynamics in \eqref{eq:glfp_gene}.

\vspace{.1cm}
\noindent $\bullet$ {\bf Synthetic Data (DREAM4).}
The first experiment considers the {\it In-Silico} (i.e., synthetic) data from the  \href{https://gnw.sourceforge.net/dreamchallenge.html#dream4challenge}{DREAM4 Challenge}. The DREAM4 networks are sub-networks of curated transcriptional regulatory networks from E. coli (RegulonDB 6.0 \cite{gama2008regulondb}) and S. cerevisiae \cite{balaji2006comprehensive}, with all self-loops removed.
The dataset was used in the DREAM4 network inference challenge and contains \(N = 100\) genes and $M=100$ perturbation experiments. Note that the effect of knockouts in GRNs is complex and the data-fitting loss \eqref{eq:data_ridge} is a simplification. This experiment pertains to the case with imprecise data models.  In the following, we consider \eqref{eq:glfp_gene} with the parameters \( a = 240\), \(\beta = 10, \sigma = 1\).

We also report the level of resilience for each learned graph topology subject to random edge removal. We simulate \(n_{\sf per} = 200\) perturbations by randomly deleting edges from the learned graph with {probability \(p \in \{ 0, 0.1 \} \)}, producing a perturbed topology set \(\hat{\mathcal{S}}\). We then compute the expected resilience metric as $\EE[ \Res( \GSO ) ] = n_{\sf per}^{-1} \sum_{ \GSO \in \hat{\mathcal{S}}} \Res( \GSO )$.

Fig.~\ref{fig:gene_dream4} reports the AUC together with the corresponding resilience values of the graph topology found by \eqref{eq:glfp_gene} as \(\lambda\) varies within $[0,500]$, which illustrates the effect of random edge removal on resilience.
\update{Recall that under the fixed-total-weight constraint ${\bm 1}^\top \GSO {\bm 1}=a$, the topology-level resilience proxy in \cite{gao2016universal} reduces to the same degree-based score underlying \eqref{eq:glfp_gene_approx}. Hence, the empirical performance of \eqref{eq:glfp_gene_approx} serves as an indirect evaluation of the topology-level mechanism highlighted in \cite{gao2016universal}.} As a benchmark, we compare the performance of \eqref{eq:glfp_gene_approx}, the ridge regression \cite{tjarnberg2013optimal, hillerton2022fast}, and the resilience of the ground truth topology. \update{We additionally tested Correlation and SmoothGL in the same perturbation-response setting; their DREAM4 AUC values are only 0.2363 and 0.2359, respectively. As expected, network-game-based baselines perform substantially worse in this GRN setting, which confirms that the issue is not simply the choice of baseline, but the mismatch between their modeling assumptions and the perturbation-response nature of the GRN data.} For easier comparison, we use a scaled regularization parameter in~\eqref{eq:glfp_gene_approx} that is set as \( \hat{\lambda} =  \lambda /1600 \).
We observe from the figure that both the proposed GL-GENE method and its approximation lead to improved AUC and robustness of the learned graphs. In particular, the bilevel-based method achieves higher AUC and robustness, demonstrating the effectiveness of incorporating bilevel optimization into the inference process.

\begin{figure}[t]
\centering
\resizebox{0.97\linewidth}{!}{\definecolor{mycolor1}{rgb}{0.00000,0.44700,0.74100}%
\definecolor{mycolor2}{rgb}{0.85000,0.32500,0.09800}%
\definecolor{mycolor3}{rgb}{0.92900,0.69400,0.12500}%
\definecolor{mycolor4}{rgb}{0.49400,0.18400,0.55600}%
\definecolor{mycolor5}{rgb}{0.46600,0.67400,0.18800}%
\definecolor{mycolor6}{rgb}{0.35000,0.35000,0.35000}%
\begin{tikzpicture}
\begin{groupplot}[
    group style={group name=myplot, group size=3 by 1, horizontal sep=2cm}]
\nextgroupplot[%
width=6.5cm,
height=6cm,
xmin=0,
xmax=530,
xlabel style={font=\color{white!15!black}},
xlabel={\Large Reg.~param.~$\lambda$},
ymin=0.49,
ymax=0.68,
ylabel style={font=\color{white!15!black}},
ylabel={\Large AUC},
tick label style={font=\Large},
axis background/.style={fill=white},
xmajorgrids,
ymajorgrids,
legend style={legend cell align=left, align=left, draw=white!15!black}
]
\addplot [color=mycolor1, line width=1.5pt, mark=o, mark options={solid, mycolor1}]
  table[row sep=crcr]{%
0	0.650530009994077\\
50	0.650530009994077\\
100	0.650530009994077\\
150	0.650530009994077\\
200	0.650530009994077\\
300	0.650530009994077\\
450	0.650530009994077\\
490	0.650530009994077\\
530	0.650530009994077\\
};

\addplot [color=mycolor2, line width=1.5pt, mark=asterisk, mark options={solid, mycolor2}]
  table[row sep=crcr]{%
0	0.650530009994077\\
50	0.653698271394728\\
100	0.664262034535091\\
150	0.66586178098164\\
200	0.670787681374001\\
300	0.670954827694699\\
450	0.668507493244744\\
490	0.658554190109565\\
530	0.644682202213503\\
};

\addplot [color=mycolor3, line width=1.5pt, mark=square, mark options={solid, mycolor3}]
  table[row sep=crcr]{%
0	0.650530009994077\\
50	0.650804153094462\\
100	0.650753546509475\\
150	0.653173119632809\\
200	0.655820277798341\\
300	0.661905213577139\\
450	0.653879298378738\\
490	0.636110314258217\\
530	0.618029015213207\\
};

\addplot [color=black, dashed, line width=2.0pt]
  table[row sep=crcr]{%
0	0.5259\\
530	0.5259\\
};\label{plots:specttemp}

\addplot [color=mycolor6, line width=1.5pt, mark=pentagon*, mark options={solid, mycolor6}]
  table[row sep=crcr]{%
0	0.5000\\
530	0.5000\\
};\label{plots:glasso}
\coordinate (top) at (rel axis cs:1,0);
\nextgroupplot[%
width=6.5cm,
height=6cm,
xmin=0,
xmax=530,
xlabel style={font=\color{white!15!black}},
xlabel={\Large Reg.~param.~$\lambda$},
ymin=0,
ymax=40,
tick label style={font=\Large},
ylabel style={font=\color{white!15!black}},
ylabel={\Large $\mathbb{E}[\Res(\hat{\GSO})]/2$},
axis background/.style={fill=white},
xmajorgrids,
ymajorgrids,
legend style={  
fill opacity=0.8,
draw opacity=0.8,
text opacity=0.8,
legend cell align=left, 
align=left, 
at={(0.7,0.7)}, anchor=north,
draw=white!15!black}
]

\addplot [color=mycolor2, line width=1.5pt, dashed,mark=asterisk, mark options={solid, mycolor2}]
  table[row sep=crcr]{%
0	0\\
50	10.9510166900749\\
100	33.3358510538971\\
150	35.213624938771\\
200	35.5913202914663\\
300	35.989354448435\\
450	36.3068703053459\\
490	36.4566257452604\\
530	36.4478508453985\\
};\label{plots:glfp_genep01}

\addplot [color=mycolor3, line width=1.5pt, mark=square, dashed,mark options={solid, mycolor3}]
  table[row sep=crcr]{%
0	0\\
50	0\\
100	0\\
150	5.82372320597432\\
200	30.3514421990222\\
300	35.5713553792475\\
450	36.96860663648\\
490	37.1526273471112\\
530	36.9123917765545\\
};\label{plots:glfp_gene_appp01}

\addplot [color=mycolor4, line width=1.5pt, mark=diamond, mark options={solid, mycolor4}]
  table[row sep=crcr]{%
0	12.5811562185299\\
50	12.5811562185299\\
100	12.5811562185299\\
150	12.5811562185299\\
200	12.5811562185299\\
300	12.5811562185299\\
450	12.5811562185299\\
490	12.5811562185299\\
530	12.5811562185299\\
};\label{plots:GroundTruth}

\addplot [color=mycolor4, line width=1.5pt,dashed,mark=diamond, mark options={solid, mycolor4}]
  table[row sep=crcr]{%
0	9.10842862002164\\
50	9.10842862002164\\
100	9.10842862002164\\
150	9.10842862002164\\
200	9.10842862002164\\
300	9.10842862002164\\
450	9.10842862002164\\
490	9.10842862002164\\
530	9.10842862002164\\
};\label{plots:GroundTruthp01}

\addplot [color=mycolor1, line width=1.5pt, mark=o, mark options={solid, mycolor1}]
  table[row sep=crcr]{%
0	4.56983333689941\\
50	4.56983333689941\\
100	4.56983333689941\\
150	4.56983333689941\\
200	4.56983333689941\\
300	4.56983333689941\\
450	4.56983333689941\\
490	4.56983333689941\\
530	4.56983333689941\\
};\label{plots:ridge}

\addplot [color=mycolor1, line width=1.5pt,dashed, mark=o, mark options={solid, mycolor1}]
  table[row sep=crcr]{%
0	0\\
50	0\\
100	0\\
150	0\\
200	0\\
300	0\\
450	0\\
490	0\\
530	0\\
};\label{plots:ridgep01}

\addplot [color=mycolor2, line width=1.5pt, mark=asterisk, mark options={solid, mycolor2}]
  table[row sep=crcr]{%
0	4.56982448087999\\
50	37.3803431776512\\
100	38.1226730151684\\
150	38.5266531514552\\
200	38.8159288937235\\
300	39.1640178497921\\
450	39.3620846626965\\
490	39.4389728449405\\
530	39.4739706017887\\
};\label{plots:glfp_gene}

\addplot [color=mycolor3, line width=1.5pt, mark=square, mark options={solid, mycolor3}]
  table[row sep=crcr]{%
0	4.56983333689941\\
50	34.4032867059916\\
100	36.4402899254636\\
150	37.2956879078939\\
200	37.8695668604226\\
300	38.7579170990473\\
450	39.7506005331859\\
490	39.94512656142\\
530	40.0172335442021\\
};\label{plots:glfp_gene_approx}

\coordinate (bot) at (rel axis cs:0,0);

\end{groupplot}

\path (top|-current bounding box.north)--
      coordinate(legendpos)
      (bot|-current bounding box.north);

\node[
    anchor=south,
    draw,
    inner sep=0.3em
  ]at([yshift=1.75ex,xshift=2ex]legendpos){%
    \Large
    \begin{tabular}{@{}ll@{\hspace{0.8em}}ll@{\hspace{0.8em}}ll@{}}
    \tikz[baseline=-0.6ex]{\draw[color=mycolor4, line width=1.5pt] (0,0) -- (0.45,0); \draw[color=mycolor4, line width=1.5pt, mark=diamond, mark options={solid, mycolor4}] plot coordinates {(0.225,0)};} & GroundTruth &
    \tikz[baseline=-0.6ex]{\draw[color=mycolor1, line width=1.5pt] (0,0) -- (0.45,0); \draw[color=mycolor1, line width=1.5pt, mark=o, mark options={solid, mycolor1}] plot coordinates {(0.225,0)};} & Ridge &
    \tikz[baseline=-0.6ex]{\draw[color=mycolor2, line width=1.5pt] (0,0) -- (0.45,0); \draw[color=mycolor2, line width=1.5pt, mark=asterisk, mark options={solid, mycolor2}] plot coordinates {(0.225,0)};} & GL-GENE \\
    \tikz[baseline=-0.6ex]{\draw[color=mycolor3, line width=1.5pt] (0,0) -- (0.45,0); \draw[color=mycolor3, line width=1.5pt, mark=square, mark options={solid, mycolor3}] plot coordinates {(0.225,0)};} & GL-GENE-App &
    \tikz[baseline=-0.6ex]{\draw[color=black,dashed,  line width=2.0pt] (0,0) -- (0.45,0); \draw[color=black, dashed, line width=2.0pt] plot coordinates {(0.225,0)};} & SpectTemp~\cite{segarra2017network} &
    \tikz[baseline=-0.6ex]{\draw[color=mycolor6, line width=1.5pt] (0,0) -- (0.45,0); \draw[color=mycolor6, line width=1.5pt, mark=pentagon*, mark options={solid, mycolor6}] plot coordinates {(0.225,0)};} & Glasso~\cite{friedman2008sparse}
    \end{tabular}
  };

\end{tikzpicture}%
}
 \caption{Performance of algorithms on DREAM4. Left: AUC. Right: Expected resilience metric $\mathbb{E}[\Res(\hat{\mathbf{S}})]$ across $\lambda$. Dashed line represents the perturbation set with $p=0.1$.}\vspace{-.2cm} \label{fig:gene_dream4}
\end{figure}



\begin{figure}[t]
\centering
\resizebox{0.97\linewidth}{!}{\definecolor{mycolor1}{rgb}{0.00000,0.44700,0.74100}%
\definecolor{mycolor2}{rgb}{0.85000,0.32500,0.09800}%
\definecolor{mycolor3}{rgb}{0.92900,0.69400,0.12500}%
\definecolor{mycolor4}{rgb}{0.75000,0.75000,0.75000}%
\definecolor{mycolor5}{rgb}{0.46600,0.67400,0.18800}%
\definecolor{mycolor6}{rgb}{0.35000,0.35000,0.35000}%
\definecolor{mycolor7}{rgb}{0.00000,0.00000,0.00000}%
\begin{tikzpicture}
\begin{groupplot}[
    group style={group name=myplot, group size=2 by 1, horizontal sep=2cm}]

\nextgroupplot[
width=6.5cm,
height=6cm,
    xlabel={\Large Reg.~param.~$\lambda$},
    ylabel={\Large AUC},
    xmin=0, xmax=18000,
    xtick={0,2000,8000,10000,15000,18000},
    ymin=0.45, ymax=0.67,
    grid=major,
    tick label style={font=\Large},
    legend style={at={(0.05,0.05)}, anchor=south west}
]

\addplot [color=mycolor1, line width=1.5pt, mark=o, mark options={solid, mycolor1}]
  table[row sep=crcr]{%
0	0.5699\\
2000	0.5699\\
8000	0.5699\\
10000	0.5699\\
15000	0.5699\\
18000	0.5699\\
};

\addplot [color=mycolor2, line width=1.5pt, mark=asterisk, mark options={solid, mycolor2}]
  table[row sep=crcr]{%
0	0.5699\\
2000	0.5921\\
8000	0.6521\\
10000	0.6068\\
15000	0.6107\\
18000	0.5518\\
};

\addplot [color=mycolor3, line width=1.5pt, mark=square, mark options={solid, mycolor3}]
  table[row sep=crcr]{%
0	0.5699\\
2000	0.6112\\
8000	0.6130\\
10000	0.6135\\
15000	0.5087\\
18000	0.4998\\
};

\addplot [color=mycolor5, line width=1.5pt, mark=o, mark options={solid, mycolor5}]
  table[row sep=crcr]{%
0	0.5699\\
2000	0.5258\\
8000	0.5020\\
10000	0.6260\\
15000	0.6249\\
18000	0.5189\\
};

\addplot [color=mycolor4, line width=1.7pt, dash pattern=on 2pt off 2pt, mark=diamond, mark options={solid, mycolor4}]
  table[row sep=crcr]{%
0     0.500411\\
18000 0.500411\\
};

\addplot [color=black, dashed, line width=2.0pt]
  table[row sep=crcr]{%
0     0.500000\\
18000 0.500000\\
};

\addplot [color=mycolor7, line width=1.7pt, dash pattern=on 8pt off 2pt, mark=x, mark options={solid, mycolor7}]
  table[row sep=crcr]{%
0     0.559969\\
18000 0.559969\\
};


\coordinate (top) at (rel axis cs:1,0);

\nextgroupplot[
width=6.5cm,
height=6cm,
    xlabel={\Large Reg.~param.~$\lambda$},
ylabel={\Large $\mathbb{E}[\Res(\hat{\GSO})]/2$},
tick label style={font=\Large},
xmin=0, xmax=18000,
    xtick={0,2000,8000,10000,15000,18000},
    ymin=0, ymax=615,
    grid=major,
    legend style={at={(0.05,0.78)}, anchor=north west}
]

\addplot [color=mycolor2, line width=1.5pt, dashed,mark=asterisk, mark options={solid, mycolor2}]
  table[row sep=crcr]{%
0	12.4423\\
2000	476.3185\\
8000	521.4932\\
10000	427.1965\\
15000	476.0562\\
18000	510.1151\\
};

\addplot [color=mycolor1, line width=1.5pt,dashed, mark=o, mark options={solid, mycolor1}]
  table[row sep=crcr]{%
0	12.4423\\
2000	12.4423\\
8000	12.4423\\
10000	12.4423\\
15000	12.4423\\
18000	12.4423\\
};\label{plots:ridgep01dream5}

\addplot [color=mycolor1, line width=1.5pt, mark=o, mark options={solid, mycolor1}]
  table[row sep=crcr]{%
0	185.2445\\
2000	185.2445\\
8000	185.2445\\
10000	185.2445\\
15000	185.2445\\
18000	185.2445\\
};

\addplot [color=mycolor3, dashed, line width=1.5pt, mark=square, mark options={solid, mycolor3}]
  table[row sep=crcr]{%
0	12.4423\\
2000	233.9634\\
8000	538.9781\\
10000	 501.6081\\
15000	0.2492 \\
18000	0.1421\\
};

\addplot [color=mycolor2, line width=1.5pt, mark=asterisk, mark options={solid, mycolor2}]
  table[row sep=crcr]{%
0	185.2445\\
2000	535.9090\\
8000	590.0341\\
10000	490.1043\\
15000	570.0314\\
18000	584.1791\\
};

\addplot [color=mycolor3, line width=1.5pt, mark=square, mark options={solid, mycolor3}]
  table[row sep=crcr]{%
0	185.2445\\
2000	331.2309\\
8000	611.9716\\
10000	599.5863\\
15000	0.3818\\
18000	0.1553\\
};

\addplot [color=mycolor5, line width=1.5pt, mark=o, mark options={solid, mycolor5}]
  table[row sep=crcr]{%
0	185.2445\\
2000	301.8251\\
8000	317.7742\\
10000	379.1488\\
15000	471.1130\\
18000	363.5787\\
};\label{plots:bi_diag0}

\addplot [color=mycolor5, line width=1.5pt,dashed, mark=o, mark options={solid, mycolor5}]
  table[row sep=crcr]{%
0	12.4423\\
2000	94.7898\\
8000	136.9334\\
10000	249.5878\\
15000	351.5053\\
18000	201.7476\\
};\label{plots:bi_diag1}

\coordinate (bot) at (rel axis cs:0,0);
\end{groupplot}

\path (top|-current bounding box.north)--
      coordinate(legendpos)
      (bot|-current bounding box.north);

\node[
    anchor=south,
    draw,
    inner sep=0.3em
  ]at([yshift=1.75ex,xshift=2ex]legendpos){%
    \Large
    \begin{tabular}{@{}ll@{\hspace{0.6em}}ll@{\hspace{0.6em}}ll@{\hspace{0.6em}}ll@{}}
    \tikz[baseline=-0.6ex]{\draw[color=mycolor1, line width=1.5pt] (0,0) -- (0.45,0); \draw[color=mycolor1, line width=1.5pt, mark=o, mark options={solid, mycolor1}] plot coordinates {(0.225,0)};} & Ridge &
    \tikz[baseline=-0.6ex]{\draw[color=mycolor2, line width=1.5pt] (0,0) -- (0.45,0); \draw[color=mycolor2, line width=1.5pt, mark=asterisk, mark options={solid, mycolor2}] plot coordinates {(0.225,0)};} & GL-GENE &
    \tikz[baseline=-0.6ex]{\draw[color=mycolor3, line width=1.5pt] (0,0) -- (0.45,0); \draw[color=mycolor3, line width=1.5pt, mark=square, mark options={solid, mycolor3}] plot coordinates {(0.225,0)};} & GL-GENE-App &
    \tikz[baseline=-0.6ex]{\draw[color=mycolor5, line width=1.5pt] (0,0) -- (0.45,0); \draw[color=mycolor5, line width=1.5pt, mark=o, mark options={solid, mycolor5}] plot coordinates {(0.225,0)};} & DA-GLFP \\
    \tikz[baseline=-0.6ex]{\draw[color=mycolor4, line width=1.7pt, dash pattern=on 2pt off 2pt] (0,0) -- (0.45,0); \draw[color=mycolor4, line width=1.7pt, mark=diamond, mark options={solid, mycolor4}] plot coordinates {(0.225,0)};} & Smooth-GL~\cite{kalofolias2016learn} &
      \tikz[baseline=-0.6ex]{\draw[color=black,dashed,  line width=2.0pt] (0,0) -- (0.45,0); \draw[color=black, dashed, line width=2.0pt] plot coordinates {(0.225,0)};} & SpectTemp~\cite{segarra2017network} &
    \tikz[baseline=-0.6ex]{\draw[color=mycolor7, line width=1.7pt, dash pattern=on 8pt off 2pt] (0,0) -- (0.45,0); \draw[color=mycolor7, line width=1.7pt, mark=x, mark options={solid, mycolor7}] plot coordinates {(0.225,0)};} & Correlation &
    & 
    \end{tabular}
  };

\end{tikzpicture}%
}
 \caption{Performance of algorithms on DREAM5. Left: AUC. Right: Expected resilience metric $\mathbb{E}[\Res(\hat{\mathbf{S}})]$ across $\lambda$. Dashed line represents the perturbation set with $p=0.1$.}\vspace{-.3cm}\label{fig:gene_dream5}
\end{figure}

\vspace{.1cm}
\noindent $\bullet$ {\bf Real Data (DREAM5 \textit{E. coli}).}
The second experiment considers the {\it In-Vivo} {(i.e., real)} data from the DREAM5 challenge \cite{marbach2012wisdom}. The dataset contains gene expression data from \textit{E. coli} under various perturbation conditions, with \(N = 4511\) genes and \(M = 56\) perturbation conditions. The steady-state expression levels are treated as the observed data, denoted by \(\boldsymbol{X} \in \mathbb{R}^{4511 \times 56}\).
We consider \eqref{eq:glfp_gene} with the parameters \( a = 3500, \beta = 1, \sigma = 100 \).
Note that with $N=4511$, computing the matrix inverse in TTGD can be computationally prohibitive. 
\update{Thus, for this experiment, we approximate the square matrix ${\sf J}_{\bm y} {\sf Y}$ in \eqref{eq:grad_ideal} by its diagonal and refer to this approach as DA-GLFP (cf.~Remark \ref{rem:ttgd_complexity}).}
The initialization of $\bm{y}$, $\GSO$ for \eqref{eq:glfp_gene_approx}, \eqref{eq:glfp_gene}, and DA-GLFP are all taken as the output of ridge regression. Each algorithm is run for 5000 iterations. On a laptop computer equipped with Apple M1 Pro,
it takes 93 minutes for \eqref{eq:glfp_gene}, while it only takes 44 minutes for ridge regression, 47 minutes for \eqref{eq:glfp_gene_approx}, and 48 minutes for DA-GLFP.
\update{We ignore GLASSO in this setup as the program did not finish within 4 hours.}


We evaluate the inference performance by comparing with the DREAM5 gold standard, which represents a subset of the gene regulatory network. \update{The gold standard is treated as an unweighted directed support and the inferred continuous edge weights are used directly as ranking scores.} All predicted edges are ranked by their inferred weights, and \update{following the DREAM5 protocol,} the top $10^5$ links are used to compute the AUC.
To assess the robustness of the graph topology learned, we report the $\Res(\GSO)$ values for each learned graph topology. In addition, we simulate $n_{\sf per} = 50$ perturbations by randomly deleting edges from the learned graph with probability $p = 0.1$, forming a perturbed set $\hat{\mathcal{S}}$.

Figure~\ref{fig:gene_dream5} presents the performance of the algorithms as $\lambda$ increases. The regularization parameter $\hat{\lambda}$ in~\eqref{eq:glfp_gene_approx} is scaled relative to that in~\eqref{eq:glfp_gene}, with $\hat{\lambda} = \lambda /240000$. 
Observe that the proposed GL-GENE method and its approximation outperform the vanilla ridge regression in terms of AUC for a wide range of $\lambda$. Among all methods, the bilevel-based approach \eqref{eq:glfp_gene} yields the best tradeoff between AUC and resilience. However, the resilience of the topology learned by the approximate method \eqref{eq:glfp_gene_approx} varies across runs. In comparison, although DA-GLFP performs slightly worse than \eqref{eq:glfp_gene}, its resilience is more consistent than that of \eqref{eq:glfp_gene_approx}. \update{This is consistent with the model mismatch explanation above: Off-the-shelf baselines designed for smooth graph signals or network-game-induced functional priors 
are not well aligned with perturbation-response GRN data, so their performance is substantially worse.}\vspace{-.2cm}


\section{Conclusions}
We introduce a new functional prior framework for graph topology learning. The proposed framework incorporates implicit regularizers, leading to a bilevel optimization formulation. We develop a TTGD algorithm with provable convergence guarantees under the regularity conditions stated in the paper. Our theoretical analysis and empirical results demonstrate that functional priors can induce desirable properties and outperform traditional structural priors.
\update{Our work also opens up two interesting directions for future study, namely, (i) to develop a convergence theory that covers settings where the lower-level dynamics admit multiple equilibria, and (ii) to establish consistency guarantees for the functional prior estimator.
}\vspace{-.2cm}

\allowdisplaybreaks
\appendices

\section{Proofs of Structural Interpretations}
\subsection{Proof of Proposition~\ref{prop:approx}} \label{pf:approx_glng}
As $\bm{y}^{\sf NE}(\GSO)$ is in the interior of ${\cal Y}$, it is also the solution to
$\bm{y} = \widetilde{\sf T}(\bm{y};\GSO)$.
Here, $\widetilde{\sf T}_i(\bm{y};\GSO) := \argmax_{y_i}U_i(y_i,\bm{y}_{-i};\GSO)$ is the unconstrained best response map. Consequently, the Lipschitzness and positivity of $f$ and $g$ imply that 
$\bm{0} \leq \bm{y}^{\sf NE}\leq \ell_1\GSO\bm{y}^{\sf NE} + \bm{b}$		
for any $\GSO\in {\cal S}_{\sf ng}$.
Since $\GSO \geq \bm{0}$ with row sums $\GSO\bm{1} = c\bm{1}$, we have $\|\GSO\|_\infty = c$. Taking the $\infty$-norm on both sides of the preceding componentwise inequality gives
\[
\|\bm{y}^{\sf NE}\|_\infty \leq c\ell_1 \|\bm{y}^{\sf NE}\|_\infty + \|\bm{b}\|_\infty,
\]
and hence $\|\bm{y}^{\sf NE}(\GSO)\|_\infty \leq \|\bm{b}\|_\infty / (1-c\ell_1)$. Using $\|\cdot\|_2 \leq \sqrt{N}\|\cdot\|_\infty$, we obtain
\begin{equation*}
    \|\bm{y}^{\sf NE}(\GSO)\|_2 \leq \frac{\sqrt{N}\,\|\bm{b}\|_\infty}{1-c\ell_1}.
\end{equation*}
Since $c \ell_1 < 1$, iterating the componentwise bound $\bm{y}^{\sf NE} \leq \ell_1\GSO\bm{y}^{\sf NE} + \bm{b}$ and using $(\GSO^k \bm{b})_i \leq c^k \|\bm{b}\|_\infty$ for $\GSO\in{\cal S}_{\sf ng}$, we have
\begin{align*}
	\bm{y}^\star&\leq (\bm{I}+ \ell_1\GSO)\bm{b} + \|\bm{b}\|_{\infty}\left((c \ell_1)^2 + (c \ell_1)^3 +\cdots \right)\bm{1}\\
	&= (\bm{I}+ \ell_1\GSO)\bm{b} + (1-c \ell_1)^{-1}(c \ell_1)^2\|\bm{b}\|_{\infty}\bm{1}.
\end{align*}
Hence, for any $\GSO\in{\cal S}_{\sf ng}$, 
\begin{align*}
   & \tr( \GSO^\top {\bm D} ) + \beta \| \GSO \|_F^2 - \lambda\bm{1}^\top\bm{y}^\star(\GSO) \\
   & \geq \tr( \GSO^\top {\bm D} ) + \beta \| \GSO \|_F^2 - \lambda \ell_1\bm{1}^\top\GSO\bm{b} - \lambda\bm{1}^\top\bar{\bm{c}}, 
\end{align*}
where $\bar{\bm c} = \frac{ (cl_1)^2}{1-cl_1} \| {\bm b} \|_\infty {\bf 1} + {\bm b}$. This proves the first part of \eqref{eq:bounds_glng}.

On the other hand, there exists a $\mu_1\leq \ell_1$ such that for any $\GSO\in{\cal S}_{\sf ng}$, we have
$\bm{y}^\star \geq \mu_1\GSO\bm{y}^\star + \bm{b}$.
Applying recursion gives
\begin{align*}
	\bm{y}^\star &\geq (\bm{I}+\mu_1\GSO)\bm{b} + b^\star\left((c\mu_1)^2 + (c\mu_1)^3 +\cdots\right)\bm{1} \\
	&= (\bm{I}+\mu_1\GSO)\bm{b} + (1-c\mu_1)^{-1}(c\mu_1)^2b^\star\bm{1}.
\end{align*}
Consequently, for any $\GSO\in{\cal S}_{\sf ng}$, 
\begin{align*}
 &\tr( \GSO^\top {\bm D} ) + \beta \| \GSO \|_F^2 - \lambda\bm{1}^\top\bm{y}^\star (\GSO) \\
 & \leq \tr( \GSO^\top {\bm D} ) + \beta \| \GSO \|_F^2 - \lambda \mu_1\bm{1}^\top\GSO\bm{b} - \lambda\bm{1}^\top\hat{\bm{c}},
\end{align*}
with $\hat{\bm c} = \frac{(c\mu_1)^2}{1-c\mu_1}b^\star {\bf 1} + {\bm b}$. This proves the second part of \eqref{eq:bounds_glng}.
\vspace{-.2cm}
\subsection{Proof of Proposition~\ref{prop:kkt}} \label{pf:kkt}
Introducing the dual variables $\lambda \in \RR, {\bm \eta} \in \mathbb{R}^N$, ${\bm h} \in \mathbb{R}^N$, $\boldsymbol{\mu} \in \mathbb{R}_+^{N \times N}$, the  Lagrangian of \eqref{eq:glfp_ng-app} is
\beq \notag
\begin{aligned}
& {\cal L}( \GSO, \lambda, \bm{\eta}, \bm{\mu}, {\bm h} )
= \tr ( \GSO^\top {\bm D} ) + \beta \| \GSO \|_F^2 - \tilde{\lambda} {\bm 1}^\top \GSO{\bm b} \\
  & \textstyle \quad - {\bm \eta}^\top(\GSO{\bm 1} - c{\bm 1}) - \langle \boldsymbol{\mu}, \GSO \rangle-\sum_{i=1}^N h_i{\bm e}_i^\top \GSO {\bm e}_i.
\end{aligned}
\eeq
The first order necessary condition of \eqref{eq:glfp_ng-app} yields
\vspace{-.1cm}
\[
\vspace{-.2cm}
{\bm D} + 2\beta \GSO - \tilde{\lambda} {\bm 1}{\bm b}^\top - \boldsymbol{\mu} - {\bm \eta}{\bm 1}^\top- {\rm Diag}( {\bm h} )
= {\bm 0},
\]
while the complementary slackness condition yields $S_{ij} \mu_{ij} = 0$. Together with $\GSO \geq {\bm 0}$, for $i \neq j$, we have two cases:
\vspace{-.1cm}

\noindent{\bf Case 1.} If $S_{ij} > 0$, then we must have $\mu_{ij} = 0$ and thus\vspace{-.2cm}
\[
    S_{ij} = (2\beta)^{-1} (- D_{ij} + \tilde{\lambda} b_j + \eta_i).
\] 
\noindent{\bf Case 2.} If $S_{ij} = 0$, then with $\mu_{ij} \geq 0$, we get \vspace{-.2cm}
\[
- D_{ij} + \tilde{\lambda} b_j + \eta_i \leq 0.
\]  \vspace{-.2cm}
Examining the two cases yields the stated expression for $S_{ij}^\star$.\vspace{-.2cm}
\subsection{Proof of Proposition \ref{prop:approx_GLgene}}
\vspace{-.1cm}
\label{pf:approx_glgene}
Observe that $y_i = \sum_{j=1, j \neq i}^N S_{ij} \frac{y_j^2}{y_j^2 + 1}$ for any $\GSO \in \mathcal{S}_{\sf nd}$ and $i \in V$. We first notice the elementary inequality
$( {1+e^{-\sigma y}} )^{-1} \leq (1/4) \sigma y + (1/2)$, which holds for any $y \geq 0$.
As ${\bm y}$ satisfies the constraints in \eqref{eq:glfp_gene}, we have
\vspace{-.2cm}
    \begin{align*} 
        &\Phi( \GSO; {\bm y} )  = \|\GSO \boldsymbol{X} + \boldsymbol{P}\|_F^2 + \beta \|\GSO\|_F^2 - \lambda{\bm 1}^\top\frac{{\bm 1}}{{\bm 1}+e^{-\sigma\y}} \notag \\
        & \geq \|\GSO \boldsymbol{X} + \boldsymbol{P}\|_F^2 + \beta \|\GSO\|_F^2 - {\sigma\lambda\bm{1}^\top\bm{y}}/{4} - {N\lambda}/{2}. \vspace{-.5cm}
    \end{align*}
For $y\geq 0$, as $\frac{y^2}{1+y^2} \leq \frac{1}{2}y$, we can infer from \eqref{eq:gene_LL_stat} that\vspace{-.2cm}
    \begin{align*}
        &y_i = \sum\nolimits_{j \neq i} S_{ij} \frac{y_j^2}{y_j^2 + 1} \leq \left( \frac{ \GSO\bm{y} }{2} \right)_i \\
        &= \sum\nolimits_{j \neq i} (\GSO^2)_{ij} \frac{y_j^2 / 2}{y_j^2 + 1} 
        \leq \left( \frac{ \GSO^2\bm{1} }{2} \right)_i.
    \end{align*}
Combining the above with the displayed lower bound proves the statement.\vspace{-.2cm}

\subsection{Proof of Proposition \ref{prop:gene-interpret}}\vspace{-.1cm}
\label{pf:1s21}
\update{
For any $\GSO \in {\cal S}_{\sf nd}$, define
\[
R_i := \sum_{j=1}^N S_{ij},
\qquad
C_i := \sum_{j=1}^N S_{ji},
\qquad i\in[N].
\]
Then, $\sum_{i=1}^N R_i = \sum_{i=1}^N C_i = a$, and we have
\begin{align*}
{\bm 1}^\top \GSO^2 {\bm 1}
&= \sum_{i,j,\ell} S_{i\ell} S_{\ell j}
 = \sum_{\ell=1}^N \Big(\sum_{i=1}^N S_{i\ell}\Big)\Big(\sum_{j=1}^N S_{\ell j}\Big)
 = \sum_{\ell=1}^N C_\ell R_\ell.
\end{align*}
Moreover, since ${\rm diag}(\GSO)={\bm 0}$ and $\GSO\ge {\bm 0}$, we get
\[
R_\ell + C_\ell
= \sum_{j\neq \ell}S_{\ell j} + \sum_{i\neq \ell}S_{i\ell}
\le \sum_{i,j}S_{ij}
= a,
\qquad \forall\, \ell\in[N].
\]
This yields
\begin{equation}\label{eq:per-node-bound}
C_\ell R_\ell
\le \frac{(C_\ell+R_\ell)^2}{4}
\le \frac{a(C_\ell+R_\ell)}{4},
\qquad \forall\, \ell\in[N].
\end{equation}
Summing over $\ell$ gives
\begin{equation}\label{eq:global-upper-bound}
{\bm 1}^\top \GSO^2 {\bm 1}
= \sum_{\ell=1}^N C_\ell R_\ell
\le \frac{a}{4}\sum_{\ell=1}^N (C_\ell+R_\ell)
= \frac{a^2}{2}.
\end{equation}
It follows that every feasible $\GSO$ satisfies ${\bm 1}^\top \GSO^2 {\bm 1}\le a^2/2$. Equality holds when
\[
S_{i^\star j^\star}=S_{j^\star i^\star}=\frac{a}{2},
\qquad
S_{ij}=0 \ \text{otherwise},
\qquad i^\star\neq j^\star,
\]
so the maximizer $\GSO^\star$ has value $a^2/2$. Furthermore, equality in \eqref{eq:global-upper-bound} implies equalities in \eqref{eq:per-node-bound} for each $\ell\in[N]$, i.e.,
\[
C_\ell R_\ell
= \frac{(C_\ell+R_\ell)^2}{4}
= \frac{a(C_\ell+R_\ell)}{4},
\qquad \forall\, \ell\in[N].
\]
Therefore, for each $\ell\in[N]$, we have
\[
C_\ell=R_\ell=0
\qquad \text{or} \qquad
C_\ell=R_\ell=\frac{a}{2}.
\]
Let $T:=\{\ell\in[N]:R_\ell=a/2\}$.
Since $\sum_{\ell=1}^N R_\ell=a$ and $R_\ell\in\{0,a/2\}$ for $\ell\in[N]$, we have $|T|=2$. Hence, we may write $T=\{i^\star,j^\star\}$ with $i^\star\neq j^\star$. For $\ell\notin T$, both the $\ell$-th row and $\ell$-th column of $\GSO^\star$ vanish. Hence, the only possible nonzero entries are $S^\star_{i^\star j^\star}$ and $S^\star_{j^\star i^\star}$, and the row-sum constraints give
\[
S^\star_{i^\star j^\star}
=
S^\star_{j^\star i^\star}
=
\frac{a}{2}.
\]
All other entries are zero. This proves the claim. 
}

{
\footnotesize
\bibliographystyle{IEEEtran}
\bibliography{ref}
}


\section{Proof of Theorem \ref{thm:ttgd}}\label{app:prof_conv_thm}
\subsection{Implications of H\ref{assu:cal_S} to H\ref{assu:Y_grd}}
We preface the proof of Theorem \ref{thm:ttgd} by showing that H\ref{assu:cal_S}-H\ref{assu:Y_grd} imply the following statements:
\begin{Prop}\label{prop:Lipschitz-ness_all_functions}
    Under H\ref{assu:cal_S} to H\ref{assu:Y_grd}, the following hold:
    \begin{enumerate}[start=1,label={(\bfseries S\arabic*)}]
        \item For any $\GSO\in{\cal S}$, there exists a unique solution $\bm{y}^\star(\GSO)$ to ${\sf Y}(\bm{y};\GSO) = \bm{0}$. \label{statement:ystar0}
        \item There exists a $\widetilde{B}>0$ such that $\|\bm{y}^\star(\GSO)\|_2 \leq \widetilde{B}$ for any $\GSO \in \mathcal{S}$. Furthermore, ${\bm y}^\star ( \GSO )$ is $L_y$-Lipschitz w.r.t.~$\GSO\in{\cal S}$ for some constant $L_y$.\label{statement:ystar}
        \item For any $B > 0$, there exists an $L_{\widehat{\Phi}} > 0$ such that the gradient map $\widehat{\grd}\Phi(\cdot)$ is $L_{\widehat{\Phi}}$-Lipschitz w.r.t. $(\GSO;\bm{y})$ over $\{(\GSO; \bm{y}) : \GSO\in{\cal S}, \bm{y}\in{\cal Y}, \|\bm{y}\|_2 \leq B\}$. \label{statement:hat-phi}
        \item The gradient $\grd\ell(\GSO)$ is $L_\ell$-Lipschitz w.r.t. $\GSO\in{\cal S}$.\label{statement:grd-ell}
    \end{enumerate}
\end{Prop}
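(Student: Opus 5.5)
We prove the four statements in order; each one uses the previous ones.

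For \ref{statement:ystar0}, we show that ${\sf F}(\cdot;\GSO)$ is a contraction on ${\cal Y}$. By H\ref{assu:Y}, for $\bm{y}_1,\bm{y}_2\in{\cal Y}$ we expand
\[
\|{\sf F}(\bm{y}_1;\GSO)-{\sf F}(\bm{y}_2;\GSO)\|_2^2=\|\bm{y}_1-\bm{y}_2\|_2^2-2\langle {\sf Y}_1-{\sf Y}_2,\bm{y}_1-\bm{y}_2\rangle+\|{\sf Y}_1-{\sf Y}_2\|_2^2 ,
\]
where ${\sf Y}_i := {\sf Y}(\bm{y}_i;\GSO)$. The chain in H\ref{assu:Y} gives $\|{\sf Y}_1-{\sf Y}_2\|_2\le(1+\mody)\|\bm{y}_1-\bm{y}_2\|_2$ together with the lower bound on the inner product. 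Plugging these in, the right-hand side is at most $(1-2(1-\mody)+(1+\mody)^2)\|\bm{y}_1-\bm{y}_2\|_2^2$.

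This constant need not be below one, so we instead use the relaxed map ${\sf F}_\alpha:=\bm{y}-\alpha{\sf Y}$. With $\alpha=(1-\mody)/(1+\mody)^2$ the same expansion gives
\[
\|{\sf F}_\alpha(\bm{y}_1)-{\sf F}_\alpha(\bm{y}_2)\|_2^2\le \Big(1-\tfrac{(1-\mody)^2}{(1+\mody)^2}\Big)\|\bm{y}_1-\bm{y}_2\|_2^2 .
\]
${\sf F}_\alpha$ is a convex combination of $\bm{y}$ and ${\sf F}(\bm{y};\GSO)$, so it maps ${\cal Y}$ into itself by H\ref{assu:self-F}. Its fixed points are exactly the zeros of ${\sf Y}(\cdot;\GSO)$, so Banach's theorem gives existence and uniqueness, provided ${\cal Y}$ is closed. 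Closedness holds in the examples; otherwise we add it to the assumptions. Uniqueness also follows directly from strong monotonicity.

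For \ref{statement:ystar}, we take a fixed reference point $\GSO_0\in{\cal S}$ and use strong monotonicity:
\[
(1-\mody)\|\bm{y}^\star(\GSO)-\bm{y}^\star(\GSO')\|^2\le\langle {\sf Y}(\bm{y}^\star(\GSO);\GSO')-{\sf Y}(\bm{y}^\star(\GSO');\GSO'),\,\bm{y}^\star(\GSO)-\bm{y}^\star(\GSO')\rangle .
\]
The second term inside the bracket vanishes, and ${\sf Y}(\bm{y}^\star(\GSO);\GSO)=\bm{0}$. So the right-hand side is at most $\|{\sf Y}(\bm{y}^\star(\GSO);\GSO')-{\sf Y}(\bm{y}^\star(\GSO);\GSO)\|\,\|\cdot\|$. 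It therefore suffices to bound $\|{\sf J}_{\GSO}{\sf Y}\|$ uniformly.

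This step is the main obstacle, because H\ref{assu:Y_grd} is only a local Lipschitz statement on balls $\|\bm{y}\|\le B$, which looks circular. We break the circularity as follows. First apply the inequality with $\GSO'=\GSO_0$ and $\bm{y}$ fixed at $\bm{y}^\star(\GSO_0)$; continuity of ${\sf Y}$ in $\GSO$ on the compact ${\cal S}$ (H\ref{assu:cal_S}) gives
\[
\|\bm{y}^\star(\GSO)-\bm{y}^\star(\GSO_0)\|\le (1-\mody)^{-1}\sup_{\GSO\in{\cal S}}\|{\sf Y}(\bm{y}^\star(\GSO_0);\GSO)\|=:\widetilde{B}-\|\bm{y}^\star(\GSO_0)\| .
\]
(Here we swap the roles of the two points, so that the frozen $\bm{y}$ is the fixed vector $\bm{y}^\star(\GSO_0)$.) Once $\widetilde{B}$ is known, H\ref{assu:Y_grd} on the ball of radius $\widetilde{B}$ together with compactness bounds $\|{\sf J}_{\GSO}{\sf Y}\|\le C$ there. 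Using the same swap at $\bm{y}^\star(\GSO)$, which now lies in that ball, the mean value theorem gives $L_y=C/(1-\mody)$.

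For \ref{statement:hat-phi}, we write $\widehat{\grd}\Phi$ as in \eqref{eq:grad_ideal}, i.e., as sums of products of the factors $\grd_{\GSO}\Phi$, ${\sf J}_{\GSO}{\sf Y}^\top$, $({\sf J}_{\bm y}{\sf Y})^{-\top}$ and $\grd_{\bm y}\Phi$. Products of bounded Lipschitz maps are Lipschitz, so it suffices to check each factor on the set $\{\GSO\in{\cal S},\bm{y}\in{\cal Y},\|\bm{y}\|\le B\}$:
\begin{itemize}
\item $\grd_{\GSO}\Phi$ and $\grd_{\bm y}\Phi$ are bounded, since they are continuous on a compact set, and Lipschitz by H\ref{assu:phi_grd}.
\item ${\sf J}_{\GSO}{\sf Y}$ and ${\sf J}_{\bm y}{\sf Y}$ are bounded and Lipschitz by H\ref{assu:Y_grd}.
\item For the inverse, H\ref{assu:Y} implies ${\sf J}_{\bm y}{\sf Y}+{\sf J}_{\bm y}{\sf Y}^\top\succeq 2(1-\mody)\bm{I}$, since differentiating the monotonicity inequality gives this on interior directions, and by density elsewhere. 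Hence $\|({\sf J}_{\bm y}{\sf Y})^{-1}\|\le (1-\mody)^{-1}$. The identity $A^{-1}-B^{-1}=A^{-1}(B-A)B^{-1}$ then gives Lipschitzness of the inverse.
\end{itemize}

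Finally, \ref{statement:grd-ell} follows by composition. We have $\grd\ell(\GSO)=\widehat{\grd}\Phi(\GSO;\bm{y}^\star(\GSO))$ by the implicit function theorem, using invertibility of ${\sf J}_{\bm y}{\sf Y}$. Applying \ref{statement:hat-phi} with $B=\widetilde{B}$ and \ref{statement:ystar} gives
\[
L_\ell=L_{\widehat\Phi}(1+L_y).
\]
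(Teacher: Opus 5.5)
Your proof is correct. Parts (S3) and (S4) essentially match the paper: you bound and Lipschitz-control each factor of $\widehat{\grd}\Phi$, handle the inverse via $A^{-1}-B^{-1}=A^{-1}(B-A)B^{-1}$, and compose with $\bm{y}^\star(\cdot)$ to get $L_\ell=L_{\widehat\Phi}(1+L_y)$. Parts (S1) and (S2) take a different route.

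\textbf{(S1).} The paper derives the Jacobian bounds and the invertibility of ${\sf J}_{\bm y}{\sf Y}$, then appeals to H2 and strong monotonicity without spelling out existence. Your Banach argument on the relaxed map $\bm{y}-\alpha{\sf Y}(\bm{y};\GSO)$ with $\alpha=(1-\mody)/(1+\mody)^2$ does prove existence. Your remark that H3 only gives the squared constant $4\mody+\mody^2$ for the unrelaxed ${\sf F}$, which exceeds one once $\mody>\sqrt{5}-2$, is also right. It shows the relaxation used in TTGD is genuinely needed, and that the main-text claim that ${\sf F}$ itself is contractive is too strong.

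\textbf{(S2).} The paper uses the implicit function theorem to get continuity of $\bm{y}^\star$, hence a bound by compactness. It then bounds ${\sf J}_{\GSO}\bm{y}^\star=-({\sf J}_{\bm y}{\sf Y})^{-1}{\sf J}_{\GSO}{\sf Y}$ along the graph of $\bm{y}^\star$. You instead use the perturbation inequality $(1-\mody)\|\bm{y}^\star(\GSO)-\bm{y}^\star(\GSO')\|_2\le\|{\sf Y}(\bm{y}^\star(\GSO);\GSO')-{\sf Y}(\bm{y}^\star(\GSO);\GSO)\|_2$ twice:
\begin{itemize}
\item first with $\bm{y}$ frozen at $\bm{y}^\star(\GSO_0)$, which gives an explicit $\widetilde B$;
\item then via the mean value theorem in $\GSO$ (using convexity of ${\cal S}$), with ${\sf J}_{\GSO}{\sf Y}$ bounded over $\GSO\in{\cal S}$, $\|\bm{y}\|_2\le\widetilde B$.
\end{itemize}
Your route is more elementary and quantitative. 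It needs no differentiability of $\bm{y}^\star$ and removes the apparent circularity of the local assumption H5. The paper's route has the advantage of producing the derivative formula behind the implicit gradient. You still need that formula, via the implicit function theorem, for $\grd\ell(\GSO)=\widehat{\grd}\Phi(\GSO;\bm{y}^\star(\GSO))$ in (S4).

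Besides the closedness of ${\cal Y}$ that you flag, your argument (like the paper's) tacitly uses two more properties of ${\cal Y}$. It must be convex, so that the relaxed map is a self-map. It must have nonempty interior, for your ``density'' step giving ${\sf J}_{\bm y}{\sf Y}+{\sf J}_{\bm y}{\sf Y}^\top\succeq 2(1-\mody)\bm{I}$ in all directions.
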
 
\begin{proof}
We divide the proof into several parts as follows. 

\vspace{.1cm}
\noindent \underline{\bf H\ref{assu:self-F} \& H\ref{assu:Y} $\Longrightarrow$ \ref{statement:ystar0}.} We show that ${\sf Y}(\cdot;\GSO)=\bm{0}$ admits a unique solution in ${\cal Y}$ for any $\GSO\in{\cal S}$. Observe that H\ref{assu:Y}, applied along the segment between any two points $\bm{y}_1,\bm{y}_2\in{\cal Y}$ and combined with a Taylor expansion, implies
    \begin{align}\label{eq:bound-JyY}
        & (1-\mody)\|\bm{v}\|_2^2 \leq \bm{v}^\top {\sf J}_{\bm{y}}{\sf Y}(\bm{y};\GSO)\bm{v}\ \ \forall\,\bm{v}\in\mathbb{R}^N, \notag \\
        & \|{\sf J}_{\bm{y}}{\sf Y}(\bm{y};\GSO)\|_2 \leq 1+\mody.
    \end{align}
for any $\bm{y}\in{\cal Y}$ and $\GSO\in{\cal S}$. The first inequality shows that the symmetric part of ${\sf J}_{\bm{y}}{\sf Y}(\bm{y};\GSO)$ is strictly positive definite, so ${\sf J}_{\bm{y}}{\sf Y}(\bm{y};\GSO)$ is invertible everywhere on ${\cal Y}\times{\cal S}$ and its smallest singular value is at least $1-\mody$. Combined with H\ref{assu:self-F} and the strong monotonicity of ${\sf Y}(\cdot;\GSO)$, this implies that ${\sf Y}(\cdot;\GSO)=\bm{0}$ has a unique solution in ${\cal Y}$ for any $\GSO\in{\cal S}$. This concludes \ref{statement:ystar0}.

\vspace{.1cm}
\noindent \underline{\bf H\ref{assu:cal_S} \& H\ref{assu:self-F} \& H\ref{assu:Y} \& H\ref{assu:Y_grd} $\Longrightarrow$ \ref{statement:ystar}.}  We know from \ref{statement:ystar0} that $\bm{y}^\star(\GSO)$ is well defined. Furthermore, ${\sf J}_{\bm y}{\sf Y} \neq \bm{0}$ as indicated by \eqref{eq:bound-JyY}. Thus, by the implicit function theorem, $\bm{y}^\star(\GSO)$ is a continuous function w.r.t. $\GSO\in{\cal S}$. Since ${\cal S}$ is a compact set, there exists a $\widetilde{B}$ such that $\|\bm{y}^\star(\GSO)\|_2 \leq\widetilde{B}$.

Next, we show the Lipschitzness of $\bm{y}^\star(\cdot)$ over ${\cal S}$. Taking $\bar{\bm{y}} = \bm{y}^\star(\GSO)$, the implicit function theorem gives
\[
{\sf J}_{\GSO}\bm{y}^\star(\GSO) = -{\sf J}_{\bm{y}}{\sf Y}(\bar{\bm{y}};\GSO)^{-1}{\sf J}_{\GSO}{\sf Y}(\bar{\bm{y}};\GSO).
\]
Our approach is to bound the terms in ${\sf J}_{\GSO}\bm{y}^\star(\cdot)$ respectively. As $\|\bm{y}^\star(\GSO)\|_2\leq\widetilde{B}$, H\ref{assu:Y_grd} indicates that for any $\GSO_0, \GSO\in{\cal S}$, 
    \begin{align}
        &\|{\sf J}_{\GSO}{\sf Y}(\bar{\bm{y}};\GSO)\|_F \notag \\
        & \leq \|{\sf J}_{\GSO}{\sf Y}(\bar{\bm{y}}^0;\GSO_0)\|_F + L_{\sf Y}\|\GSO - \GSO_0\|_F + L_{\sf Y}\|\bar{\bm{y}} - \bar{\bm{y}}^0\|_2 \notag \\
        & \leq \|{\sf J}_{\GSO}{\sf Y}(\bar{\bm{y}}^0;\GSO_0)\|_F + L_{\sf Y}D + 2L_{\sf Y}\widetilde{B}. \label{eq:bound-JSY}
    \end{align}
Here, $\bar{\bm{y}}^0:=\bm{y}^\star(\GSO_0)$ and $D$ is the diameter of the compact set ${\cal S}$. Since $\|{\sf J}_{\bm{y}}{\sf Y}(\bm{y};\GSO)^{-1}\|_2 \leq 1/(1-\mody)$ from \eqref{eq:bound-JyY}, together with \eqref{eq:bound-JSY}, we see that for any $\GSO\in{\cal S}$, 
    \begin{align*}
    \|{\sf J}_{\GSO}\bm{y}^\star(\GSO)\|_F 
    \leq &L_y:=\frac{\|{\sf J}_{\GSO}{\sf Y}(\bar{\bm{y}}^0;\GSO_0)\|_F + L_{\sf Y}D + 2L_{\sf Y}\widetilde{B}}{1-\mody}.
    \end{align*}
This shows that $\bm{y}^\star(\GSO)$ is $L_y$-Lipschitz w.r.t. $\GSO\in{\cal S}$.

\vspace{.1cm}
\noindent \underline{\bf H\ref{assu:cal_S} to H\ref{assu:Y_grd} $\Longrightarrow$ \ref{statement:hat-phi}.} For any $B > 0$, H\ref{assu:phi_grd} implies that given $\GSO_0\in{\cal S}$ and $\bm{y}_0\in{\cal Y}\cap\{\bm{y}:\|\bm{y}\|_2 \leq B\}$, for any $\GSO\in{\cal S}$ and $\bm{y}\in{\cal Y}\cap\{\bm{y}:\|\bm{y}\|_2 \leq B\}$, we have
\begin{align*}
\|\grd_{\bm y}\Phi(\GSO;\bm{y})\|_F \leq M_{\Phi} :=\|\grd\Phi(\GSO_0;\bm{y}_0)\|_F + (2B + D)L_{\Phi}.
\end{align*}
Similarly, H\ref{assu:Y_grd} implies that for any $(\GSO;\bm{y})$ and given $(\GSO_0;\bm{y}_0) \in \{(\GSO; \bm{y}) : \GSO\in{\cal S}, \bm{y}\in{\cal Y}, \|\bm{y}\|_2 \leq B\}$, we have
    \begin{align*}
        \|{\sf J}_{\GSO}{\sf Y}(\bm{y};\GSO)\|_F \leq M_{Y} := \|{\sf J}{\sf Y}(\bm{y}_0;\GSO_0)\|_F + (2B + D)L_{\sf Y}.
    \end{align*}

Together with \eqref{eq:bound-JyY} and the triangle inequality, we conclude that for any $(\GSO_1;\bm{y}_1),(\GSO_2;\bm{y}_2) \in \{(\GSO; \bm{y}) : \GSO\in{\cal S}, \bm{y}\in{\cal Y}, \|\bm{y}\|_2 \leq B\}$, 
\begin{align*}
    \|\widehat{\grd}\Phi(\GSO_1;\bm{y}_1) - \widehat{\grd}\Phi(\GSO_2;\bm{y}_2)\|_F
    \leq  L_{\widehat{\Phi}}\|(\GSO_1;\bm{y}_1) - (\GSO_2;\bm{y}_2)\|_F.
\end{align*}
Here, $L_{\widehat{\Phi}}:=L_{\Phi} + \frac{M_{\Phi}L_{\sf Y}(1+\mody) + M_{Y}L_{\Phi}(1+\mody) + M_{\Phi}M_YL_{\sf Y}}{(1-\mody)^2}$. This establishes \ref{statement:hat-phi}.

\vspace{.1cm}
\noindent \underline{\bf \ref{statement:ystar} \& \ref{statement:hat-phi} $\Longrightarrow $ \ref{statement:grd-ell}.}
Note that $\grd\ell(\GSO) = \widehat{\grd}\Phi(\GSO;\bm{y}^\star(\GSO))$. For any $\GSO_1,\GSO_2\in{\cal S}$, take $\bar{\bm{y}}^i = \bm{y}^\star(\GSO_i)$ for $i = 1,2$. Then,
\begin{align*}
        \|\grd\ell(\GSO_1) - \grd\ell(\GSO_2)\|_F
        & \leq L_{\widehat{\Phi}}\left(\|\bar{\bm{y}}^1-\bar{\bm{y}}^2\|_2 + \|\GSO_1 - \GSO_2\|_F\right) \\
        & \leq L_{\widehat{\Phi}}(L_y + 1)\|\GSO_1 - \GSO_2\|_F,
\end{align*}
where $L_{\widehat{\Phi}}$ is the Lipschitz constant in \ref{statement:hat-phi} when $B$ is taken as $\widetilde{B}$ from \ref{statement:ystar}. It follows that $\grd\ell(\GSO)$ is $L_\ell$-Lipschitz w.r.t. $\GSO\in{\cal S}$, where $L_\ell:=L_{\widehat{\Phi}}(L_y + 1)$.
\end{proof}

\subsection{Proof of Theorem \ref{thm:ttgd}}
Throughout, for the sake of brevity, we denote $\bar{\bm{y}}^k:=\bm{y}^\star(\GSO^k)$, $\mu_g:=1-\mody$, and $L_g:=1+\mody$.
We first characterize the progress of the lower-level and upper-level updates in Propositions \ref{prop:inequality-recursion1} and \ref{prop:inequality-recursion2}, respectively. 
\begin{Prop}\label{prop:inequality-recursion1}
    If the lower-level step size $\alpha$ satisfies $\alpha\leq\frac{ \mu_g }{(L_g)^2}$ and H\ref{assu:cal_S} to H\ref{assu:Y_grd} hold, we have 
    \begin{align}
            \|\bm{y}^{k+1}-\bar{\bm{y}}^k\|_2^2\leq &\left(1 - \frac{\alpha \mu_g }{2}\right)\|\bm{y}^k - \bar{\bm{y}}^{k-1}\|_2^2 \label{recursion-y}\\
            &+ L_y^2\left(\frac{2}{\alpha \mu_g} - 1\right)\|\GSO^{k-1} - \GSO^k\|_F^2,\notag
    \end{align}
    where $\mody$ and $L_y$ are given in H\ref{assu:Y} and \ref{statement:ystar}, respectively.
\end{Prop}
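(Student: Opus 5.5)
The argument has two steps. First, one relaxed fixed-point step is a contraction toward the \emph{current} solution $\bar{\bm y}^k$. Second, moving the reference point from $\bar{\bm y}^k$ back to $\bar{\bm y}^{k-1}$ costs a term controlled by the Lipschitz constant $L_y$ from \ref{statement:ystar}.

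\textbf{Contraction step.} Since $\bar{\bm y}^k={\bm y}^\star(\GSO^k)$ satisfies ${\sf Y}(\bar{\bm y}^k;\GSO^k)=\bm 0$, the update \eqref{eq:ttgd_y} can be written as ${\bm y}^{k+1}={\bm y}^k-\alpha{\sf Y}({\bm y}^k;\GSO^k)$ with ${\sf F}={\rm id}-{\sf Y}$. Expanding the square gives
\[
\|{\bm y}^{k+1}-\bar{\bm y}^k\|_2^2=\|{\bm y}^k-\bar{\bm y}^k\|_2^2-2\alpha\langle {\sf Y}({\bm y}^k;\GSO^k)-{\sf Y}(\bar{\bm y}^k;\GSO^k),{\bm y}^k-\bar{\bm y}^k\rangle+\alpha^2\|{\sf Y}({\bm y}^k;\GSO^k)-{\sf Y}(\bar{\bm y}^k;\GSO^k)\|_2^2 .
\]
Both ${\bm y}^k$ and $\bar{\bm y}^k$ lie in ${\cal Y}$: the first by the convex-combination argument together with H\ref{assu:self-F}, the second by \ref{statement:ystar0}. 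So H\ref{assu:Y} applies. The strong-monotonicity half bounds the inner product below by $\mu_g\|\cdot\|^2$, and the Lipschitz half bounds the last term by $L_g^2\|\cdot\|^2$. This gives the factor $1-2\alpha\mu_g+\alpha^2L_g^2$. Under $\alpha\le\mu_g/L_g^2$ this is at most $1-\alpha\mu_g$, so
\[
\|{\bm y}^{k+1}-\bar{\bm y}^k\|_2^2\le(1-\alpha\mu_g)\|{\bm y}^k-\bar{\bm y}^k\|_2^2 .
\]

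\textbf{Shifting the reference point.} Next I write ${\bm y}^k-\bar{\bm y}^k=({\bm y}^k-\bar{\bm y}^{k-1})+(\bar{\bm y}^{k-1}-\bar{\bm y}^k)$ and apply Young's inequality $\|a+b\|^2\le(1+\epsilon)\|a\|^2+(1+1/\epsilon)\|b\|^2$. By \ref{statement:ystar}, $\|\bar{\bm y}^{k-1}-\bar{\bm y}^k\|_2\le L_y\|\GSO^{k-1}-\GSO^k\|_F$.

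I would take $\epsilon=\alpha\mu_g/(2(1-\alpha\mu_g))$, which is legitimate because $\alpha\mu_g\le\mu_g^2/L_g^2<1$. Then
\[
(1-\alpha\mu_g)(1+\epsilon)=1-\tfrac{\alpha\mu_g}{2},\qquad (1-\alpha\mu_g)\Bigl(1+\tfrac{1}{\epsilon}\Bigr)=(1-\alpha\mu_g)+\tfrac{2(1-\alpha\mu_g)^2}{\alpha\mu_g}.
\]
The second coefficient equals $\frac{(1-\alpha\mu_g)(2-\alpha\mu_g)}{\alpha\mu_g}$, which is at most $\frac{2-\alpha\mu_g}{\alpha\mu_g}=\frac{2}{\alpha\mu_g}-1$. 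This reproduces the claimed coefficient $L_y^2\bigl(\frac{2}{\alpha\mu_g}-1\bigr)$.

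The only delicate point is choosing $\epsilon$ to hit exactly these constants. Everything else is a direct application of H\ref{assu:Y} and \ref{statement:ystar}, with care that all points involved lie in ${\cal Y}$ so that H\ref{assu:Y} is applicable.
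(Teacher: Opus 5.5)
Your proof is correct and follows the paper's argument essentially step for step. Like the paper, you expand the relaxed step using ${\sf Y}(\bar{\bm y}^k;\GSO^k)=\bm 0$, apply both halves of H\ref{assu:Y} to obtain the factor $1-\alpha\mu_g$, and then shift the reference point via Young's inequality with the same parameter $\alpha\mu_g/(2(1-\alpha\mu_g))$ together with the $L_y$-Lipschitzness from \ref{statement:ystar}; your explicit checks that $\alpha\mu_g<1$ and that ${\bm y}^k,\bar{\bm y}^k\in{\cal Y}$ are minor points the paper leaves implicit.
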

\begin{proof}

Since $\bm{y}^{k+1} = \bm{y}^{k} - \alpha {\sf Y}(\bm{y}^k;\GSO^k)$, we have
\begin{equation*}
    \begin{aligned}
        &\|\bm{y}^{k+1} - \bar{\bm{y}}^k\|_2^2 = \|\bm{y}^k - \bar{\bm{y}}^k - \alpha {\sf Y}(\bm{y}^k;\GSO^k)\|_2^2 \\
        & \leq (1 + (\alpha L_{g})^2)\|\bm{y}^k - \bar{\bm{y}}^k\|_2^2 \\
         &\quad~- 2\alpha \, \langle \bm{y}^k - \bar{\bm{y}}^k, {\sf Y}(\bm{y}^k;\GSO^k) - {\sf Y}(\bar{\bm{y}}^k;\GSO^k)\rangle \\
        & \leq (1 + (\alpha L_{g})^2 - 2\alpha\mu_g)\|\bm{y}^k - \bar{\bm{y}}^k\|_2^2.
    \end{aligned}
\end{equation*}
The last two inequalities are due to H\ref{assu:Y}.
With $\alpha \leq \frac{\mu_g}{(L_{g})^2}$, we can simplify the first term as
\begin{equation*}
    \begin{aligned}
       & (1 + (\alpha L_{g})^2 - 2\alpha\mu_g)\|\bm{y}^k - \bar{\bm{y}}^k\|_2^2 \leq (1 - \alpha\mu_g)\|\bm{y}^k - \bar{\bm{y}}^k\|_2^2 \\
       & \leq (1 - \alpha\mu_g)\left[ (1+z)\|\bm{y}^k - \bar{\bm{y}}^{k-1}\|_2^2 + \left(1+\frac{1}{z}\right)\|\bar{\bm{y}}^{k-1} - \bar{\bm{y}}^k\|_2^2\right],
    \end{aligned}
\end{equation*}
where $z>0$ is arbitrary. Using \ref{statement:ystar}, we bound the right-hand side as
\begin{equation*}
    \begin{aligned}
        &(1 - \alpha\mu_g) (1+z)\|\bm{y}^k - \bar{\bm{y}}^{k-1}\|_2^2 + L_y^2\left(1+\frac{1}{z}\right)\|\GSO^{k-1} - \GSO^k\|_F^2 \\
        & \leq \left(1 - \frac{ \alpha\mu_g }{2} \right)\|\bm{y}^k - \bar{\bm{y}}^{k-1}\|_2^2 + L_y^2\left(\frac{2}{\mu_g\alpha} - 1\right)\|\GSO^{k-1} - \GSO^k\|_F^2,
    \end{aligned}
\end{equation*}
where we set $z = \frac{\mu_g\alpha}{2(1-\mu_g\alpha)}$ in the last equality.
\end{proof} 

A direct corollary of the above proposition is that $\bm{y}^k$ is bounded, which can be proven by solving the recurrence \eqref{recursion-y} and using $\|\GSO^k - \GSO^{k-1}\|_F \leq D$. In particular, let
\beq 
B :=\widetilde{B} + \sqrt{\|\bm{y}^{1} - \bar{\bm{y}}^{0}\|_2^2  + \frac{2D^2L_y^2}{\alpha\mu_g}\left(\frac{2}{\alpha\mu_g}-1\right)},
\eeq 
where $D$ is the diameter of ${\cal S}$ and $\widetilde{B}$, $L_y$ are given in \ref{statement:ystar}. 
\begin{Corollary}\label{coro:boundness-yk}
    Under the setting of Proposition \ref{prop:inequality-recursion1}, we have
    \begin{equation*}
    \begin{aligned}
        \|\bm{y}^k\|_2 &\leq B,~\forall~k \geq 0.
    \end{aligned}
\end{equation*} 
\end{Corollary}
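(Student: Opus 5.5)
The plan is to unroll the recursion \eqref{recursion-y} from Proposition~\ref{prop:inequality-recursion1} and then use the triangle inequality together with the bound $\|\bar{\bm y}^k\|_2\le \widetilde B$ from \ref{statement:ystar}. Write $e_k:=\|\bm y^{k+1}-\bar{\bm y}^k\|_2^2$ and $\rho := 1-\alpha\mu_g/2\in(0,1)$. This is legitimate because $\alpha\le \mu_g/L_g^2\le 1/\mu_g$, so $\alpha\mu_g\le1$.

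\textbf{Step 1: bound the driving term.} Since $\GSO^{k-1},\GSO^k\in{\cal S}$, H\ref{assu:cal_S} gives $\|\GSO^{k-1}-\GSO^k\|_F\le D$. Then \eqref{recursion-y} becomes
\[
e_k\le \rho\, e_{k-1} + c_0,
\qquad c_0:=L_y^2D^2\Big(\tfrac{2}{\alpha\mu_g}-1\Big),
\]
for all $k\ge1$.

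\textbf{Step 2: unroll.} Induction yields
\[
e_k\le \rho^k e_0 + c_0\sum_{j=0}^{k-1}\rho^j \le e_0 + \frac{c_0}{1-\rho},
\]
and $c_0/(1-\rho) = \frac{2D^2L_y^2}{\alpha\mu_g}\big(\frac{2}{\alpha\mu_g}-1\big)$. The indexing of the initial term needs one check. The statement uses $\|\bm y^1-\bar{\bm y}^0\|_2^2$, which is exactly $e_0$, so the recursion is applied for $k\ge1$ starting from $e_0$. This matches the constant $B$.

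\textbf{Step 3: triangle inequality.} For $k\ge1$,
\[
\|\bm y^{k}\|_2\le \|\bar{\bm y}^{k-1}\|_2+\sqrt{e_{k-1}}\le \widetilde B+\sqrt{e_0+\tfrac{c_0}{1-\rho}} = B.
\]

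The main obstacle is the base case $k=0$, namely $\|\bm y^0\|_2\le B$, which the recursion does not cover. I would handle it by noting $\|\bm y^0\|_2\le \|\bar{\bm y}^0\|_2+\|\bm y^0-\bar{\bm y}^0\|_2$. Then, using $\bm y^1-\bar{\bm y}^0=(\bm y^0-\bar{\bm y}^0)-\alpha({\sf Y}(\bm y^0;\GSO^0)-{\sf Y}(\bar{\bm y}^0;\GSO^0))$ and the lower bound in H\ref{assu:Y}, I would try to relate $\|\bm y^0-\bar{\bm y}^0\|$ to $\|\bm y^1-\bar{\bm y}^0\|$. That relation is not obviously favourable, because the contraction only gives $\|\bm y^1-\bar{\bm y}^0\|\le\sqrt{1-\alpha\mu_g}\,\|\bm y^0-\bar{\bm y}^0\|$, which goes the wrong way. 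So in practice I would state the corollary for $k\ge1$, which is all that is used later since TTGD evaluates $\widehat\grd\Phi$ at $\bm y^{k+1}$. Alternatively, I would enlarge $B$ by $\max\{0,\|\bm y^0\|_2-B\}$ to cover $k=0$ trivially.

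Finally, the bound $\|\bm y^k\|_2\le B$ is what makes \ref{statement:hat-phi} applicable along the iterates with this $B$.
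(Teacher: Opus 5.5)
Your proof is correct and follows the paper's argument exactly: you unroll \eqref{recursion-y} using $\|\GSO^{k-1}-\GSO^k\|_F\le D$, sum the geometric series to get the constant $\frac{2D^2L_y^2}{\alpha\mu_g}\bigl(\frac{2}{\alpha\mu_g}-1\bigr)$, and add $\widetilde B$ via the triangle inequality. Your caveat about $k=0$ is also right, because the contraction only gives $\|\bm y^1-\bar{\bm y}^0\|_2\le\sqrt{1-\alpha\mu_g}\,\|\bm y^0-\bar{\bm y}^0\|_2$, so the stated $B$ can fail to bound $\|\bm y^0\|_2$ when the initial error is large; restricting to $k\ge1$ (which is all Proposition~\ref{prop:inequality-recursion2} uses) or replacing $B$ by $\max\{B,\|\bm y^0\|_2\}$ is the correct fix.
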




The above corollary enables us to control the progress of the upper-level objective function $\ell(\cdot)$:
\begin{Prop}\label{prop:inequality-recursion2}
Under the setting of Proposition \ref{prop:inequality-recursion1}, there exist $L_{\widehat{\Phi}}$ and $L_{\ell}$ such that for any $k$,
    \begin{equation}\label{eq:decrease_ell}
        \begin{aligned}
            \ell (\GSO^{k+1})- \ell (\GSO^{k}) \leq& -\left(\frac{1}{2\gamma}-\frac{L_{\ell}}{2}\right)\|\GSO^{k+1}-\GSO^{k}\|_F^2 \\
            &+ \frac{\gamma}{2}L_{\widehat{\Phi}}^{2}\|\bm{y}^{k+1}-\bar{\bm{y}}^k\|_2^2.
        \end{aligned}
    \end{equation}
\end{Prop}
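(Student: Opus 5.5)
We argue by the standard descent lemma for $L_\ell$-smooth functions combined with the projection optimality condition. By \ref{statement:grd-ell}, $\ell$ has $L_\ell$-Lipschitz gradient on the convex set ${\cal S}$ (H\ref{assu:cal_S}). The descent lemma then gives
\[
\ell(\GSO^{k+1}) \le \ell(\GSO^k) + \langle \grd\ell(\GSO^k), \GSO^{k+1}-\GSO^k\rangle + \frac{L_\ell}{2}\|\GSO^{k+1}-\GSO^k\|_F^2 .
\]
Set $\bm{g}^k := \widehat{\grd}\Phi(\GSO^k;\bm{y}^{k+1})$. Since $\GSO^{k+1} = {\sf Proj}_{\cal S}(\GSO^k - \gamma \bm{g}^k)$ and $\GSO^k\in{\cal S}$, the variational characterization of the projection yields $\langle \GSO^k - \gamma\bm{g}^k - \GSO^{k+1}, \GSO^k - \GSO^{k+1}\rangle \le 0$. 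Equivalently,
\[
\langle \bm{g}^k, \GSO^{k+1}-\GSO^k\rangle \le -\frac{1}{\gamma}\|\GSO^{k+1}-\GSO^k\|_F^2 .
\]

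Next, split $\grd\ell(\GSO^k) = \bm{g}^k + (\grd\ell(\GSO^k) - \bm{g}^k)$. The error term is handled by Young's inequality:
\[
\langle \grd\ell(\GSO^k)-\bm{g}^k, \GSO^{k+1}-\GSO^k\rangle \le \frac{\gamma}{2}\|\grd\ell(\GSO^k)-\bm{g}^k\|_F^2 + \frac{1}{2\gamma}\|\GSO^{k+1}-\GSO^k\|_F^2 .
\]
Now recall that $\grd\ell(\GSO^k) = \widehat{\grd}\Phi(\GSO^k;\bar{\bm{y}}^k)$. By Corollary~\ref{coro:boundness-yk}, $\|\bm{y}^{k+1}\|_2\le B$. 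By \ref{statement:ystar}, $\|\bar{\bm y}^k\|_2\le \widetilde B\le B$. Both $\bm{y}^{k+1}$ and $\bar{\bm{y}}^k$ lie in ${\cal Y}$, by H\ref{assu:self-F} and the convex-combination update. We may therefore apply \ref{statement:hat-phi} with this $B$. This gives $\|\grd\ell(\GSO^k)-\bm{g}^k\|_F \le L_{\widehat\Phi}\|\bm{y}^{k+1}-\bar{\bm y}^k\|_2$, since the $\GSO$-arguments coincide.

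Summing the three bounds gives
\[
\ell(\GSO^{k+1})-\ell(\GSO^k) \le \left(-\frac{1}{\gamma}+\frac{1}{2\gamma}+\frac{L_\ell}{2}\right)\|\GSO^{k+1}-\GSO^k\|_F^2 + \frac{\gamma}{2}L_{\widehat\Phi}^2\|\bm{y}^{k+1}-\bar{\bm y}^k\|_2^2 ,
\]
which is exactly \eqref{eq:decrease_ell}.

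The only delicate point is bookkeeping for the constant $L_{\widehat\Phi}$. It must be taken as the constant from \ref{statement:hat-phi} with $B$ as in Corollary~\ref{coro:boundness-yk}, and this $B$ must dominate $\widetilde B$. That holds by construction, so that both the iterate $\bm y^{k+1}$ and the exact solution $\bar{\bm y}^k$ lie in the region where the Lipschitz bound is valid. The constant $L_\ell$ is the one from \ref{statement:grd-ell}.
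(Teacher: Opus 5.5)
Your proposal is correct and matches the paper's proof step for step. Both use the descent lemma from \ref{statement:grd-ell}, the projection inequality $\langle \widehat{\grd}\Phi(\GSO^k;\bm{y}^{k+1}), \GSO^{k+1}-\GSO^k\rangle \le -\gamma^{-1}\|\GSO^{k+1}-\GSO^k\|_F^2$, Young's inequality on the gradient-error term, and \ref{statement:hat-phi} with $B$ taken from Corollary~\ref{coro:boundness-yk}; your extra checks that $B \geq \widetilde{B}$ and that $\bm{y}^{k+1}, \bar{\bm{y}}^k \in {\cal Y}$ make explicit what the paper uses implicitly.
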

\begin{proof}
\ref{statement:grd-ell} implies that
\begin{equation}\label{inequ_ingredient_lip_gradient_l}
\begin{aligned}
    \ell (\GSO^{k+1})- \ell (\GSO^{k})  \leq &\langle\nabla \ell(\GSO^{k}),\GSO^{k+1}-\GSO^{k}\rangle \\
    & +(L_{\ell}/2)\|\GSO^{k+1}-\GSO^{k}\|_F^2.
\end{aligned}
\end{equation}
Since $\|\bm{y}^k\|_2\leq B$ and $\|\bm{y}^\star(\GSO)\|_2\leq\widetilde{B}$, \ref{statement:hat-phi} implies that there exists an $L_{\widehat{\Phi}}$ such that for any $k$,
\begin{align*}
    & \|\grd\ell(\GSO^k) - \widehat{\grd}\Phi(\GSO^k;\bm{y}^{k+1})\|_F \\
    & = \|\widehat{\grd}\Phi(\GSO^k;\bar{\bm{y}}^{k}) - \widehat{\grd}\Phi(\GSO^k;\bm{y}^{k+1})\|_F
    \leq L_{\widehat{\Phi}}\|\bar{\bm{y}}^{k} -\bm{y}^{k+1}\|_2.
\end{align*}
Subsequently, we can bound
\begin{equation*}
\begin{aligned}
    & \langle\nabla \ell(\GSO^{k}),\GSO^{k+1}-\GSO^{k}\rangle \\
    & = \langle\nabla \ell (\GSO^{k}) - \widehat{\grd}\Phi (\GSO^{k};\bm{y}^{k+1} ) + \widehat{\grd}\Phi (\GSO^{k};\bm{y}^{k+1} ) , \GSO^{k+1}-\GSO^{k}\rangle \\
    & \leq \langle \nabla \ell(\GSO^{k})- \widehat{\grd}\Phi (\GSO^{k};\bm{y}^{k+1} ), \GSO^{k+1}-\GSO^{k}\rangle \\
       &\quad~- \frac{1}{\gamma}\|\GSO^{k+1} - \GSO^k\|_F^2 \\
    & \leq \frac{\gamma}{2}\|\nabla \ell (\GSO^{k})-\widehat{\grd}\Phi (\GSO^{k};\bm{y}^{k+1} )\|_F^2 - \frac{1}{2\gamma}\|\GSO^{k+1} - \GSO^k\|_F^2 \\
    & \leq \frac{\gamma}{2}L_{\widehat{\Phi}}^2\|\bar{\bm{y}}^k - \bm{y}^{k+1}\|_2^2 - \frac{1}{2\gamma}\|\GSO^{k+1} - \GSO^k\|_F^2 ,
\end{aligned}
\end{equation*}
where the first inequality is from the update rule of $\GSO$ and the last inequality is from \ref{statement:hat-phi}. Finally, combining the above result with \eqref{inequ_ingredient_lip_gradient_l} gives the desired inequality \eqref{eq:decrease_ell}.
\end{proof}

To complete the proof, we borrow the following lemma from \cite[Lemma 3.6]{hong2023two}.
 \begin{Lemma}\label{lemma:inequality-general}
         Consider nonnegative sequences $\{\Omega_k\}, \{\Gamma_k\}$, $\{\Theta_k\}$. Let $c_0, c_1, d_0, d_1 > 0$ be such that
    \beq \label{eq:lemma-inequality-general}
    \begin{aligned}
        &\Omega_{k+1} \leq \Omega_{k} - c_0\Theta_{k+1} + c_1\Gamma_{k+1}, \\
        &\Gamma_{k+1} \leq (1-d_0)\Gamma_k + d_1\Theta_k.
    \end{aligned}
    \eeq
    If $\frac{c_0}{c_1} >\frac{d_1}{d_0}$, then
    \begin{align*} \textstyle
    \frac{1}{K}\sum_{k = 1}^K\Theta_k = {\cal O}\left( {1} / {K}\right), ~~\frac{1}{K}\sum_{k = 1}^K\Gamma_k = {\cal O}\left( {1} / {K}\right).
    \end{align*}
 \end{Lemma}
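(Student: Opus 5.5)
We build a single Lyapunov-type function that mixes the two recursions. Define $W_k := \Omega_k + \tau \Gamma_k$ for a weight $\tau>0$ to be chosen. The first inequality in \eqref{eq:lemma-inequality-general} gives
\[
W_{k+1} \le \Omega_k - c_0\Theta_{k+1} + (c_1+\tau)\Gamma_{k+1}.
\]
Substituting the second inequality into the $(c_1+\tau)\Gamma_{k+1}$ term yields
\[
W_{k+1} \le \Omega_k + (c_1+\tau)(1-d_0)\Gamma_k - c_0\Theta_{k+1} + (c_1+\tau)d_1\Theta_k .
\]

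Next we choose $\tau$ so that the $\Gamma_k$ coefficient is strictly below $\tau$. Concretely, require $(c_1+\tau)(1-d_0) \le \tau - \epsilon$ for some $\epsilon>0$, which gives
\[
W_{k+1} \le W_k - \epsilon \Gamma_k - c_0\Theta_{k+1} + (c_1+\tau) d_1 \Theta_k .
\]
The requirement $(c_1+\tau)(1-d_0) \le \tau - \epsilon$ rearranges to $\tau d_0 \ge c_1(1-d_0)+\epsilon$. Simultaneously we need $(c_1+\tau)d_1 < c_0$, so that after telescoping the $\Theta$ terms leave a net negative contribution. With the minimal choice $\tau \approx c_1(1-d_0)/d_0$, we get $c_1+\tau \approx c_1/d_0$, and the second condition becomes $c_1 d_1/d_0 < c_0$. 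This is exactly the hypothesis $c_0/c_1 > d_1/d_0$. Because that inequality is strict, there is room to pick a small $\epsilon>0$ and $\tau = (c_1(1-d_0)+\epsilon)/d_0$ satisfying both conditions.

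This calibration is the one delicate step. We implicitly assume $d_0\in(0,1]$; for $d_0>1$ the second recursion forces $\Gamma_{k+1}\le d_1\Theta_k$ and the argument only simplifies.

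Now sum the inequality from $k=1$ to $K$. Telescoping gives
\[
\epsilon\sum_{k=1}^K \Gamma_k + \bigl(c_0-(c_1+\tau)d_1\bigr)\sum_{k=2}^{K+1}\Theta_k \le W_1 - W_{K+1} + (c_1+\tau)d_1\Theta_1 .
\]
To bound the right-hand side we use $W_{K+1}\ge \inf_k \Omega_k$. The sequences are nonnegative, so $W_{K+1}\ge 0$, and the right-hand side is a constant independent of $K$. Adding back the finite term $\Theta_1$, both $\sum_{k=1}^K\Theta_k$ and $\sum_{k=1}^K\Gamma_k$ are $\mathcal{O}(1)$. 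Dividing by $K$ gives the two $\mathcal{O}(1/K)$ rates.

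In the application the $\Omega_k$ will be $\ell(\GSO^k)$, which need not be nonnegative. It is, however, bounded below on the compact $\mathcal{S}$, so the same argument goes through after shifting $\Omega_k$ by a constant.
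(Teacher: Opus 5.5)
Your main argument is correct for $0<d_0\le 1$, and it is essentially the standard one. The paper gives no proof of its own; it imports the lemma from \cite[Lemma 3.6]{hong2023two}. The direct route is to sum the second recursion, which for $d_0\le 1$ gives $d_0\sum_{k=1}^K\Gamma_k\le(1-d_0)\Gamma_0+d_1\sum_{k=0}^{K-1}\Theta_k$, and then substitute into the telescoped first recursion:
\[
\Bigl(c_0-\tfrac{c_1d_1}{d_0}\Bigr)\sum_{k=1}^K\Theta_k\le\Omega_0+\tfrac{c_1}{d_0}\bigl((1-d_0)\Gamma_0+d_1\Theta_0\bigr).
\]
Your $W_k=\Omega_k+\tau\Gamma_k$ with $\tau\approx c_1(1-d_0)/d_0$ is the per-step form of the same weighted combination. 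The only cost is the slack $\epsilon$, which makes your bound on $\sum_k\Gamma_k$ scale like $1/\epsilon$. The summed second recursion instead bounds it directly by $\Gamma_0/d_0+(d_1/d_0)\sum_{k=0}^{K-1}\Theta_k$. Your closing remark is correct: $\Omega_k=\ell(\GSO^k)$ must be shifted by $\min_{\mathcal S}\ell$ before the lemma applies, a point the paper passes over.

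The one thing to fix is your aside on $d_0>1$. Replacing the second recursion by $\Gamma_{k+1}\le d_1\Theta_k$ reduces you to the case $d_0=1$, where your argument needs $c_0>c_1d_1$. That does not follow from $c_0/c_1>d_1/d_0$ when $d_0>1$. In fact the weakened system alone admits $\Theta_k\equiv 1$, $\Gamma_k\equiv d_1$, $\Omega_k\equiv 1$ whenever $c_0\le c_1d_1$, so no argument can recover the claim from it. Your general choice $\tau=(c_1(1-d_0)+\epsilon)/d_0$ can also turn negative there, which breaks $W_{K+1}\ge 0$.

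What you discard is exactly the negative term. Write the recursion as $\Gamma_{k+1}+(d_0-1)\Gamma_k\le d_1\Theta_k$, sum over $k=0,\dots,K-1$, and drop $\Gamma_K\ge 0$ and $(d_0-1)\Gamma_0\ge0$. This gives $d_0\sum_{k=1}^{K-1}\Gamma_k\le d_1\sum_{k=0}^{K-1}\Theta_k$. Combined with the first recursion summed up to $\Omega_{K-1}\ge0$, it yields $(c_0-c_1d_1/d_0)\sum_{k=1}^{K-1}\Theta_k\le\Omega_0+c_1d_1\Theta_0/d_0$ for every $K$. This case never arises in Theorem~\ref{thm:ttgd}, where $d_0=\alpha\mu_g/2<1$, so the gap does not affect the paper's use of the lemma.
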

 \noindent Consider the following substitutions 
    $\Theta^k = \|\GSO^k - \GSO^{k-1}\|_F^2$, $\Gamma^k = \|\bm{y}^k - y^\star(\GSO^{k-1})\|_2^2$, $\Omega^k = \ell(\GSO^k)$. From Propositions~\ref{prop:inequality-recursion1} and \ref{prop:inequality-recursion2}, we observe that by setting $c_0 = \frac{1}{2\gamma} - \frac{L_{\ell}}{2}$, the inequalities in \eqref{eq:lemma-inequality-general} hold with $c_1 = \frac{\gamma}{2}L_{\widehat{\Phi}}^2$, $d_0 = \frac{\alpha(1-\mody)}{2}$, $d_1 = L_y^2(\frac{2}{\alpha(1-\mody)} - 1)$.

By setting $\alpha = \frac{1-\mody}{(1+\mody)^2}$ and 
    \[
    \gamma \leq \min\{ {3} / ({4L_{\ell}}), \alpha ({1-\mody}) / ({4L_{\widehat{\Phi}}L_y}) \},
    \]
we observe that 
\begin{equation*}
\begin{aligned}
    \frac{c_0}{c_1} &= \frac{1 - \gamma L_{\ell}}{\gamma^2 L_{\widehat{\Phi}}^2} \geq \frac{1}{4\gamma^2L_{\widehat{\Phi}}^2} \geq \frac{4L_y^2}{\mu_g^2\alpha^2} > L_y^2\frac{4 - 2\mu_g\alpha}{\alpha^2\mu_g^2} = \frac{d_1}{d_0}.
\end{aligned}
\end{equation*}
Applying Lemma \ref{lemma:inequality-general} gives
\begin{align}
& \textstyle \frac{1}{K}\sum_{k = 1}^K\|\GSO^k - \GSO^{k-1}\|_F^2 = \mathcal{O}(\frac{1}{K}) ,\label{upperbond_deltax}  \\
& \textstyle \frac{1}{K}\sum_{k = 1}^K\|\bm{y}^k - \bar{\bm{y}}^{k-1}\|_2^2 = \mathcal{O}(\frac{1}{K}) . \label{upperbond_deltay}
\end{align}    
We conclude the proof of Theorem \ref{thm:ttgd} by observing that
    \begin{equation*}
        \begin{aligned}
    & \gamma \, {\sf G}_{\gamma} (\GSO^k) = \| \GSO^k - {\sf Proj}_{\cal S} ( \GSO^k - \gamma \grd \ell( \GSO^k ) ) \|_F \\
    & \leq \|  \GSO^k -  {\sf Proj}_{ \cal S } ( \GSO^k - \gamma \widehat{\grd} \Phi( \GSO^k ; {\bm y}^{k+1} )) \|_F\\
    & \quad~+ \|{\sf Proj}_{ \cal S } ( \GSO^k - \gamma \widehat{\grd} \Phi( \GSO^k ; {\bm y}^{k+1} )) - {\sf Proj}_{\cal S} ( \GSO^k - \gamma \grd \ell( \GSO^k ) ) \|_F \\
    & \leq \|( \GSO^k -  \GSO^{k+1})\|_F + \gamma\| \widehat{\grd} \Phi( \GSO^k ; {\bm y}^{k+1} ) - \grd \ell( \GSO^k )\|_F \\
    & \leq \|( \GSO^k -  \GSO^{k+1})\|_F + \gamma L_{\widehat{\Phi}}\|{\bm y}^{k+1} - \bar{\bm y}^{k}\|_2.
        \end{aligned}
    \end{equation*}
    It follows from \eqref{upperbond_deltax} and \eqref{upperbond_deltay} that
    \begin{equation*}
        \begin{aligned}
        & \textstyle \frac{1}{K}\sum_{k = 1}^K {\sf G}_\gamma ( \GSO^k ) = \mathcal{O}( {1} / {K}). \\
        \end{aligned}
    \end{equation*}

\section{Proof of Proposition~\ref{prop:conv_gllq}} \label{app:proof_conv_prop}
Throughout this subsection, we assume that H\ref{assu:lipsf_lq} holds for the LQ game setting. Note that for LQ games, we have
\begin{equation}\label{eq:NG_Y}
    \begin{aligned}
            {\sf Y}^{lq}(\bm{y};\GSO) = \bm{y} - \max\{\bm{0}, \GSO f(\bm{y}) + \bm{b}\}.
    \end{aligned}
\end{equation}
We define
\begin{equation}\label{eq:NG_tildeY}
    \begin{aligned}
    \tilde{{\sf Y}}^{lq}(\bm{y};\GSO) = \bm{y} - (\GSO f(\bm{y}) + \bm{b})
    \end{aligned}
\end{equation}
and show that $\tilde{{\sf Y}}^{lq}(\bm{y};\GSO) = {\sf Y}^{lq}(\bm{y};\GSO)$ for any $\bm{y}\in{\cal Y}$ and $\GSO\in{\cal S}_{\sf ng}$. Formally, we have
\begin{Lemma}\label{lemma:trunc_Y}
    For LQ games, 
    \begin{align*}
        \tilde{{\sf Y}}^{lq}(\bm{y};\GSO) 
        = {\sf Y}^{lq}(\bm{y};\GSO)
    \end{align*}
    for any $\GSO\in{\cal S}_{\sf ng}$ and $\bm{y}\in[0,\infty)^N$.
\end{Lemma}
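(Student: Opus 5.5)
The claim reduces to showing that the positive-part truncation in ${\sf Y}^{lq}$ is never active on the region of interest: for every $\GSO\in{\cal S}_{\sf ng}$ and every $\bm{y}\in[0,\infty)^N$ we will check that $\GSO f(\bm{y}) + \bm{b} \geq \bm{0}$ componentwise. Then $\max\{\bm{0}, \GSO f(\bm{y})+\bm{b}\} = \GSO f(\bm{y})+\bm{b}$, and comparing \eqref{eq:NG_Y} with \eqref{eq:NG_tildeY} gives $\tilde{\sf Y}^{lq}(\bm{y};\GSO)={\sf Y}^{lq}(\bm{y};\GSO)$ immediately.

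First we record where \eqref{eq:NG_Y} comes from. For fixed $\bm{y}_{-i}$, the payoff \eqref{eq:lq_game} is a strictly concave quadratic in $y_i$, namely $-y_i^2/2 + y_i(\sum_j S_{ij} f(y_j) + b_i)$. Its unconstrained maximizer is $\sum_j S_{ij}f(y_j)+b_i$, and projecting onto ${\cal Y}_i=[0,\infty)$ gives ${\sf T}_i(\bm{y};\GSO)=\max\{0,(\GSO f(\bm{y}))_i+b_i\}$. Here $f(\bm{y})$ means $f$ applied entrywise, and ${\sf Y}^{lq}=\bm{y}-{\sf T}$ as defined in Sec.~\ref{subsec:ng}.

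Next we verify nonnegativity. By H\ref{assu:lipsf_lq}, $f(0)=0$ and $f(y)>0$ for $y>0$, so $f(y_j)\geq 0$ for every $y_j\geq 0$. Since $\GSO\in{\cal S}_{\sf ng}$ has $\GSO\geq\bm{0}$, each entry $(\GSO f(\bm{y}))_i=\sum_j S_{ij}f(y_j)$ is nonnegative. Adding $b_i\geq b^\star>0$ gives $(\GSO f(\bm{y}))_i+b_i\geq b^\star>0$. Hence the max in \eqref{eq:NG_Y} equals its second argument and the identity follows.

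There is no substantive obstacle. The only point needing care is that the entrywise sign of $f$ on $[0,\infty)$ is exactly what H\ref{assu:lipsf_lq} supplies, and that the nonnegativity of $\GSO$ is part of the definition \eqref{eq:Sng}. The row-sum and zero-diagonal constraints in ${\cal S}_{\sf ng}$ are not used.
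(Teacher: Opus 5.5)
Your proposal is correct and follows essentially the same argument as the paper. Both show that the truncation is never active because $f\ge 0$ on $[0,\infty)$, $\GSO\ge\bm{0}$, and $\bm{b}>\bm{0}$, which give $\GSO f(\bm{y})+\bm{b}\ge\bm{0}$. You additionally derive the best-response formula and note the strict bound $b^\star>0$, which the paper leaves implicit.
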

\begin{proof}
    In the LQ game setting, note that $f(x)\geq 0$ for any $x\in[0,\infty)$ and $\bm{b}\geq \bm{0}$ as indicated by H\ref{assu:lipsf_lq}. Hence, we have $\GSO f(\bm{y}) + \bm{b}\geq \bm{0}$ for any $\bm{y}\in{\cal Y}$ and $\GSO\in{\cal S}_{\sf ng}$. This proves that $\tilde{{\sf Y}}^{lq}(\bm{y};\GSO) = {\sf Y}^{lq}(\bm{y};\GSO)$ over the concerned region.
\end{proof}

\begin{Remark}
    The proof above shows that ${\sf F}(\bm{y};\GSO):=\bm{y} - {\sf Y}(\bm{y};\GSO)$ is a self-map of $\bm{y}$ over ${\cal Y}$ for any $\bm{S}\in{\cal S}_{\sf ng}$ under H\ref{assu:lipsf_lq}. Hence, H\ref{assu:self-F} holds with this assumption for LQ games.
\end{Remark}
Next, we show that $\tilde{{\sf Y}}^{lq}$, together with $\Phi(\cdot)$ defined in \eqref{eq:glfp_ng}, satisfy H\ref{assu:Y} to H\ref{assu:Y_grd}.

\begin{Lemma}
    If $\Phi( \GSO;{\bm y} ) = {\rm Tr}( \GSO^\top {\bm D} ) + \beta \| \GSO \|_F^2 - \lambda {\bf 1}^\top {\bm y}$ as defined in \eqref{eq:glfp_ng}, then H\ref{assu:phi_grd} holds with ${\cal S} = {\cal S}_{\sf ng}$ and ${\cal Y}\subseteq[0,\infty)^N$.
\end{Lemma}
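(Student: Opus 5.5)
The statement asks us to verify H\ref{assu:phi_grd} for $\Phi(\GSO;\bm{y}) = {\rm Tr}(\GSO^\top\bm{D}) + \beta\|\GSO\|_F^2 - \lambda\bm{1}^\top\bm{y}$. Since $\Phi$ is a quadratic polynomial in $(\GSO,\bm{y})$, the plan is to compute its gradient in closed form and read off a Lipschitz constant directly. No appeal to the network-game structure or to the constraint sets is needed beyond the fact that they are subsets of the ambient space.

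\textbf{Step 1 (smoothness).} $\Phi$ is a sum of a linear function of $\GSO$, a convex quadratic in $\GSO$, and a linear function of $\bm{y}$. Hence it is a polynomial and has continuous partial derivatives of all orders on $\RR^{N\times N}\times\RR^N$, in particular on ${\cal S}_{\sf ng}\times{\cal Y}$.

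\textbf{Step 2 (gradient).} The partial gradients are
\[
\grd_{\GSO}\Phi(\GSO;\bm{y}) = \bm{D} + 2\beta\GSO, \qquad \grd_{\bm{y}}\Phi(\GSO;\bm{y}) = -\lambda\bm{1}.
\]

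\textbf{Step 3 (Lipschitz bound).} For any $(\GSO_1;\bm{y}_1)$ and $(\GSO_2;\bm{y}_2)$, the $\bm{y}$-components of the gradients cancel, because $\grd_{\bm{y}}\Phi$ is constant. The $\GSO$-components differ by $2\beta(\GSO_1-\GSO_2)$. Therefore
\[
\|\grd\Phi(\GSO_1;\bm{y}_1) - \grd\Phi(\GSO_2;\bm{y}_2)\|_F = 2\beta\|\GSO_1-\GSO_2\|_F \le 2\beta\,\|(\GSO_1;\bm{y}_1)-(\GSO_2;\bm{y}_2)\|_F,
\]
so H\ref{assu:phi_grd} holds with $L_\Phi = 2\beta$. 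This bound holds on the whole space, so it holds in particular for ${\cal S}={\cal S}_{\sf ng}$ and ${\cal Y}\subseteq[0,\infty)^N$.

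There is no real obstacle here; the only care needed is to note that the constant $\bm{y}$-gradient contributes nothing to the difference.
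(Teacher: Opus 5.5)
Your proof is correct and follows the paper's argument: both compute $\grd\Phi(\GSO;\bm{y}) = (\bm{D}+2\beta\GSO;-\lambda\bm{1})$ and note that the gradient difference is $(2\beta(\GSO_1-\GSO_2);\bm{0})$, which gives $L_\Phi = 2\beta$. Your explicit remark that $\Phi$ is a polynomial, and hence smooth, covers a point the paper leaves implicit.
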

\begin{proof}
    Note that $\grd\Phi(\GSO;\bm{y}) = (\bm{D} + 2\beta\GSO;-\lambda\bm{1})$. Hence, for any $\GSO_1, \GSO_2\in{\cal S}_{\sf ng}$ and $\bm{y}_1, \bm{y}_2\in{\cal Y}$,
    \begin{align*}
        &\|\grd\Phi(\GSO_1;\bm{y}_1) - \grd\Phi(\GSO_2;\bm{y}_2)\|_F = \|(2\beta(\GSO_1-\GSO_2);\bm{0})\|_F \\
        & = 2\beta\|\GSO_1-\GSO_2\|_F \leq 2\beta\|(\GSO_1;\bm{y}_1) - (\GSO_2;\bm{y}_2)\|_F.
    \end{align*}
    This proves the statement.
\end{proof}

\begin{Lemma}\label{lemma:assumYhold}
    $\tilde{{\sf Y}}^{lq}$ satisfies H\ref{assu:Y}.
\end{Lemma}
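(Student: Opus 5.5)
We need to show that $\tilde{\sf Y}^{lq}(\bm y;\GSO)=\bm y-(\GSO f(\bm y)+\bm b)$ satisfies H\ref{assu:Y} for some $\mody\in(0,1)$, uniformly over $\GSO\in{\cal S}_{\sf ng}$ and $\bm y_1,\bm y_2\in{\cal Y}=[0,\infty)^N$, where $f$ acts componentwise. The natural choice is $\mody := c L_{f,1}$, which lies in $(0,1)$ by H\ref{assu:lipsf_lq}. If $L_{f,1}=0$ the map is the identity shifted by a constant, and any small $\mody>0$ works. We write $\bm d:=\bm y_1-\bm y_2$ and $\bm e:=f(\bm y_1)-f(\bm y_2)$, so that $\tilde{\sf Y}^{lq}(\bm y_1;\GSO)-\tilde{\sf Y}^{lq}(\bm y_2;\GSO)=\bm d-\GSO\bm e$.

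The key estimate is $\|\GSO\bm e\|_2\le \mody\|\bm d\|_2$. Componentwise, the mean value theorem with $|f'|\le L_{f,1}$ on $(0,\infty)$ gives $|e_j|\le L_{f,1}|d_j|$, extended to the boundary point $0$ by continuity. Hence $\|\bm e\|_2\le L_{f,1}\|\bm d\|_2$, and it remains to bound $\|\GSO\|_2$. Since $\GSO\ge\bm 0$ and $\GSO\bm 1=c\bm 1$, we have $\|\GSO\|_\infty=c$. This row-sum bound is the main obstacle, because $\|\GSO\|_2$ can exceed $c$ when $\GSO$ is not symmetric: the column sums are not controlled. The standard bound $\|\GSO\|_2\le\sqrt{\|\GSO\|_1\|\GSO\|_\infty}$ therefore requires column sums, and ${\cal S}_{\sf ng}$ does not constrain them.

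I would handle this obstacle as follows. First, check whether the intended feasible set is symmetric, in which case $\|\GSO\|_2=\rho(\GSO)=c$ and we are done with $\mody=cL_{f,1}$. Otherwise, I would use the crude bound from the constraints. Every column sum is at most the total mass $Nc$, which gives $\|\GSO\|_2\le c\sqrt N$, and then set $\mody:=c\sqrt N L_{f,1}$; this requires the strengthened condition $L_{f,1}<1/(c\sqrt N)$. An alternative that avoids this is to run the whole argument of H\ref{assu:Y} in a weighted or $\infty$-norm. In the $\infty$-norm, $\|\GSO\bm e\|_\infty\le cL_{f,1}\|\bm d\|_\infty$ gives a clean contraction of ${\sf F}$. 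However, the Euclidean inner-product form of H\ref{assu:Y} is what the paper uses, so I would state the symmetric (or column-sum) caveat explicitly. In what follows, write $\mody$ for the chosen constant, so that $\|\GSO\bm e\|_2\le\mody\|\bm d\|_2$.

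With this estimate in hand the chain follows directly. For the lower bound, Cauchy--Schwarz gives
\[
\langle \bm d-\GSO\bm e,\bm d\rangle \ge \|\bm d\|_2^2-\|\GSO\bm e\|_2\|\bm d\|_2 \ge (1-\mody)\|\bm d\|_2^2 .
\]
The middle inequality of the chain is exactly Cauchy--Schwarz. For the upper bound, the triangle inequality gives
\[
\|\bm d-\GSO\bm e\|_2\|\bm d\|_2 \le (1+\mody)\|\bm d\|_2^2 .
\]
Together these establish H\ref{assu:Y} with constant $\mody<1$.
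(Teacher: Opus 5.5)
Your argument is the paper's argument. Both use Cauchy--Schwarz for the lower inequality in H\ref{assu:Y}, the triangle inequality for the upper one, and in both places the estimate $\|\bm{S}(f(\bm{y}_1)-f(\bm{y}_2))\|_2\le\|\bm{S}\|_2L_{f,1}\|\bm{y}_1-\bm{y}_2\|_2$. You depart from the paper only in how you bound $\|\bm{S}\|_2$, and there you are right and the paper is not. Its proof gets the constant $cL_{f,1}$ from ``$\|\bm{S}\|_2\le\|\bm{S}\|_\infty=c$ for any $\bm{S}\in{\cal S}_{\sf ng}$.'' That inequality is false: ${\cal S}_{\sf ng}$ fixes only row sums and, as the paper's own footnote stresses, contains non-symmetric matrices.

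Your caveat is not a weakness of the technique; the lemma as stated is false under H\ref{assu:lipsf_lq}. Take $N\ge 3$ and set $S_{12}=c$, $S_{i1}=c$ for $i\ge 2$, and all other entries $0$, so that $\bm{S}\in{\cal S}_{\sf ng}$. Take $f(x)=Lx$ with $L<1/c$, which satisfies H\ref{assu:lipsf_lq} with $L_{f,1}=L$ and, e.g., $\bm{b}=\bm{1}$. Let $\bm{y}_2=\bm{0}$ and $\bm{y}_1=\bm{v}:=\sqrt{N-1}\,\bm{e}_1+\sum_{i\ge 2}\bm{e}_i\in{\cal Y}$. Then
$\langle\tilde{\sf Y}^{lq}(\bm{y}_1;\bm{S})-\tilde{\sf Y}^{lq}(\bm{y}_2;\bm{S}),\bm{y}_1-\bm{y}_2\rangle=\|\bm{v}\|_2^2-L\,\bm{v}^\top\bm{S}\bm{v}=2(N-1)-LcN\sqrt{N-1}$,
which is negative once $L>2\sqrt{N-1}/(cN)$. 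Since $2\sqrt{N-1}<N$ for $N\ge 3$, such $L<1/c$ exist. The lower inequality of H\ref{assu:Y} then fails for every constant in $(0,1)$, so Proposition~\ref{prop:conv_gllq} also needs an extra hypothesis.

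Your repairs are the right ones, and with either of them your proof is complete.
\begin{itemize}
\item \textbf{Symmetry.} If $\bm{S}$ is symmetric, then $\|\bm{S}\|_2=\rho(\bm{S})=c$ and the paper's constant $cL_{f,1}$ works.
\item \textbf{Column sums.} Without symmetry, $\|\bm{S}\|_2^2\le(\max_j\sum_i S_{ij})(\max_i\sum_j S_{ij})$. Each column sum is at most $(N-1)c$, because every entry is bounded by its row sum and the diagonal vanishes. This gives $\|\bm{S}\|_2\le c\sqrt{N-1}$, slightly better than your $c\sqrt{N}$. The bound is attained by the matrix above, since $\|\bm{S}\bm{e}_1\|_2=c\sqrt{N-1}$, so $L_{f,1}<1/(c\sqrt{N-1})$ is the sharp condition for this operator-norm route.
\item \textbf{The $\infty$-norm route.} This makes ${\sf F}(\cdot;\bm{S})$ a contraction, but it does not verify H\ref{assu:Y}. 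Proposition~\ref{prop:inequality-recursion1} uses H\ref{assu:Y} through Euclidean inner products, so this option would mean reworking the proof of Theorem~\ref{thm:ttgd}, not repairing this lemma.
\end{itemize}
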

\begin{proof}
    For any $\GSO\in{\cal S}_{\sf ng}$ and $\bm{y}_1,\bm{y}_2\in{\cal Y}$, we note that
    \begin{align*}
        &\langle\tilde{{\sf Y}}^{lq}(\bm{y}_1;\GSO) - \tilde{{\sf Y}}^{lq}(\bm{y}_2;\GSO),\bm{y}_1 - \bm{y}_2\rangle \\
        & = \|\bm{y}_1 - \bm{y}_2\|_2^2 - \langle\bm{S}(f(\bm{y}_1) - f(\bm{y}_2)),\bm{y}_1 - \bm{y}_2\rangle \\
        & \geq  \|\bm{y}_1 - \bm{y}_2\|_2^2 - \|\GSO\|_2L_{f,1}\|\bm{y}_1 - \bm{y}_2\|_2^2\\
        & \geq (1-cL_{f,1})\|\bm{y}_1 - \bm{y}_2\|_2^2.
    \end{align*}
    The last inequality is due to $\|\GSO\|_2 \leq \|\GSO\|_{\infty} = c$ for any $\GSO\in{\cal S}_{\sf ng}$.
    For the second part of H\ref{assu:Y}, note that 
        \begin{align*}
        &\|\tilde{{\sf Y}}^{lq}(\bm{y}_1;\GSO) - \tilde{{\sf Y}}^{lq}(\bm{y}_2;\GSO)\|_2 \\
        & \leq \|\bm{y}_1 - \bm{y}_2\|_2 + \|\bm{S}(f(\bm{y}_1) - f(\bm{y}_2))\|_2 \\
        & \leq (1+cL_{f,1})\|\bm{y}_1 - \bm{y}_2\|_2.
    \end{align*}
    This concludes the proof.
\end{proof}

\begin{Lemma}
    $\tilde{{\sf Y}}^{lq}$ satisfies H\ref{assu:Y_grd}.
\end{Lemma}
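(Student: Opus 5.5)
We need to show that $\tilde{{\sf Y}}^{lq}(\bm{y};\GSO) = \bm{y} - \GSO f(\bm{y}) - \bm{b}$ satisfies H\ref{assu:Y_grd}. That is, for every $B>0$ the full Jacobian ${\sf J}\tilde{{\sf Y}}^{lq} = ({\sf J}_{\bm y}\tilde{{\sf Y}}^{lq}; {\sf J}_{\GSO}\tilde{{\sf Y}}^{lq})$ must be Lipschitz on $\{(\GSO;\bm{y}) : \GSO\in{\cal S}_{\sf ng},\ \bm{y}\in{\cal Y},\ \|\bm{y}\|_2\le B\}$. Smoothness of the map itself is immediate, since it is bilinear in $(\GSO, f(\bm{y}))$ and $f$ is smooth with bounded first and second derivatives by H\ref{assu:lipsf_lq}.

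We compute the two blocks explicitly, with $f$ applied entrywise. The $\bm{y}$-block is
\[
{\sf J}_{\bm y}\tilde{{\sf Y}}^{lq}(\bm{y};\GSO) = \bm{I} - \GSO\,{\sf Diag}(f'(\bm{y})).
\]
For the $\GSO$-block, view $\GSO \mapsto \GSO f(\bm{y})$ as a linear map $\RR^{N\times N}\to\RR^N$. The partial derivative of the $i$-th component with respect to $S_{ij}$ is $-f(y_j)$, so
\[
{\sf J}_{\GSO}\tilde{{\sf Y}}^{lq}(\bm{y};\GSO) = -\,\bm{I}\otimes f(\bm{y})^\top,
\]
and this block does not depend on $\GSO$. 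We then bound the difference of each block between two points $(\GSO_1;\bm{y}_1)$ and $(\GSO_2;\bm{y}_2)$, and finish with $\|(\bm{A};\bm{B})\|_F\le\|\bm{A}\|_F+\|\bm{B}\|_F$ together with $\|\GSO_1-\GSO_2\|_F+\|\bm{y}_1-\bm{y}_2\|_2\le\sqrt{2}\,\|(\GSO_1;\bm{y}_1)-(\GSO_2;\bm{y}_2)\|_F$.

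\emph{$\GSO$-block.} The difference is $\|\bm{I}\otimes(f(\bm{y}_1)-f(\bm{y}_2))^\top\|_F = \sqrt{N}\,\|f(\bm{y}_1)-f(\bm{y}_2)\|_2 \le \sqrt{N}L_{f,1}\|\bm{y}_1-\bm{y}_2\|_2$, by the mean value theorem entrywise with $|f'|\le L_{f,1}$. This piece is globally Lipschitz in $\bm{y}$ and does not need the bound $B$.

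\emph{$\bm{y}$-block.} Add and subtract $\GSO_2{\sf Diag}(f'(\bm{y}_1))$ to split the difference:
\[
\|\GSO_1{\sf Diag}(f'(\bm{y}_1)) - \GSO_2{\sf Diag}(f'(\bm{y}_2))\|_F \le \|\GSO_1-\GSO_2\|_F\,L_{f,1} + \|\GSO_2\|_F\,\|f'(\bm{y}_1)-f'(\bm{y}_2)\|_\infty.
\]
The first term uses $\|{\sf Diag}(f'(\bm{y}_1))\|_2\le L_{f,1}$. For the second term, compactness of ${\cal S}_{\sf ng}$ gives $\|\GSO_2\|_F\le\|\GSO_2\|_1 = Nc$, since the entries are nonnegative with row sums $c$. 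The bound $|f''|\le L_{f,2}$ gives $\|f'(\bm{y}_1)-f'(\bm{y}_2)\|_\infty\le L_{f,2}\|\bm{y}_1-\bm{y}_2\|_2$.

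Collecting terms, H\ref{assu:Y_grd} holds with
\[
L_{\sf Y} = \sqrt{2}\,\big(L_{f,1} + NcL_{f,2} + \sqrt{N}L_{f,1}\big).
\]
The constant is actually uniform in $B$, so the condition holds for every $B>0$.

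The only delicate point is that H\ref{assu:lipsf_lq} bounds $f'$ and $f''$ only on $(0,\infty)$, while ${\cal Y}=[0,\infty)^N$ includes boundary points. The mean value argument applies on segments between points of ${\cal Y}$. Interior points of a segment lie in $(0,\infty)$ coordinatewise, or the relevant coordinates are constant along it. At $0$ the bounds extend by continuity of $f'$. We expect this to be the main (mild) obstacle; the rest is bookkeeping.
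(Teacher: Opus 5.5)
Your proof is correct and follows the paper's argument essentially step by step: the same two Jacobian blocks, the same add-and-subtract split of $\GSO\,{\sf Diag}(f'(\bm{y}))$ using $L_{f,1}$ and $L_{f,2}$, the $\sqrt{N}L_{f,1}$ bound on the $\GSO$-block, and the final $\sqrt{2}$ factor. The one difference is how you bound the $\GSO$ factor. You use $\|\GSO_2\|_F\le Nc$, which is valid on ${\cal S}_{\sf ng}$, whereas the paper uses $\|\GSO_1\|_2\le\|\GSO_1\|_\infty=c$. That inequality can fail for nonsymmetric $\GSO$ with only row sums fixed (e.g., $S_{i1}=c$ for $i\neq 1$ and $S_{12}=c$ give $\|\GSO\|_2=c\sqrt{N-1}$), so your constant is the safer one, although either choice yields a finite $L_{\sf Y}$ for H\ref{assu:Y_grd}.
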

\begin{proof}
    Note that
    \begin{align}
        {\sf J}\tilde{{\sf Y}}^{lq} &= \left[\bm{I} - \GSO{\sf Diag}(f'(\bm{y})); f(\bm{y})\otimes\bm{I}_N\right].
    \end{align}
    Hence, for any $\GSO_1,\GSO_2\in{\cal S}_{\sf ng}$ and $\bm{y}_1,\bm{y}_2\in{\cal Y}$, we have
    \begin{align*}
        &\|{\sf J}\tilde{{\sf Y}}^{lq}(\bm{y}_1;\GSO_1) - {\sf J}\tilde{{\sf Y}}^{lq}(\bm{y}_2;\GSO_2)\|_F \\
        & \leq \|\GSO_1{\sf Diag}(f'(\bm{y}_1)) - \GSO_2{\sf Diag}(f'(\bm{y}_2))\|_F \\
        &\quad~+ \|(f(\bm{y}_1) - f(\bm{y}_2))\otimes\bm{I}_N\|_F \\
        & \leq \|\GSO_1\|_2\|{\sf Diag}(f'(\bm{y}_1))-{\sf Diag}(f'(\bm{y}_2))\|_F \\
        &\quad~+ \|{\sf Diag}(f'(\bm{y}_1))\|_2\|\GSO_1 - \GSO_2\|_F + \|(f(\bm{y}_1) - f(\bm{y}_2))\otimes\bm{I}_N\|_F\\
        & \leq cL_{f,2}\|\bm{y}_1 - \bm{y}_2\|_2 + L_{f,1}\|\GSO_1 - \GSO_2\|_F + \sqrt{N}L_{f,1}\|\bm{y}_1 - \bm{y}_2\|_2 \\
        & \leq \sqrt{2}(cL_{f,2} + \sqrt{N}L_{f,1})\|(\GSO_1;\bm{y}_1) - (\GSO_2;\bm{y}_2)\|_F.
    \end{align*}
\end{proof}

For the proof of Proposition \ref{prop:conv_glrt}, recall that
\[{\sf Y}^{rt}(\bm{y};\GSO) = \bm{y} - \min\{\bm{a}, g(\GSO\bm{y}) + \bm{b}\}.\]
Similarly, we define 
\[\tilde{{\sf Y}}^{rt}(\bm{y};\GSO) = \bm{y} - ( g(\GSO\bm{y}) + \bm{b}).\]
Following the proof for Proposition \ref{prop:conv_gllq}, we can show that ${\sf Y}^{rt} = \tilde{{\sf Y}}^{rt}$ over the concerned region and $\tilde{{\sf Y}}^{rt}$ satisfies H\ref{assu:Y} to H\ref{assu:Y_grd} when H\ref{assu:lipsg_rt} holds for RT games.

\end{document}